\theoremstyle{plain}
\newtheorem{theorem}{Theorem}
\newtheorem{lemma}[theorem]{Lemma}
\newtheorem{corollary}[theorem]{Corollary}
\newtheorem{proposition}[theorem]{Proposition}
\theoremstyle{remark}
\newtheorem{remark}[theorem]{Remark}
\newtheorem{definition}[theorem]{Definition}
\renewcommand\phi{\varphi}
\renewcommand{\epsilon}{\varepsilon}
\renewcommand{\ge}{\geqslant}
\renewcommand{\le}{\leqslant}
\renewcommand{\geq}{\geqslant}
\renewcommand{\leq}{\leqslant}
\renewcommand{\hat}{\widehat}
\renewcommand{\tilde}{\widetilde}
\title[On the energy cascade  of 3-wave kinetic equations]{On the energy cascade  of 3-wave kinetic equations: Beyond Kolmogorov-Zakharov solutions}
\author[A. Soffer]{Avy Soffer}
\address{Mathematics Department, Rutgers University, New Brunswick, NJ 08903 USA.} 
\email{soffer@math.rutgers.edu}
\author[M.-B. Tran]{Minh-Binh Tran}
\address{Department of Mathematics, Southern Methodist University, Dallas, TX 75275, USA} 
\email{minhbinht@mail.smu.edu}
\begin{document}

\begin{abstract} 
In weak turbulence theory, the Kolmogorov-Zakharov spectra is a class of time-independent solutions to the kinetic wave equations. In this paper, we construct a new class of time-dependent isotropic solutions to {the decaying turbulence problems (whose solutions are energy conserved), with general initial conditions}. These solutions exhibit the  interesting property that the energy is cascaded from small wavenumbers to large wavenumbers. We can prove that starting with a regular initial condition whose energy at the infinity wave number $|p|=\infty$  is $0$, as time evolves, the energy is gradually accumulated at $\{|p|=\infty\}$. Finally, all the energy of the system is concentrated at $\{|p|=\infty\}$ and the energy function becomes a Dirac function at infinity $E\delta_{\{|p|=\infty\}}$, where $E$ is the total energy. The existence of this class of solutions is, in some sense, {the first complete rigorous} mathematical proof based on the kinetic description for the  energy cascade phenomenon for waves with quadratic nonlinearities. We only represent in this paper the analysis of the statistical description of  acoustic waves (and equivalently capillary waves). However, our analysis works for other cases as well.  
\end{abstract}

%
%
\maketitle

\tableofcontents

\section{Introduction}


Over more than half a century, the theory of weak wave turbulence   has been intensively developed. In weakly nonlinear and dispersive wave models, the weak turbulence kinetic equations can be formally derived, via the statistical approach, to describe the dynamics of resonant wave interactions. The  ideas of deriving the kinetic equations to describe how energies are shared between weakly interacting waves go back to Peierls \cite{Peierls:1993:BRK,Peierls:1960:QTS} and the modern devevelopments have the origin in the works of Hasselman \cite{hasselmann1962non,hasselmann1974spectral}, Benney and Saffmann \cite{benney1966nonlinear}, Kadomtsev \cite{kadomtsev1965plasma}, Zakharov \cite{zakharov2012kolmogorov},  Benney and Newell \cite{benney1969random}, Lukkarinen and Spohn  \cite{lukkarinen2009not,LukkarinenSpohn:WNS:2011,Spohn:TPB:2006}.  We refer to the books \cite{Nazarenko:2011:WT,zakharov2012kolmogorov} for more discussions and references on the topic.

One of the most important wave turbulence  equations is the so-called 3-wave kinetic equation (cf. \cite{pushkarev1996turbulence,pushkarev2000turbulence,zakharov1968stability,zakharov1967weak,ZakharovNazarenko:DOT:2005})
\begin{equation}\label{WeakTurbulenceInitial}
\begin{aligned}
\partial_tf(t,p) \ =& \ \mathbb{Q}[f](t,p), \\\
f(0,p) \ =& \ f_0(p),
\end{aligned}
\end{equation}
in which  $f(t,p)$ is the nonnegative wave density  at  wavenumber $p\in \mathbb{R}^N$, $N \ge 2$; $f_0(p)$ is the initial condition.  The quantity $\mathbb{Q}[f]$ denotes the integral collision operator, describing pure resonant
3-wave interactions. In the  3-wave turbulence kinetic equation, the collision operator is of the form 
\begin{equation}\label{def-Qf}{Q}[f](p) \ = \ \iint_{\mathbb{R}^{2N}} \Big[ R_{p,p_1,p_2}[f] - R_{p_1,p,p_2}[f] - R_{p_2,p,p_1}[f] \Big] d^Np_1d^Np_2 \end{equation}
with $$\begin{aligned}
R_{p,p_1,p_2} [f]:=  |V_{p,p_1,p_2}|^2\delta(p-p_1-p_2)\delta(\omega -\omega_{1}-\omega_{2})(f_1f_2-ff_1-ff_2) 
\end{aligned}
$$
with the short-hand notation $f = f(t,p)$, $\omega = \omega(p)$ and $f_j = f(t,p_j),$ $\omega_j = \omega(p_j)$, for wavenumbers $p$, $p_j$, $j\in\{1,2\}$. The quantity $\omega(p)$ denotes the dispersion relation of the waves. The equation describes, under the assumption of weak nonlinearities, the spectral energy transferred on the resonant manifold, which is a set of wave vectors satisfying
\begin{equation}\label{cv} p = p_1 + p_2 , \qquad \omega =\omega_{1} + \omega_{2}.\end{equation}
The exact form of the collision kernel $V_{p,p_1,p_2}$  depends on the type of waves under consideration. The 3-wave  kinetic equation plays a very important role in the theory of weak turbulence, with a variety of applications for ocean waves, acoustic waves, gravity capillary waves, Bose-Einstein condensate and many others (see \cite{hasselmann1962non,hasselmann1974spectral,zakharov1965weak,zakharov1968stability,zakharov1967weak,KorotkevichDyachenkoZakharov:2016:NSO,zakharov1999statistical,
 connaughton2009numerical,zakharov1967weak,newell2011wave,balk1990physical,
 pushkarev1996turbulence,pushkarev2000turbulence,l1997statistical} and references therein).

One of the most important results of the weak turbulence theory (cf. \cite{KorotkevichDyachenkoZakharov:2016:NSO,zakharov1967weak,zakharov2012kolmogorov}) is  the existence of the so-called  Kolmogorov-Zakharov spectra, which is a class of {\it time-independent solutions} $f_\infty$ of equation \eqref{WeakTurbulenceInitial}:  $$f_\infty(p) \approx C|p|^{-\kappa}, \ \ \ \kappa>0.$$
 These solutions are the analogs of the  Kolmogorov energy spectrum 
$C|p|^{-\frac{5}{3}}$ of hydrodynamic turbulence.  Normally, thermodynamic equilibrium solutions can be easily derived  from  kinetic equations by inspection. However,  the Kolmogorov-Zakharov (KZ) spectra are much more subtle and only emerge after one has exploited scaling symmetries of the dispersion relations and the coupling
coefficients via what is now called the Zakharov transformation. The discovery of the KZ spectra has been so far a milestone and breakthrough achievement on the subject, and for decades thereafter, it has been the dominating subject of study in the wave turbulence theory. Works on this research line have been continued till now and the Kolmogorov-Zakharov spectra have been found in new applications, including   astrophysics \cite{galtier2000weak},  interior ocean \cite{lvov2010oceanic},  cosmology \cite{micha2004turbulent} and several other physical situations.

On the other hand, relatively little is known about the time-dependent solutions of \eqref{WeakTurbulenceInitial}. In the important works \cite{connaughton2009numerical,connaughton2010aggregation,connaughton2010dynamical}, several numerical experiments were performed, to  investigate the 3-wave equation. In these works, the following equivalent form of the 3-wave collision operator was introduced 
\begin{align}\label{EE1Colm}
\begin{split}
\mathbb{{Q}}[f](t,\omega) \ =& \ \int_0^\infty\int_0^\infty \big[R(\omega, \omega_1, \omega_2)-R(\omega_1,\omega, \omega_2)-R(\omega_2, \omega_1, \omega) \big]{d}\omega_1{d}\omega_2, \\\
& R(\omega, \omega_1, \omega_2):=  \delta (\omega-\omega_1-\omega_2)
\left[ U(\omega_1,\omega_2)f_1f_2-U(\omega,\omega_1)ff_1-U(\omega,\omega_2)ff_2\right]\,,
\end{split}
\end{align}
where $U$ satisfies $| U(\omega_1,\omega_2) |  \ = \ (\omega_1\omega_2)^{\gamma/2}.$

The solutions of \cite{connaughton2010aggregation,connaughton2010dynamical} are assumed to follow the so-called {\it dynamic scaling hypothesis}
\begin{equation}\label{Ansartz}
f(t,\omega)\approx s(t)^a F\left(\frac{\omega}{s(t)}\right)
\end{equation}
 in which $\approx$ denotes the scaling limit $s(t)\to\infty$ and $\omega\to \infty$ with $x=\omega/s(t)$ fixed. Let us compute the energy of this function
\begin{equation}\label{EnergyColm}
\begin{aligned}
\int_{0}^\infty \omega f(t,\omega)d\omega\ = & \ \int_0^\infty s(t)^a F\left(\frac{\omega}{s(t)}\right)\omega d\omega \ =  \ \int_0^\infty s(t)^{a+2} F\left(\frac{\omega}{s(t)}\right)\left(\frac{\omega}{s(t)}\right) d\left(\frac{\omega}{s(t)}\right)\\
\ = & \ s(t)^{a+2} \int_0^\infty x F\left(x\right) dx,
\end{aligned}
\end{equation}
which grows with the rate $s(t)^{a+2}$.
Substituting this ansatz into the equation \eqref{WeakTurbulenceInitial}-\eqref{EE1Colm}, we get the system
\begin{equation}
\begin{aligned}
& \dot{s}(t) \ =  \ s^\zeta, \mbox{ with } \zeta = \gamma + a + 2\\
& aF(x) \ + \ x\dot{F}(x) \ =  \ \bar{\mathbb{Q}}[F](x).
\end{aligned}
\end{equation}
In \cite{connaughton2010aggregation}, the solutions follow the so-called  {\it decaying turbulence}, in which the  energy is supposed to be the same for all time \begin{equation}\label{ColmEnergy1}\int_{0}^\infty \omega f(t,\omega)d\omega=const.\end{equation}
From \eqref{EnergyColm}, it is easily seen  that the only value of $a$ that gives the conservation of energy is $a=-2$. By assuming that $F(x)\backsim x^{-n}$, when $x\backsim 0$, the power can be determined $n=\gamma+1$. Since the degree of homogeneity $\gamma$ is considered in the interval $[0,1)$, the integral $\int_0^\infty xF(x) dx$ is well-defined. However, this integral becomes singular if $\gamma>1$.

In \cite{connaughton2010dynamical}, the degree of homogeneity $\gamma$ is considered in the interval $[0,2]$. Suppose that the solutions follow the so-called {\it forced turbulence}, whose   total energy is assumed to grow linearly in time \begin{equation}\label{ColmEnergy2}
\int_{0}^\infty \omega f(t,\omega)d\omega=Jt.
\end{equation}
 {Using \eqref{ColmEnergy2}}, we obtain $$\dot{s} = \frac{J}{(a+2)\int_0^\infty xF(x)dx}s^{-1-a}$$ and then $a=-\frac{\gamma+3}{2}$.  The challenge is that the integration $\int_0^\infty xF(x) dx$ with $F(x)\backsim x^{{-\frac{\gamma+3}{2}}}$ for small $x$  diverges in the finite capacity case $\gamma>1$ and converges only in the infinite capacity case $\gamma<1$. Many  strategies have then been introduced, mainly to approximate $a$ directly from the eigenvalue problem, to overcome the challenge. In those cases, the energy grows with the rate  $s(t)^{a+2}$.

From these  numerical experiments, the dependence on $\gamma$ of the behavior of the solutions  can be clearly seen. A deeper theoretical understanding of the numerical experiments is then required.

In this work, we consider a very high value of the degree of homogeneity $\gamma$ and the physical situation when the energy of the solutions is conserved in time (the system is not driven by injected  energy.) On this decaying turbulence system, we perform the first rigorous mathematical analysis of time-dependent spatially homogeneous and isotropic   solutions, to understand their long time behavior. To be more precise, we consider the case $\gamma=2$, which corresponds to acoustic waves, with
\begin{equation}\label{EE1}
|V_{p,p_1,p_2}|^2 \ = \ |p||p_1||p_2|, \ \ \ N=3  \mbox{ and } \omega(p)=|p| .\end{equation} 
{\it Note that capillary waves also have $\gamma=2$  and our analysis can be equivalently applied to this case.} We show that energy conserved solutions of 3-wave systems in this case are only local in time. In other words, if we consider a 3-wave turbulence system, whose kernel degree of homogeneity is high ($\gamma=2$), and look for a solution whose energy is a constant for all time, then this solution can exist only up to a finite time, after that, some {\it energy is lost to infinity}. At the first sight, this kind of behavior looks surprising and mysterious. One may ask the question: ``Where does the energy go?'' We will explain in the next section that this phenomenon is closely related to {\it the gelation phenomenon} in coagulation-fragmentation models (cf. \cite{escobedo2002gelation,filbet2004numerical,leyvraz1983existence,leyvraz1981singularities}), that corresponds the formation of a ``giant particle'' with ``infinite size'' $(\omega=\infty)$ in finite time. Since all of the particles considered by  coagulation-fragmentation models
have finite size $0\le \omega<\infty$, the occurrence of gelation results in {\it  a loss of mass  to infinity}, when the degree of homogeneity $\gamma$ of the kernel of the coagulation part $(\omega_1\omega_2)^{\gamma/2}$ is large  $\gamma>1$. 


In continuum mechanics, an energy cascade is the transfer  of energy from large scales to  small scales - direct energy cascade, or the transfer of energy from  small scales to  large scales -  inverse energy cascade. Since the energy cascade phenomenon is related to the evolution in time of the solutions, a natural question  is that:``Can we observe from some time-dependent isotropic solutions of   kinetic wave equations that  the energy  is transferred to large/small values of  wavenumbers as  time goes to infinity''? Our analysis  shows that,   {time-dependent spatially homogeneous and isotropic   solutions} to the 3-wave equations with a high degree of homogeneity, in the decaying turbulence case, indeed exhibit the energy cascade phenomenon.

Our work shows that in many classical examples of waves (acoustic waves, capillary waves), different from the KZ spectra, {time-dependent isotropic and energy conserved  solutions} to the 3-wave equations follow the energy cascade phenomenon and exhibit  the energy loss as time evolves. We only present in this paper the analysis of the statistical description of  acoustic waves  $\gamma=2$ (and equivalently, capillary waves). {\it However, our analysis works for other cases as long as the growth of the kernel   $\gamma$ is strong enough $(\gamma>1)$,  in consistent with the coagulation-fragmentation  phenomena. }

{In the PhD Thesis of Kierkels (cf. \cite{kierkels2015transfer,kierkels2016self}), the following equation was studied (cf. \cite{kierkels2015transfer}[Equation (1.9)])
\begin{equation}\label{KV}
\begin{aligned}
\partial_t\left(\int_{[0,\infty)}\varphi(\omega)G(t,\omega) d\omega\right) \ = &  \ \iint_{\omega_1\ge \omega_2\ge 0}\frac{G_1G_2}{\sqrt{\omega_1\omega_2}}[\varphi(\omega_2)+\varphi(\omega_1-\omega_2)-\varphi(0)-\varphi(\omega_1)]d\omega_1 d\omega_2\\
+ &\ \iint_{[0,\infty)^2}\frac{G_1G_2}{\sqrt{\omega_1\omega_2}}[\varphi(0)+\varphi(\omega_1+\omega_2)-\varphi(\omega_1)-\varphi(\omega_2)]d\omega_1 d\omega_2,
\end{aligned}
\end{equation}
for some suitable test function $\varphi$. Equation \eqref{KV} describes the time evolution of the fraction of mass $G$ that is not supported near the origin of the 4-wave kinetic equation derived from the cubic nonlinear Schr\"odinger equation. The mass and energy of the solution are conserved in time $\partial_t\int_{[0,\infty)}  G(t,\omega) d\omega=\partial_t\int_{[0,\infty)}\omega G(t,\omega) d\omega=0$. In this interesting  work, it is proved that any nontrivial initial datum yields the instantaneous onset of a condensate and self-similar solutions are also constructed. The strategy of the proof follows an adaptation of the monotonicity estimate introduced by Lu (cf. \cite{Lu:2005:TBE,Lu:2013:TBE,Lu:2014:TBE,Lu:2018:LTS,Lu:2016:LTC}) and developed by Escobedo-Velazquez (cf. \cite{EscobedoVelazquez:2015:FTB,EscobedoVelazquez:2015:OTT}) to the case of \eqref{KV}, in which the kernel is singular at $0$, leading to the instantaneous condensation. }

{Physically speaking,  \eqref{KV} is very different from the model under investigation. We show that there is no instantaneous condensation, but rather the ``reversed'' physical phenomenon: the  energy  cascades to infinity partially in finite time and totally in infinite time. Moreover, in the model under investigation, the mass is not conserved (cf. \cite{zakharov2012kolmogorov}); while in \eqref{KV}, both the mass and energy are conserved.}

{Similar with \cite{kierkels2015transfer}, our existence proof   also  relies on the classical kernel cutting off strategy for the  coagulation-fragmentation equation  (see, for instance, \cite{dubovski1996existence}) and  the classical homogeneous Boltzmann equation (see, for instance, \cite{Arkeryd:1972:OBE}). To prove the energy cascade, we develop new monotonicity estimates for 3-wave kinetic equations,  based  on previous works \cite{EscobedoVelazquez:2015:FTB,EscobedoVelazquez:2015:OTT,kierkels2015transfer,Lu:2005:TBE,Lu:2013:TBE,Lu:2014:TBE,Lu:2018:LTS,Lu:2016:LTC}. We also refer to \cite{EscobedoVelazquez:2015:OTT} for discussions on related topics for the 4-wave equations. In this interesting work, it is shown that the condensation occurs in finite time in the form of a delta function at the origin, and the energy, therefore, goes to infinity in infinite time.}


\section{Comparisons with related models, brief descriptions of the main results,  and outline of the proof}
\subsection{Description of the problem and comparison with coagulation-fragmentation models} 
As discussed in the introduction,  if we consider a spatially homogeneous and isotropic capillary or acoustic kinetic wave equation, and look for a solution whose energy is a constant for all time, then this solution can exist only up to a finite time, after this time, some {\it energy is lost to infinity}. In order to capture the the dynamics of the energy cascade phenomenon, we rewrite a new equation for the energy $g({|p|})=\omega_{|p|} f(|p|)$. This equation becomes a sophisticated coangluation-fragmentation type equation, whose fragmentation term is nonlinear rather than linear  (see Definition \ref{Def:WeakSolution} and (R1.1)). We construct a new weak formulation of the solutions in an extended space containing the delta function $\delta_{\{|p|=\infty\}}$, whose role is to  capture the energy loss at infinity.
\begin{remark}
Note that the measure at infinity $\{|p|=\infty\}$ can be defined by using the  ``extended half real line'' $[0,\infty] \ = \ [0,\infty)\cup\{ \infty\}.$ Suppose that $\mathfrak{B}([0,\infty))$ is the set of Borel sets of $[0,\infty)$, we can define the set of Borel sets of the extended real line by
$\mathfrak{B}([0,\infty]) \ =  \ \{B\subset [0,\infty]: B\cap [0,\infty) \in \mathfrak{B}([0,\infty))\}.$   
Therefore, one can define the space of finite nonnegative  measures  in $\mathfrak{B}([0,\infty])$, including $\delta_{\{|p|=\infty\}}$. The measure $d\mu(|p|)$ is the extension of the classical Lebesgue measure in $\mathfrak{B}([0,\infty])$. This  construction is classical and could be found in textbooks in analysis, for instance \cite{folland1995introduction}.
\end{remark}
With the new weak formulation, we introduce several test functions, which are continuous function on $[0,\infty]$,  in to order to capture the behavior of the solutions at  $|p|=\infty$. Different from the KZ spectra, using these new solutions, we can explicitly see that the conserved energy is accumulated at large values of $|p|$, and finally, the energy will be concentrated at  $|p|=\infty$: they become a Dirac function $E\delta_{\{|p|=\infty\}}$ at the limit $t\to\infty$, where $E$ is the total energy of the solutions. Indeed, with these solutions, one could see that all of the energy goes to infinity and  that the energy in every given interval $0\le |p| \le R$ for any positive number $R$ vanishes as $t$ tends to infinity. 
{\it We show that there is an ``energy cascade event'', similar to the gelation event discussed in (R1.1), in which a part of the wave energy is lost  in finite time. The lost energy can be proved to accumulate into a delta function at infinity. } This is, in some sense, a rigorous proof of the energy cascade phenomenon for acoustic and capillary wave systems.

We have the following comparisons.

\begin{itemize}

\item[\bf(R1.1)] 

The operator ${{Q}}[f]$ is very similar to the Smoluchowski collision operator,  which takes the following form (cf. \cite{von1916drei})
\begin{equation}\label{Smo}\mathbb{Q}_{\text Smo}[f](\omega)={\frac  {1}{2}}\int _{0}^{\omega}U_{\text Smo}(\omega-\omega_1,\omega_1)f(\omega-\omega_1)f(\omega_1)\,d\omega_1-\int _{0}^{\infty }U_{\text Smo}(\omega,\omega_1)f(\omega)f(\omega_1)\,d\omega_1,
\end{equation}
for some kernel $U_{\text Smo}$. The integrodifferential equation that takes the Smoluchowski operator as the collision operator - the Smoluchowski equation -  describes the time evolution of the number density of particles as they coagulate  to size $\omega$ at time $t$. There has been a large body of research on the mathematical analysis of  the Smoluchowski equation (see \cite{bonacini2019self,canizo2010rate,degond2017coagulation,escobedo2006dust,filbet2004numerical,melzak1957scalar,menon2006dynamical})  
.

To describe the dynamics of the cluster growth, in which the sizes of the clusters evolve with time as the clusters undergo not only coagulation but also fragmentation events, the coagulation-fragmentation models have been introduced with an additional linear fragmentation operator (see \cite{bertoin2006random,drake1972general,escobedo2003gelation,laurencot2015absence} and references therein) 
\begin{equation}\label{Frag}\begin{aligned}
\mathbb{Q}_{\text CoFr}[f](\omega) \ = & \ \mathbb{Q}_{\text Smo}[f] \ - \ \mathbb{Q}_{\text Fr}[f],\\
\mathbb{Q}_{\text Fr}[f](\omega)\ = & \ a(\omega)f(\omega)\ - \ \int _{0}^{\infty }a(\omega)b(\omega,\omega_1)f(\omega_1)\,d\omega_1.
\end{aligned}
\end{equation}
for some kernels $a,b$.

In pure coagulation dynamics, particles coagulate into clusters, which only get larger.  On the other hand, in wave turbulence phenomena, waves can either combine with other waves to form waves with larger wavenumbers (which corresponds to the delta function $\delta(\omega-\omega_1-\omega_2)$ in the formulation of the collision operator), or break into waves with smaller wavenumbers  (which corresponds to the delta functions $\delta(\omega_1-\omega-\omega_2)$ and $\delta(\omega_2-\omega-\omega_1)$ in the formulation of the collision operator).  As a result,  the terms containing $[f(\omega)-f(\omega_1)]f(\omega_1-\omega)$, with $\omega_1\ge\omega$ are missing in the presentation of the  Smoluchowski collision operator.  In  coagulation-fragmentation dynamics, the fragmentation is described by a linear operator. The additional linear operator $\mathbb{Q}_{\text Fr}[f]$ could be seen as a linear version of the nonlinear terms containing  $\delta(\omega_1-\omega-\omega_2)$ and $\delta(\omega_2-\omega-\omega_1)$ and describing the breaking  of  waves into waves with smaller wavenumbers. During coagulation events, the total mass of particles is expected to be a constant throughout the time evolution of solutions of coagulation-fragmentation models. This is, indeed, a fundamental difference between coagulation-fragmentation models and wave turbulence models. In 3-wave turbulence models, the  mass is not expected to be conserved but the energy is. {\it If we denote the new unknown $g=f\omega$, then the 3-wave turbulence kinetic equation in $g$  becomes a sophisticated coagulation-fragmentation equation (see Definition \ref{Def:WeakSolution}), in which the fragmentation term is nonlinear, rather than linear. In this equation, the mass of $g$ (which is also the energy of $f$) is expected to be conserved.}

 When the fragmentation is absent, the coagulation kernel $U_{\text Smo}(\omega_1,\omega_2)$ can be split into two classes $0\le U_{\text Smo}(\omega_1,\omega_2)\lesssim 2+\omega_1+\omega_2$ and $U_{\text Smo}(\omega_1,\omega_2)\gtrsim (\omega_1\omega_2)^{\gamma/2}$ for some $\gamma>1$. In the first case, the solutions are expected to be mass-conserved; and in the second case, the solutions have gelation in finite time, in which the conservation of mass breaks down.  The gelation corresponds to a runaway growth of the dynamics, that leads to the formation of a giant particle with infinite size $\omega=\infty$ in finite time. Since all of the particles considered by  coagulation-fragmentation models
have finite size $0\le \omega<\infty$, the occurrence of gelation results in a loss of mass.  The gelation phenomenon was conjectured in the 80s \cite{leyvraz1983existence,leyvraz1981singularities} and its mathematical proof for  such coagulation kernels and arbitrary initial data was shown twenty years later in \cite{escobedo2002gelation}. A natural question is that, {\it after  gelation occurs, where the mass goes}. This question has not yet been answered in \cite{escobedo2002gelation}. Since the fragmentation reduces the sizes of the clusters, it was proved that a strong fragmentation prevents the occurrence of gelation \cite{costa1995existence}  but does not affect  the coagulation events if it is  weak \cite{laurenccot2000class,vigil1989stability}. In general, due to the effect of gelation, the existence analysis of coagulation-fragmentation models follows two main streams. In the first stream, several works have been devoted to the construction of mass-conserving solutions to models whose kernels satisfy $0\le U_{\text Smo}(\omega_1,\omega_2)\lesssim 2+\omega_1+\omega_2$ with  various assumptions on $a$ and $b$ (see  \cite{ball1990discrete,banasiak2011global,dubovski1996existence,lamb2004existence,banasiak2012analytic,white1980global}  and the references therein). In the second stream, weak solutions which need not satisfy the mass conservation have been constructed (cf. \cite{eibeck2001stochastic,escobedo2005self,giri2012weak,norris1999smoluchowski,stewart1989global} and the references therein). The existence of mass-conserving solutions the models  with kernels satisfying $U_{\text Smo}(\omega_1,\omega_2)\gtrsim (\omega_1\omega_2)^{\gamma/2}$ with strong fragmentations has been done in \cite{costa1995existence,escobedo2003gelation}. An opposite phenomenon takes place when the 
fragmentation rate $a$ takes the special forms $a(x)=a_0 x^\xi$ for some $\xi<0$. In this case, the smaller the particles, the faster they divide, which leads to 
the  appearance of dust and again a loss of
mass takes place. This phenomenon is usually referred to as the shattering transition \cite{arlotti2004strictly,McGrady}.   A survey of earlier results can be found in \cite{laurenccot2004coalescence}.

\item[\bf(R1.2)]  As discussed above, there are  differences between 3-wave turbulence and coagulation - fragmentation models. First,  the terms describing the breaking of waves/particles are nonlinear in the 3-wave turbulence equation while they are linear in  coagulation-fragmentation models. Second, since the mass is not conserved in 3-wave turbulence phenomena, if we denote the new unknown $g=f\omega$, then the 3-wave turbulence kinetic equation in $g$  becomes a highly sophisticated coagulation-fragmentation equation. However, there is a deep connection between the two types of models. In coagulation-fragmentation models, it is shown in \cite{escobedo2002gelation} that if the growth of the coagulation kernels is sufficiently strong $ (\omega_1\omega_2)^{\gamma/2}$ with $\gamma>1$, then a giant particle of infinite size is formed in finite time.    Since in these models, $\omega$ denotes to the size of the particle,  the giant particle of infinite size corresponds to $\omega=\infty$. In our wave turbulence model, we will show later that the conservation of energy is broken at finite time.  The energy breaking in the energy cascade phenomenon is closely related to the gelation phenomenon. {In order to describe the dynamics of this phenomenon, we construct a new weak formulation of the energy solutions $g$ in an extended space containing the delta function $\delta_{\{\omega=\infty\}}$. With this weak formulation, several test functions are introduced,  in to order to capture the behavior of the solutions at  $\omega=\infty$.}  That allows us to show that all of the energy of the system will finally go to this delta function as time goes to infinity.  {In other words, we show that there is also a similar ``gelation'' event for waves, in which some of the energy of the waves accumulate into a delta function with infinite wave number $\omega=\infty$ at finite time; and as time evolves, all of the energy of the waves will be ``gelled'' into this delta function.}  { The  delta function at infinity $\omega=\infty$ corresponds to the giant  particle with infinite size in the coagulation-fragmentation case. In this  picture of the energy cascade  process, the energy distributions  $g=f\omega$ of the 3-wave kinetic equation develop a $\delta$-like concentration at infinity.  However, since large wavenumbers correspond to highly oscillatory waves in the physical space, that cannot be described   by the kinetic theory, the full picture of the energy cascade requires a modification of the kinetic theory to be fully understood.  A similar situation can also be found in the   Boltzmann-Nordheim kinetic theory for a dilute gas of bosons \cite{Nordheim:OTK:1928}. Similar to solutions of our 3-wave kinetic equation, the solution of the Boltzmann-Nordheim kinetic equation also exhibits a singularity in finite time (cf. \cite{EscobedoVelazquez:2015:FTB,Lu:2013:TBE}). Let $\mathcal{T}$ be the time left until the solution of the Boltzmann-Nordheim equation blows up  (also in the form of a delta function). The Boltzmann-Nordheim kinetic theory applies
when $\mathcal{T}$ is still
much larger than $\hbar/\omega_0(\mathcal{T})$, where $\omega_0(\mathcal{T})$ is the average
energy of particles taking part in this blow-up.  Therefore, in the dilute gas limit,
the Boltzmann-Nordheim kinetic equation remains physically sound in the
time interval $[t_{mfp},\mathcal{T}_{cr}]$, where  $\mathcal{T}_{cr} = \hbar/\omega_0(\mathcal{T}_{cr})$ and 
 $
t_{mfp}$ is the mean-free flight time for the core of the energy
spectrum. As a consequence, the blow-up solution of the Boltzmann-Nordheim kinetic equation is  useful in describing   the BEC dynamical process but  the full dynamics of the formation and evolution of  BECs requires a serious modification of the kinetic theory (see, for instance, \cite{anglin2002bose,josserand2001nonlinear,kocharovsky2015microscopic,pomeau2000thermodynamics,PomeauBrachetMetensRica,reichl2013transport} and the references therein.)}
\item[\bf(R1.3)] Prior to this work,  several rigorous mathematical results for solutions of the 3-wave turbulence kinetic equation in various forms have already been obtained. 
Long time dynamics, hydrodynamic approximations, uniform lower bounds,  existence and uniqueness  of strong solutions and  to the quantized 3-wave turbulence kinetic equation   were discussed  in  \cite{AlonsoGambaBinh,EscobedoBinh,JinBinh,ToanBinh,ReichlTran}. The mathematical properties of solutions to the 3-wave turbulence kinetic equation under the effect of  viscosity  were studied in 
 \cite{GambaSmithBinh} for stratified  flows in the ocean and in  \cite{nguyen2017quantum} for capillary waves. A connection between chemical reaction networks and the 3-wave turbulence kinetic equation as well as its quantized version was considered  in \cite{CraciunBinh,CraciunSmithBoldyrevBinh}. We refer to the book \cite{PomeauBinh} for related discussions on the quantum versions of the 3-wave equation. 
\end{itemize}

\subsection{Description of the main results - A proof of the energy cascade phenomenon for \eqref{WeakTurbulenceInitial}-\eqref{EE1}} 
 
 In the main theorem \ref{Theorem:Existence}, we study a class of isotropic solutions $f(t,p)=f(t,|p|)$ to \eqref{WeakTurbulenceInitial}-\eqref{EE1}, in which, the initial condition is radial 
$f_0(p)=f_0(|p|)$
and 
  regular at $|p|=\infty$
$$\int_{\{|p|=\infty\}}f_0(|p|)|p|^2\omega_{|p|} d\mu(|p|) \ = \ 0,$$
and at $p=0$
$$\int_{\{|p|=0\}}f_0(|p|)|p|^2 \omega_{|p|}d\mu(|p|) \ = \ 0.$$

In this case, the energy is lost from $[0,\infty)$ and cascaded to $\{|p|=\infty\}$ and this picture can be fully described as follows. Suppose the total energy is 

$$E \ = \ \int_{[0,\infty)}f_0(|p|)\omega_{|p|}|p|^2d\mu(|p|),$$
which is a conserved quantity
$$E \ = \ \int_{[0,\infty]}f(t,|p|)\omega_{|p|}|p|^2d\mu(|p|),\ \ \ \forall t>0.$$

Then:

{
\begin{itemize}
\item[(i)]  The energy of the solution on the interval $[0,\infty)$ is a non-increasing function of time
$$\frac{d}{dt}\int_{\{|p|=\infty\}}f(t,|p|)\omega_{|p|}|p|^2d\mu(|p|) \ \ge \ 0,$$
and for all time $T_1>0$, we can always find a larger time $T_2>T_1$ such that 
$$\int_{\{|p|=\infty\}}f(T_2,|p|)\omega_{|p|} |p|^2d\mu(p) \ > \ \int_{\{|p|=\infty\}}f(T_1,|p|)\omega_{|p|}|p|^2d\mu(|p|).$$
This ensures that the energy on the interval $[0,\infty)$ keeps decreasing  for all time $t>0$. In other words, for all time $T_1>0$, we can always find a larger time $T_2>T_1$ such that 
$$\int_{[0,\infty)}f(T_2,|p|)\omega_{|p|} |p|^2d\mu(p) \ < \ \int_{[0,\infty)}f(T_1,|p|)\omega_{|p|}|p|^2d\mu(|p|).$$
It is believed  in the coagulation-fragmentation  literature that,  this kind of energy/mass loss on $[0,\infty)$ is not taken into account in the models. However, our analysis shows that the energy/mass loss is already embedded in those equations once the definition of the solutions is extended.

\item[(ii)] Moreover, the energy is transferred away from the origin as follows
$$\int_{\{0\}}f(t,|p|)\omega_p|p|^2d\mu(|p|) =0$$ for all $t\in[0,\infty)$. In addition, for all $\epsilon\in(0,1)$, there exists $R_\epsilon>0$ such that
$$\int_{\left[0, R_\epsilon \right)}f(t,|p|)\omega_{|p|}|p|^2d\mu(|p|) \ \le \ \epsilon E\text{ for all }t\in[0,\infty).$$
This inequality essentially means that the energy is cascaded away from the origin.

\item[(iii)]  The energy cascade has an explicit rate
$$\int_{\{|p|=\infty\}}f(t,|p|) \omega_{|p|}|p|^2d\mu(|p|)\ \ge \ \mathfrak{C}_1 \ - \  \frac{\mathfrak{C}_2}{\sqrt{t}},$$
where $\mathfrak{C}_1$ and $\mathfrak{C}_2$ are explicit constants. 

As a consequence, there is an explicit threshold time $T_4>0$ such that
$$\int_{\{|p|=\infty\}}f(t,|p|) \omega_{|p|}|p|^2d\mu(|p|) \ > \ 0,$$
for all time $t>T_4$. 

In other words, even when there is no energy at $\{|p|=\infty\}$ initially, after an explicit short time, some energy is lost on the interval $[0,\infty)$ and starts to accumulate at $\{|p|=\infty\}$. Therefore, one can only expect to have  local strong solutions to the equation.

\item[(iv)]  However, we can always find a time $T_5>0$ and $\mathfrak{R}_*>0$ such that for all  $t>T_5$
$$\int_{[\mathfrak{R}_*,\infty]}f(t,|p|)\omega_{|p|}|p|^2d\mu(|p|) \ \ge \ \mathfrak{C}_1.$$

\item[(v)]  At the limit $t \to \infty$, the whole energy of the system will be cascaded/lost into one single point $\{|p|=\infty\}$:
$$f(t,|p|)\omega_{|p|}|p|^2 \ \longrightarrow \ E\delta_{\{|p|=\infty\}},$$
where the limit is in the weak sense, that will be explained later in the paper.

\end{itemize}}

\subsection{An outline of the proof of the main theorem} 
 As discussed above, since we are interested in the transfer of energy to infinity, we convert the equation in $f(t,|p|)$ into an energy equation of $g(t,|p|)=f(t,|p|)\omega_{|p|}$. This is, indeed, a sophisticated coagulation-fragmentation type equation, whose fragmentation term is nonlinear rather than linear  (see Definition \ref{Def:WeakSolution} and (R1.1)).  Section 5.1 is devoted to the proof of the existence of weak solutions of the equation of $g$, based on a regularization technique of the collision operator. The proof of the energy cascade phenomenon relies on a very special structure of the collision operator of $g$. We will present in Section 4 that for test functions of the form $\phi_r(p)=\left(1-\frac{r}{p}\right)_+$, with any positive parameter $r>0$, the collision operator becomes positive. The key property of these test functions is that $\lim_{p\to\infty}\phi_r(p)=1$. Using this class of test functions, we can see that $\partial_t\int_{[0,\infty]}g(t,|p|)\phi_r(p)d\mu(|p|)\ge 0$ for all $r>0$, that yields
$$\inf_{r>0}\partial_t\int_{[0,\infty]}g(t,|p|)\phi_r(p)d\mu(|p|)=\partial_t\int_{\{\infty\}}g(t,|p|)\phi_r(p)d\mu(|p|)\ge 0.$$ 
This is an indicator for the accumulation of the energy at infinity and the main idea of Section 5.2.  Based on this special structure of the collision operator, in Section 5.4, we compare the energy measured on an interval $[r,\infty]$ at time $t$  and the energy measured on the interval $[0,r]$ for all time $s$  from $0$ to $t$. This leads to the proof of the transfer of energy away from $\{0\}$ in Section 5.5. By comparing the energy from $[0,\infty)$ and the energy from $[0,\infty]$, using the observations from Section 5.2 and Section 5.4, we can prove in Section 5.6 that the some energy indeed is lost from the interval $[0,\infty)$ in finite time. Sections 5.7 and 5.8 study the lost rate and show that all of the energy will be finally accumulated at $\{\infty\}$ as time evolves.

\section{New weak formulation and statement of the main results}
\subsection{New weak formulation, energy conservation and H-Theorem on the extended  space}\label{Sec:Preliminaries}
\subsubsection{The new weak formulation on the extended space}
Since the energy is  finite, and we are interested in the cascade of this conserved quantity from low to high wavenumbers as time evolves, we denote the energy distribution by $g=f\omega_p=f|p|$,  and obtain a new form of  the collision operator
\begin{align}\label{EE2}
\begin{split}
\tilde{Q}\left[{g}\right]:\ =\ Q\left[\frac{g}{p}\right]\ =\ & \ \iint _{ \mathbb{R}^3\times\mathbb{R}^3} |V_{p,p_1,p_2}|^2\delta (   |p|-|p_1|-|p_2|    )  \delta (p-p_1-p_2)\times
\\
&\times\left[ \frac{g_1}{|p_1|}\frac{g_2}{|p_2|}-\left(\frac{g_1}{|p_1|}+\frac{g_2}{|p_2|}\right)\frac{g}{|p|}\right]{d}^3p_1{d}^3p_2 \\
& - \ 2\iint _{ \mathbb{R}^3\times\mathbb{R}^3} |V_{p,p_1,p_2}|^2\delta (   |p_1|-|p|-|p_2|    )  \delta (p_1-p-p_2)\times
\\
&\times\left[ \frac{g}{|p|}\frac{g_2}{|p_2|}-\left(\frac{g}{|p|}+\frac{g_2}{|p_2|}\right)\frac{g_1}{|p_1|}\right]{d}^3p_1{d}^3p_2 
\end{split}
\end{align}
The {\it energy evolution equation} for the energy distribution $g$ is now

\begin{equation}\label{WeakTurbulenceEnergy}
\begin{aligned}
\partial_tg(t,p) \ =& |p|\tilde{Q}\left[{g}\right] \ = \ |p|Q\left[\frac{g}{p}\right](t,p), \\\
g(0,p) \ =& \ g_0(p),
\end{aligned}
\end{equation}
where $g_0(p)=|p|f_0(p)$.

 In Definition \ref{Def:WeakSolution}, we will {\it define the weak solution $f$ to \eqref{WeakTurbulenceInitial} in terms of the weak solution $g$} to the energy evolution equation \eqref{WeakTurbulenceEnergy}. Before presenting the form of the new weak formulation, we recall some of the definitions of function  and extended measure spaces.

\begin{definition}[Continuous Function Spaces]\label{Def:ContinuousSpaces}~~

\begin{itemize}
\item Let $I$ be one of the intervals  $[a,\infty]$, $(a,\infty]$, $[a,\infty)$ or $(a,\infty)$ with $0\leq a<\infty$, we denote by $C(I)$ the set of functions that are continuous on $I$; by $C^k(I)$, with $k\in\mathbb{N}\cup\{0\}$, the set of functions in $C(I)$ for which the derivatives of order up to $k$ exist and are in $C(I)$; and by $C_c^k(I)$ the set of functions in $C^k(I)$ supported in a compact $K\subset I$. 
\item We  denote $C(I)=C^0(I)$ and $C_c(I)=C_c^0(I)$.
\item We define the space $\mathfrak{M}$ to be the function space spanned  the space
$$\Big\{ \phi(p) \ \ \Big| \ \ p\phi\in C_c([0,\infty)) \Big\}.$$
and the space $C([0,\infty])$, in which for each $\phi\in C([0,\infty])$ the limit  $\lim_{p\to\infty}\phi(p)$ exists.
\item
For $\psi\in C(I)$, we define  $$\|\psi\|_{L^\infty}=\|\psi\|_{C(I)}.$$
\end{itemize}
\end{definition}
\begin{definition}[Extended Measure]\label{Def:RadonMeasure}~~

\begin{itemize}
\item The ``extended half real line'' is the set $[0,\infty] \ = \ [0,\infty)\cup\{ \infty\}$
with the topology generated by the open sets of $\mathbb{R}$ and all interval $[0,a)$ and $(a,\infty]$. Then denoted by $\mathfrak{B}([0,\infty))$ the set of Borel sets of $[0,\infty)$, we can define the set of Borel sets of the extended real line by
$$\mathfrak{B}([0,\infty]) \ =  \ \{B\subset [0,\infty]: B\cap [0,\infty) \in \mathfrak{B}([0,\infty))\}.$$   
\item Note that $[0,\infty]$ is a Hausdorff space. By $\mathfrak{D}([0,\infty])$, we denote the space of finite nonnegative  measures  on $\mathfrak{B}([0,\infty])$. This classical construction  could be found in textbooks in analysis, for instance \cite{folland1995introduction}.
\item For any interval $I$ of the form $[a,\infty]$, $(a,\infty]$, $[a,\infty)$ or $(a,\infty)$ with $0\leq a<\infty$, we denote by $\mathfrak{D}(I)$  the space of finite nonnegative  measures $\mu\in\mathfrak{D}([0,\infty])$ such that $\mu\equiv0$ on $[0,\infty]\setminus I$.

\item In the notation of integrals, we  write $\varrho(x)d\mu( x)$ and  for any $\varrho\in\mathfrak{D}([0,\infty])$, $$\|\varrho\|_{L^1}=\int_{[0,\infty]}|\varrho(x)|d\mu( x).$$
\item The delta function $\delta_{\{x=\infty\}}$ satisfies
\begin{equation}\label{DeltaInfty1}
\int_{[0,\infty]}\delta_{\{x=\infty\}} \psi(x)d\mu(x) \ = \ \lim_{x\to\infty}\psi(x), 
\end{equation} 
for all continuous function $\psi\in C([0,\infty))$ such that the above limit exists.

If furthermore $\psi\in C([0,\infty])$, 
\begin{equation}\label{DeltaInfty2}
\int_{[0,\infty]}\delta_{\{x=\infty\}} \psi(x)d\mu(x) \ = \ \psi(\infty). 
\end{equation}

\end{itemize}

\end{definition}

\begin{definition}[Weak$^*$ Topologies]\label{Def:WeakStar}~~~
\begin{itemize}

\item We define the weak$^*$ topology on $\mathfrak{D}(I)$ to be the smallest topology such that the mapping $$\nu\in\mathfrak{D}(I)\mapsto\int_I\psi(x)\nu(x)d \mu(x)$$ is continuous for all test functions $$\psi\in C_0(I):=\{\psi\in C(\bar{I}):\psi\equiv0\text{ on }\bar{I}\setminus I\}.$$

\item It can be shown  that the space $C_0(I)$,  endowed with the supremum norm, is a separable Banach space. Therefore, by the Banach-Alaoglu theorem, the unit ball in $\mathfrak{D}(I)$ is compact with respect to the weak$^*$ topology and the weak$^*$~topology is metrizable. Hence, we can define:

A sequence $\{\nu_n\}$ in $\mathfrak{D}(I)$ is said to converge to $\nu$ with respect to the weak$^*$ topology
$$\nu_n \stackrel{\ast}{\rightharpoonup} \nu$$
 if and only if $$\int_{I}\psi(x)\nu_n(x)d \mu(x)\rightarrow\int_{I}\psi(x)\nu(x)d \mu(x)$$ for all $\psi\in C_0(I)$.

\item We endow $\mathfrak{D}(I)$ with the weak$^*$ topology.

\end{itemize}

\end{definition}

\begin{remark}
The reason that we need to extend the interval $[0,\infty)$ to $[0,\infty]$ is that weak solutions to \eqref{WeakTurbulenceInitial}-\eqref{EE1}  defined on $[0,\infty)$, whose energy is conserved on $[0,\infty)$, are only local in time. This can be seen from the proof of Proposition \ref{Propo:Mass0}.  To guarantee the existence of global in time solutions, one needs a weaker definition of the concept of weak solutions. In this case, the weak solutions need to be considered on the extended real line  $[0,\infty]$. 
\end{remark}
Next, we represent the weak formulation on the real line. Based on this formulation, the weak solutions on the extended real line will be introduced. 
\begin{proposition}[Weak Formulation  on the Real Line]\label{Lemma:WeakFormulation}
 { For any suitable test function $\phi(p)$, the following weak  formulation holds true for the collision operator \eqref{EE1} }
\begin{align}\label{Lemma:WeakFormulation:Eq1}
\begin{split}
\int_{\mathbb{R}^3}&  Q[f]|p|\phi d^3p\,=\  \int_{\mathbb{R}^3}Q\left[\frac{g(p)}{|p|}\right]|p|\phi d^3p\,= \int_{\mathbb{R}^3}\int_{\mathbb{R}^3}\int_{\mathbb{R}^3} {|p| |p_1| |p_2|}\delta(p-p_1-p_2)\\
&\times \delta(|p|-|p_1|-|p_2|)\left[ \frac{g(p_1)}{|p_1|}\frac{g(|p_2|)}{|p_2|}-\frac{g(p_1)}{|p_1|}\frac{g(p)}{|p|}-\frac{g(p_2)}{|p_2|}\frac{g(p)}{|p|} \right]\\
&\hspace{1cm}\times \Big[|p|\phi(p)-|p_1|\phi(p_1)-|p_2|\phi(p_2)\Big]d^3p\,d^3p_1\,d^3p_2\,\\
&\hspace{-.5cm}=2\pi\int_{\mathbb{R}^3}\int_{\mathbb{R}^+}\,{\big|p_1+|p_2|\widehat{p_1}\big| |p_1| |p_2|^{3}}\Big[ \frac{g(p_1)}{|p_1|}\frac{g(|p_2|\widehat{p_1})}{|p_2|} \\
& -\frac{g(p_1)}{|p_1|}\frac{g(p_1+|p_2|\widehat{p_1})}{|p_1+|p_2|\widehat{p_1}|}-\frac{g(|p_2|\widehat{p_1})}{||p_2|\widehat{p_1}|}\frac{g(p_1+|p_2|\widehat{p_1})}{|p_1+|p_2|\widehat{p_1}|}\Big]\\
&\times \Big[|p_1+|p_2|\widehat{p_1}|\phi(p_1+|p_2|\widehat{p_1})-|p_1|\phi(p_1)-|p_1|\phi(|p_2|\widehat{p_1})\Big]d^3p_1d|p_2|,
\end{split}
\end{align}
in which $\hat{p}=\frac{p}{|p|}$.

In addition, for radially symmetric functions $f(p):=f(|p|)$, $g(p):=g(|p|)$ and $\phi(p):=\phi(|p|)$, the following holds true
\begin{align}\label{WeakFormulation:radialL}
\begin{split}
& \  \int_{\mathbb{R}^3}\,Q\left[\frac{g(p)}{|p|}\right]|p|\phi  d^3p   =  \ 16\pi^2\int_{|p_1|>|p_2|\ge0}{|p_1||p_2|}g(|p_1|)g(|p_2|)\times \\
 & \  \  \  \  \  \  \  \  \  \  \  \  \times \Big[|p_1+p_2|^3\phi(|p_1|+|p_2|)-2(p_1^2+p_2^2)|p_1|\phi(|p_1|)\\
 & \  \  \  \  \  \  \  \  \  \  \  \   -4p_1p_2^2\phi(|p_2|)+(|p_1|-|p_2|)^3\phi(|p_1|-|p_2|)\Big]d|p_{1}|\,d|p_2|\,\\
 & \  \  \  \  \  \  \  \  \  \  \  \   + 8\pi^2\int_{|p_1|=|p_2|\ge0}{|p_1||p_2|}g(|p_1|)g(|p_2|)(|p_1|+|p_2|)^2\Big[(|p_1|+|p_2|)\phi(|p_1|+|p_2|)\\
 &\  \  \  \  \  \  \  \  \  \  \  \  -|p_1|\phi(|p_1|)-|p_2|\phi(|p_2|)\Big]d|p_{1}|\,d|p_2|\,.
\end{split}
\end{align}

In the rest of the paper, for the sake of simplicity, we omit the factor $8\pi^2$.
\end{proposition}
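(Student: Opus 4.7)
The plan is to establish the formula in three stages: symmetrization of the three pieces of $Q[f]$, elimination of the momentum and energy deltas, and radial reduction.

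\emph{Stage 1 (symmetrization).} I would start from the definition $Q[f] = \iint[R_{p,p_1,p_2}[f] - R_{p_1,p,p_2}[f] - R_{p_2,p,p_1}[f]]d^3p_1d^3p_2$, multiply by the test function $|p|\phi(p)$, and view it as an integral against $d^3p\,d^3p_1\,d^3p_2$. In the term carrying $R_{p_1,p,p_2}$, I perform the change of variables swapping $p \leftrightarrow p_1$; in the term carrying $R_{p_2,p,p_1}$, I swap $p \leftrightarrow p_2$. Because $|V_{p,p_1,p_2}|^2 = |p||p_1||p_2|$ is fully symmetric in the three arguments and $R_{p,p_1,p_2}[f]$ is symmetric under $p_1 \leftrightarrow p_2$, each transformed term becomes $|p_i|\phi(p_i)R_{p,p_1,p_2}[f]$ for $i=1,2$. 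Summing yields the symmetric weight $[|p|\phi(p)-|p_1|\phi(p_1)-|p_2|\phi(p_2)]$ multiplying a single copy of $R_{p,p_1,p_2}[f]$, giving the first equality of \eqref{Lemma:WeakFormulation:Eq1} after substituting $f = g/|p|$.

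\emph{Stage 2 (eliminating the deltas).} I would use $\delta^3(p - p_1 - p_2)$ to integrate out $p$, replacing $p$ by $p_1+p_2$. Then, at fixed $p_1$, I parameterize $p_2$ in spherical coordinates with polar axis $\hat{p}_1$: writing $|p_1+p_2|^2 = |p_1|^2 + |p_2|^2 + 2|p_1||p_2|\cos\theta$, the energy delta $\delta(|p_1+p_2| - |p_1|-|p_2|)$ has support only at $\cos\theta = 1$, i.e.\ $p_2 = |p_2|\hat{p}_1$. The Jacobian yields $\big|\partial|p_1+p_2|/\partial\cos\theta\big|^{-1}_{\cos\theta=1} = (|p_1|+|p_2|)/(|p_1||p_2|)$, and the trivial azimuthal integration produces a factor $2\pi$. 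Collecting all factors delivers the intermediate $2\pi\!\int_{\mathbb{R}^3}\!\int_{\mathbb{R}^+}$ representation of \eqref{Lemma:WeakFormulation:Eq1}.

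\emph{Stage 3 (radial reduction).} Assuming $f,g,\phi$ radial, the integrand depends only on $|p_1|$ and $|p_2|$, so the angular part of $p_1$ contributes $4\pi|p_1|^2$. Expanding the inner bracket produces three bilinear terms in $g$: a symmetric $g(|p_1|)g(|p_2|)$ piece, and two asymmetric pieces involving $g(|p_1|)g(|p_1|+|p_2|)$ and $g(|p_2|)g(|p_1|+|p_2|)$. On the symmetric piece I use invariance $|p_1|\leftrightarrow|p_2|$ to restrict to $|p_1|>|p_2|$ and multiply by $2$. On each asymmetric piece I change variables, setting either $(|p_1|, |p_1|+|p_2|)$ or $(|p_2|, |p_1|+|p_2|)$ as new radial variables (with unit Jacobian) and relabeling so the larger one becomes $|p_1|$; this maps the domain onto $\{|p_1|>|p_2|\}$ and converts $\phi(|p_1|+|p_2|)$ into $\phi(|p_1|-|p_2|)$. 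Collecting coefficients of $\phi(|p_1|+|p_2|)$, $|p_1|\phi(|p_1|)$, $|p_2|\phi(|p_2|)$, and $\phi(|p_1|-|p_2|)$ using the identities $(a+b)^2 + (a-b)^2 = 2(a^2+b^2)$ and $(a+b)^2 - (a-b)^2 = 4ab$ produces exactly the four-term bracket in \eqref{WeakFormulation:radialL}, with overall constant $16\pi^2$. The diagonal $|p_1|=|p_2|$ contribution is obtained by symmetrizing without doubling on that set, giving the separate $8\pi^2$ boundary term in the stated formula.

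\emph{Main obstacle.} The delicate part is Stage 2: correctly accounting for the Jacobian of the constrained angular integration against $\delta(|p_1+p_2|-|p_1|-|p_2|)$, whose support is a single point on the sphere, and propagating the resulting weights consistently into Stage 3 so that the asymmetric loss terms, after the two changes of variables, combine to produce the $(|p_1|-|p_2|)^3\phi(|p_1|-|p_2|)$ contribution with the correct sign and normalization. Everything else is algebraic bookkeeping.
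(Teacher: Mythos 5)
Your derivation is correct and is precisely the computation the paper leaves implicit (its proof is a one-line citation to \cite{AlonsoGambaBinh,nguyen2017quantum}): symmetrization of the three terms of $Q$, elimination of $\delta^3(p-p_1-p_2)$ and of the collinear energy delta with Jacobian $(|p_1|+|p_2|)/(|p_1||p_2|)$ and azimuthal factor $2\pi$, then radial reduction ($4\pi|p_1|^2$), the change of variables $|p_1|+|p_2|\mapsto$ new radial variable on the two loss terms, and the identities $(a+b)^2+(a-b)^2=2(a^2+b^2)$, $(a+b)^2-(a-b)^2=4ab$, which indeed reproduce the four-term bracket of \eqref{WeakFormulation:radialL} with the stated $16\pi^2$ and $8\pi^2$ constants. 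One harmless caveat: carefully collecting your Stage-2 factors gives the prefactor $|p_2|^2\big(|p_1|+|p_2|\big)^2$ rather than the printed $\big|p_1+|p_2|\widehat{p_1}\big|\,|p_1|\,|p_2|^3$ (and the last weight term should be $|p_2|\phi(|p_2|\widehat{p_1})$), i.e.\ the displayed intermediate line in \eqref{Lemma:WeakFormulation:Eq1} carries typos; your bookkeeping is the one consistent with the final radial formula that the rest of the paper uses, so your argument stands, and the only point you might add is a remark on the delta being supported at the endpoint $\cos\theta=1$ of the angular integration, which only affects the overall constant that the paper discards anyway.
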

%
%
\begin{proof} The proof follows the  arguments of \cite{AlonsoGambaBinh,nguyen2017quantum}.
\end{proof}

\begin{definition}[The functionals $\mathfrak{H}^1_\phi$ and $\mathfrak{H}^2_\phi$]\label{Def:L1L2}

For a function $\phi\in  \mathfrak{M}$,  and $x\ge y\ge 0$, we define
\begin{equation}\label{L1}
\mathfrak{H}^1_\phi(x,y) \ = \ |x+y|^3\phi(x+y)-2(x^2+y^2)x\phi(x)-4xy^2\phi(y)+|x-y|^3\phi(x-y),
\end{equation}
and
\begin{equation}\label{L2}
\mathfrak{H}^2_\phi(x,y) \ = \ |x+y|^2[(x+y)\phi(x+y)-x\phi(y)-y\phi(y)].\end{equation}
Then  formula \eqref{WeakFormulation:radialL} can be written as
\begin{align}\label{WeakFormulation:radialLNew}
\begin{split}
  \int_{\mathbb{R}^3}\,Q\left[\frac{g}{|p|}\right]|p|\phi d^3p
 =& \ 2\int_{|p_1|>|p_2|\ge0}{|p_1||p_2|}g(|p_1|)g(|p_2|)\mathfrak{H}^1_\phi(|p_1|,|p_2|){d}|p_{1}|\,{d}|p_2|\,\\
 & \ + \int_{|p_1|=|p_2|\ge0}{|p_1||p_2|}g(p_1)g(p_2)\mathfrak{H}^2_\phi(|p_1|,|p_2|) {d}|p_{1}|\,{d}|p_2|\,.
\end{split}
\end{align}
\end{definition}

\begin{definition}[Weak Solution in terms of Energy Distribution on the Extended Real Line] \label{Def:WeakSolution}

From now on, with an abuse of notations, we use $p,p_1,p_2$ for  real and positive numbers. 

\begin{itemize}
\item[(i)] Suppose that 
$$\int_{[0,\infty]}g_0(p)p^2 d\mu(p)<\infty.$$
A function $g(t,p)$, such that $g(t,p)|p|^2\in C([0,\infty):\mathfrak{D}([0,\infty]))$ for all $t\in[0,\infty)$ and for all $\phi\in C^1([0,\infty):C([0,\infty]))$, $g$ satisfies
\begin{equation}\label{Def:WeakSolution:E1}
\begin{split}
&\int_{[0,\infty]}\phi(t,p)g(t,p)|p|^2d \mu(p)-\int_{[0,\infty]}\phi(0,p)g_0(p)|p|^2d \mu(p)\\
&\indent\begin{split}=\int_0^t\bigg[&\int_{[0,\infty]}\phi_s(s,p)g(s,p)|p|^2d \mu(p)\\&\indent+2\int_{p_1>p_2\ge 0}{g(s,p_1)|p_1|g(s,p_2)|p_2|}\mathfrak{H}^1_{\phi(s,\cdot)}(p_1,p_2)d \mu(p_1)d \mu(p_2) + \\
& \indent+ \int_{p_1=p_2\ge 0}{g(s,p_1)|p_1|g(s,p_2)|p_2|}\mathfrak{H}^2_{\phi(s,\cdot)}(p_1,p_2)d \mu(p_1)d \mu(p_2)  \bigg]d s\end{split}
\end{split}
\end{equation}
will be called a {\bf weak solution} to \eqref{EE2}-\eqref{WeakTurbulenceEnergy}. Then 
$$
\begin{aligned}
f(t,p)\  = \  & \frac{g(t,p)}{|p|}, \mbox{ for } p>0, t\ge 0,\\
f(t,0)\  = \  & f_0(0), \mbox{ for } t\ge 0.
\end{aligned}$$ 
will be called  a {\bf weak solution} to \eqref{WeakTurbulenceInitial}-\eqref{EE1}.
\item[(ii)] Suppose that 
$$\int_{[0,\infty]}g_0(p)p^2 d\mu(p)<\infty,$$
and
$$\int_{[0,\infty]}g_0(p)p d\mu(p)<\infty.$$
A function $g(t,p)$, such that $g(t,p)|p|^2\in C([0,\infty):\mathfrak{D}([0,\infty]))$ for all $t\in[0,\infty)$ and for all $\phi\in C^1([0,\infty):\mathfrak{M})$, $g$ satisfies
\begin{equation}\label{Def:WeakSolution:E1}
\begin{split}
&\int_{[0,\infty]}\phi(t,p)g(t,p)|p|^2d \mu(p)-\int_{[0,\infty]}\phi(0,p)g_0(p)|p|^2d \mu(p)\\
&\indent\begin{split}=\int_0^t\bigg[&\int_{[0,\infty]}\phi_s(s,p)g(s,p)|p|^2d \mu(p)\\&\indent+2\int_{p_1>p_2\ge 0}{g(s,p_1)|p_1|g(s,p_2)|p_2|}\mathfrak{H}^1_{\phi(s,\cdot)}(p_1,p_2)d \mu(p_1)d \mu(p_2) + \\
& \indent+ \int_{p_1=p_2\ge 0}{g(s,p_1)|p_1|g(s,p_2)|p_2|}\mathfrak{H}^2_{\phi(s,\cdot)}(p_1,p_2)d \mu(p_1)d \mu(p_2)  \bigg]d s\end{split}
\end{split}
\end{equation}
will be called a {\bf weak solution} to \eqref{EE2}-\eqref{WeakTurbulenceEnergy}. Then 
$$
\begin{aligned}
f(t,p)\  = \  & \frac{g(t,p)}{|p|}, \mbox{ for } p>0, t\ge 0,\\
f(t,0)\  = \  & f_0(0), \mbox{ for } t\ge 0.
\end{aligned}$$ 
will be called  a {\bf weak solution} to \eqref{WeakTurbulenceInitial}-\eqref{EE1}.
\item[(iii)] The above two definitions differ in the spaces $\mathfrak{M}$ and $C([0,\infty])$.
\end{itemize}

\end{definition}
\subsubsection{Energy conservation and H-Theorem on the extended measure space}
In the proposition below, we show that there is a conservation of energy for the weak solutions  in the sense of Definition \ref{Def:WeakSolution}. This conservation of energy is on $[0,\infty]$. In other words, following Proposition \ref{Propo:Mass0}, there is a loss of energy from $[0,\infty)$ to one single point $\{\infty\}$ in the mathematical framework considered in our paper. 

From the proof of the main  Theorem \ref{Theorem:Existence}, we can see that weak solutions, defined in the classical sense on $[0,\infty)$, are only local in time. In other words, global in time solutions, whose energy is conserved inside the interval $ [0,\infty)$, in general, do not exist, even in the classical weak sense.
\begin{proposition}\label{Propo:HTheorem}
Let $f$ be a weak solution to \eqref{WeakTurbulenceInitial}-\eqref{EE1}  in the sense of Definition \ref{Def:WeakSolution}. Then the following conservation of energy holds true
\begin{equation}\label{Propo:HTheorem:1}
\int_{[0,\infty]}p^3f(t,p)d\mu(p) \ = \ \int_{[0,\infty]}p^3f_0(p)d\mu(p).
\end{equation}
Moreover, the H-theorem also formally holds on $[0,\infty]$
\begin{equation}\label{Propo:HTheorem:2}
\partial_t\int_{[0,\infty]}p^2\log[f(t,p)]d\mu(p) \ \le \ 0.
\end{equation}
\end{proposition}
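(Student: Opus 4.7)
The plan is to establish both assertions by specializing the weak formulation \eqref{Def:WeakSolution:E1} of Definition \ref{Def:WeakSolution} to suitable test functions and exploiting algebraic identities.

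\textbf{Energy conservation.} I would insert the constant test function $\phi(t,p)\equiv 1$, which is admissible because the constant function $1$ lies in $C([0,\infty])$, into \eqref{Def:WeakSolution:E1}. Since $\phi_s\equiv 0$, the statement reduces to showing that the two quadratic collision integrals on the right-hand side vanish, which in turn reduces to checking that $\mathfrak{H}^1_1\equiv 0$ and $\mathfrak{H}^2_1\equiv 0$. Direct algebraic expansion yields
\begin{equation*}
\mathfrak{H}^1_1(x,y) = (x+y)^3 + (x-y)^3 - 2(x^2+y^2)x - 4xy^2 = (2x^3+6xy^2) - (2x^3+2xy^2) - 4xy^2 = 0
\end{equation*}
for all $x\ge y\ge 0$, while $\mathfrak{H}^2_1(x,y) = (x+y)^2\bigl[(x+y)-x-y\bigr]=0$. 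Hence $\int_{[0,\infty]} p^2 g(t,p)\,d\mu(p)$ is constant in $t$, and since $g(t,p)=pf(t,p)$ (with $g(t,0)=0$) this is exactly \eqref{Propo:HTheorem:1}.

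\textbf{H-theorem (formal).} I would work with the strong form $\partial_t f = Q[f]$ and differentiate under the integral sign, using the spherical Jacobian for isotropic solutions to obtain
\begin{equation*}
\partial_t \int_{[0,\infty]} p^2 \log f(t,p)\,d\mu(p) \;=\; \int_{[0,\infty]} p^2\,\frac{Q[f]}{f}\,d\mu(p) \;=\; \frac{1}{4\pi}\int_{\mathbb{R}^3}\frac{Q[f](p)}{f(p)}\,d^3p.
\end{equation*}
Applying the weak formulation of Proposition \ref{Lemma:WeakFormulation} with a test function $\phi$ chosen so that $|p|\phi(p)=1/f(p)$ rewrites the right-hand side as
\begin{equation*}
\iiint |V_{p,p_1,p_2}|^2\,\delta(p-p_1-p_2)\,\delta(|p|-|p_1|-|p_2|)\bigl(f_1 f_2 - f f_1 - f f_2\bigr)\Bigl[\tfrac{1}{f}-\tfrac{1}{f_1}-\tfrac{1}{f_2}\Bigr] d^3p\,d^3p_1\,d^3p_2.
\end{equation*}
The decisive algebraic identity
\begin{equation*}
f_1 f_2 - f f_1 - f f_2 \;=\; f f_1 f_2\,\Bigl(\tfrac{1}{f}-\tfrac{1}{f_1}-\tfrac{1}{f_2}\Bigr)
\end{equation*}
turns the integrand into the sign-definite expression $|V|^2\,\delta\,\delta\,f f_1 f_2\bigl(\tfrac{1}{f}-\tfrac{1}{f_1}-\tfrac{1}{f_2}\bigr)^2$, from which the claimed monotonicity of $\int p^2\log f\,d\mu$ follows.

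\textbf{Main obstacle.} The energy conservation part is fully rigorous because $\phi\equiv 1$ is admissible in the class $C^1([0,\infty);C([0,\infty]))$ of Definition \ref{Def:WeakSolution}(i). The H-theorem is only formal (as the statement acknowledges): the would-be test function $1/(pf)$ does not lie in the admissible class, and the chain rule $\partial_t\log f = Q[f]/f$ requires $f>0$ and pointwise regularity that a weak solution need not possess. A fully rigorous proof would require mollifying $\log f$ and truncating where $f$ is small, then passing to the limit; this is the main technical hurdle and is not attempted here.
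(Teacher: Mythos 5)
Your treatment of the energy identity \eqref{Propo:HTheorem:1} is correct, and it is in substance the paper's own mechanism: the paper's proof of Proposition \ref{Propo:HTheorem} is only a citation to \cite{nguyen2017quantum,germain2017optimal}, but elsewhere (Proposition \ref{Propo:Conservation}, and Step 1 of Lemma \ref{Lemma:RegularizedEquationExistence}) the authors obtain conservation exactly as you do, by inserting $\phi\equiv1$, which is admissible both in $C([0,\infty])$ and in $\mathfrak{M}$, and noting that the collision contribution vanishes; your explicit check that $\mathfrak{H}^1_1\equiv0$ and $\mathfrak{H}^2_1\equiv0$ is the right computation and is consistent with the vanishing of $\mathfrak{H}^1_{A/p}+$(linear part) recorded in Proposition \ref{Propo:L4Boundedness} after undoing the $1/p$ weight. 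So on the first claim your route and the paper's coincide, and yours is actually more self-contained than the paper's one-line citation.

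The H-theorem part, however, has a genuine problem that you should not pass over: your own computation does not give the inequality stated in \eqref{Propo:HTheorem:2}. With $|p|\phi=1/f$ in the weak formulation of Proposition \ref{Lemma:WeakFormulation}, the identity $f_1f_2-ff_1-ff_2=ff_1f_2\bigl(\tfrac1f-\tfrac1{f_1}-\tfrac1{f_2}\bigr)$ turns the production term into $\iiint |V|^2\,\delta\,\delta\, ff_1f_2\bigl(\tfrac1f-\tfrac1{f_1}-\tfrac1{f_2}\bigr)^2\ge 0$, hence formally $\partial_t\int_{[0,\infty]}p^2\log f\,d\mu(p)\ \ge\ 0$, i.e.\ the entropy is nondecreasing (this is the standard 3-wave result, cf.\ \cite{zakharov2012kolmogorov}), whereas \eqref{Propo:HTheorem:2} asserts $\le 0$. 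Your closing sentence "from which the claimed monotonicity follows" is therefore not justified as written: a nonnegative production term yields the reverse inequality, so either the sign in the statement refers (up to a misprint) to $-\int p^2\log f$, or a different argument is needed; in either case you must reconcile the sign explicitly rather than assert the conclusion. Since the paper supplies no argument of its own for this proposition, this sign discrepancy is the one substantive gap in your proposal; the remaining caveats you raise (inadmissibility of $1/(pf)$ as a test function, positivity and regularity of $f$) are appropriate, as the statement itself only claims a formal H-theorem.
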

\begin{proof}
This can be proved using the  argument of \cite{nguyen2017quantum,germain2017optimal}.

\end{proof}
\subsection{Main results}

\begin{theorem}[Global Existence and Energy Cascade]\label{Theorem:Existence}
Given any $f_0p^3\in\mathfrak{D}([0,\infty])$, $f_0\ge 0$ satisfying
$\int_{[0,\infty]}p^3f_0(p)d \mu(p)<\infty,$
 there exists at least one weak solution $fp^3\in C([0,\infty):\mathfrak{D}([0,\infty]))$ in the sense of Definition \ref{Def:WeakSolution} (i), $f\ge0$
 to \eqref{WeakTurbulenceInitial}-\eqref{EE1} that satisfies $f(0,\cdot)=f_0$. Moreover,
\begin{equation}\label{Theorem:Existence:1}
\int_{[0,\infty]}p^3f(t,p)d \mu(p)=\int_{[0,\infty]}p^3f(0,p)d \mu(p)\text{ for all }t\in[0,\infty).
\end{equation}
If, in addition $\int_{[0,\infty]}p^2f_0(p)d\mu( p)<\infty,$
 there exists at least one weak solution $fp^3\in C([0,\infty):\mathfrak{D}([0,\infty]))$ in the sense of Definition \ref{Def:WeakSolution} (ii), $f\ge0$
 to \eqref{WeakTurbulenceInitial}-\eqref{EE1} that satisfies $f(0,\cdot)=f_0$. These solutions have conserved energy and bounded mass on $[0,\infty]$ for all time $t>0$,
\begin{equation}\label{Theorem:Existence:1a}
\int_{[0,\infty]}p^2f(t,p)d \mu(p)\le \int_{[0,\infty]}p^2f(0,p)d \mu(p)\text{ for all }t\in[0,\infty).
\end{equation}
\begin{equation}\label{Theorem:Existence:1b}
\int_{[0,\infty]}p^3f(t,p)d \mu(p)=\int_{[0,\infty]}p^3f(0,p)d \mu(p)\text{ for all }t\in[0,\infty).
\end{equation}
Moreover, if 
$$\int_{\{0\}}p^3f_0(p)d \mu(p)=0, \mbox{ and }
\int_{\{\infty\}}p^3f_0(p)d \mu(p)=0,$$
the followings hold true.
\begin{itemize}
\item[(i)]  Given any non-trivial weak solution $f$  to  with initial condition $f_0$ in the sense of Definition \ref{Def:WeakSolution}, $f\ge0$, then  the mapping $t\mapsto\int_{\{\infty\}}f(t,p)p^3d \mu(p)$ is nondecreasing.

\item[(ii)] Given any non-trivial weak solution $f$  to  with initial condition $f_0$ in the sense of Definition \ref{Def:WeakSolution},  $f\ge0$. For any ${t}_0\in[0,\infty)$, there exists $t_1>t_0$ such that
\begin{equation}\label{Theorem:Existence:3}
\int_{\{\infty\}}f(t_1,p)|p|^3d \mu(p)>\int_{\{\infty\}}f({t}_0,p)|p|^3d \mu(p),
\end{equation}
which means there exists $t_*>0$ such that $\int_{\{\infty\}}f(t_*,p)|p|^3d \mu(p)>0,$
and
$\int_{\{0\}}f(t,p)|p|^3d \mu(p)=0$ for all $t\in[0,\infty)$. Moreover, for all $\epsilon\in(0,1)$, there exists $R_\epsilon>0$ satisfying
$$\int_{\left[0, R_\epsilon \right)}f(t,p)p^3d \mu(p) \ \le \ \epsilon  \|f(0,p)|p|^3\|_{L^1}\text{ for all }t\in[0,\infty).$$
\item[(iii)]  Given any non-trivial weak solution $f$  with initial condition $f_0$ in the sense of Definition \ref{Def:WeakSolution},  $f\ge0$. There exist explicit constants $C_1$, $C_2$, $T^*>0$ depending on the initial condition $f_0$ such that   the following inequality holds
\begin{equation*}
\int_{\{\infty\}}f(t,p)|p|^3d \mu(p)\ >\ C_1 \ - \ \frac{C_2}{\sqrt{t}},
\end{equation*}
for all $t>T^*$. 

That leads to the existence of an explicit $T^{**}>0$ such that 
\begin{equation*}
\int_{\{\infty\}}f(t,p)|p|^3d \mu(p)\ >\ \frac{C_1}{2},
\end{equation*}
for all $t>T^{**}$. 

Moreover, there exist  $r>0$ and   an explicit $T_r>0$ depending on $r$ such that  for all $t>T_r$ 
\begin{equation*}
\int_{[r,\infty]}f(t,p)|p|^3d \mu(p)\ \ge \ C_1.
\end{equation*}
\item[(iv)]   All solutions $f$ with initial condition $f_0$ in the sense of Definition \ref{Def:WeakSolution},  $f\ge0$, with finite energy, converge weakly$^*$ in $\mathfrak{D}([0,\infty])$ to a Dirac measure at infinity as $t\rightarrow\infty$ i.e. $$f(t,p)|p|^3\stackrel{\ast}{\rightharpoonup} \|f(0,p)|p|^3\|_{L^1}\delta_{\{p=\infty\}}$$ as $t\rightarrow\infty$.
\item[(v)] Suppose that $f$ is a  weak solution   in the sense of Definition \ref{Def:WeakSolution}, satisfying $f(0,p)p^3\equiv E\delta_{\{p=\infty\}}$ for some $E\in[0,\infty)$. Then $f$ is a trivial solution in the sense $f(t,p)p^3\equiv E\delta_{\{p=\infty\}}$ for all $t\in[0,\infty)$.
\end{itemize}

\end{theorem}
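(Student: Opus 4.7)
I propose to prove Theorem~\ref{Theorem:Existence} in the order (existence) $\to$ (conservation) $\to$ (i) $\to$ (ii) $\to$ (iii) $\to$ (iv) $\to$ (v), leveraging the structural observations already laid out in Section~2.3.

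\textbf{Existence and conservation laws.} Following the coagulation--fragmentation tradition for kernels $(\omega_1\omega_2)^{\gamma/2}$ with $\gamma>1$, I would first cut off the collision operator by multiplying the kernel $|p||p_1||p_2|$ by $\mathbf{1}_{|p_1|,|p_2|\le n}$ and discard the nonlinear fragmentation contributions whose range exceeds $n$. The truncated equation is a locally Lipschitz evolution on the bounded subset of $\mathfrak{D}([0,\infty])$ with uniformly bounded $\int p^3 g_0\, d\mu(p)$, so ODE theory in the Banach space of signed measures (via Picard iteration or the Arkeryd-type scheme referenced by the authors) produces a unique nonnegative approximating solution $g_n$. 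Conservation of $\int p^3 f_n\, d\mu$ at the regularized level follows by testing with $\phi\equiv 1$ and verifying from \eqref{L1}--\eqref{L2} that $\mathfrak{H}^1_1(x,y)=\mathfrak{H}^1_1(x,y)=(x+y)^3+(x-y)^3-2(x^2+y^2)x-4xy^2=0$ and $\mathfrak{H}^2_1\equiv 0$. This gives tightness of $\{f_n p^3\}$ in $\mathfrak{D}([0,\infty])$; the extension to the one-point compactification $\{\infty\}$ is precisely what absorbs any mass that would otherwise escape. Passing to the weak$^*$ limit in~\eqref{Def:WeakSolution:E1} (justified by uniform bounds and continuity of $\phi\in C([0,\infty])$ or $\phi\in\mathfrak{M}$ in a neighborhood of $\infty$) yields a weak solution satisfying \eqref{Theorem:Existence:1}. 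For the second case, the test function $\phi(p)=1/p\in\mathfrak{M}$ gives a one-sided sign on $\mathfrak{H}^1_{\phi}$, $\mathfrak{H}^2_{\phi}$ that produces the mass bound \eqref{Theorem:Existence:1a}.

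\textbf{Monotonicity (i) and absorption at infinity (ii).} For the family $\phi_r(p)=(1-r/p)_+\in C([0,\infty])$ with $\phi_r(\infty)=1$, I would verify pointwise that $\mathfrak{H}^1_{\phi_r}(x,y)\ge 0$ and $\mathfrak{H}^2_{\phi_r}(x,y)\ge 0$ by splitting on the regions $\{x+y\le r\}$, $\{x\le r<x+y\}$, $\{r<x,y\}$; in each region a direct calculation (expanding $(x\pm y)^3$ and using $(1-r/s)s=s-r$ for $s>r$) reduces the expression to a manifestly nonnegative combination. This implies $t\mapsto \int g p^2 \phi_r\, d\mu$ is nondecreasing. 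Letting $r\to\infty$ and using monotone convergence together with the fact that $\phi_r\to\mathbf{1}_{\{\infty\}}$, I obtain (i). Assertion (ii) then follows by quantifying this monotonicity: for any fixed $t_0$ with $\int_{[0,\infty)}f(t_0,p)p^3\,d\mu>0$, the nonnegative integrand $g(t_0,p_1)p_1 g(t_0,p_2)p_2\,\mathfrak{H}^1_{\phi_r}(p_1,p_2)$ is strictly positive on a set of positive measure for suitably small $r$, giving a strictly positive lower bound on $\partial_t\int g p^2 \phi_r\, d\mu$ in an open neighborhood of $t_0$. The vanishing $\int_{\{0\}}f p^3\, d\mu\equiv 0$ is obtained by testing against an analogous family concentrated near $0$ and using $g_0(\{0\})=0$ together with $\mathfrak{H}^1_\psi(x,y)\le 0$ for appropriate $\psi$; the uniform control over $[0,R_\epsilon)$ is extracted by combining the monotone cascade with global energy conservation in~\eqref{Theorem:Existence:1b}.

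\textbf{Rate, convergence, and trivial case.} To obtain the explicit rate in (iii), I would insert specific $\phi_r$ and bound $\mathfrak{H}^1_{\phi_r}(p_1,p_2)\ge c\,(p_1+p_2)^3$ on a large-momentum subregion. Using Cauchy--Schwarz on $\int_{p_1,p_2\ge A} g(s,p_1)p_1 g(s,p_2)p_2\,d\mu$, one obtains a differential inequality of the form $\frac{d}{dt}\mathcal{E}_{\ge A}(t)\ge c A^{-\alpha}\bigl(E-\mathcal{E}_{<\infty}(t)-\mathcal{E}_{\{\infty\}}(t)\bigr)^2$, which after optimizing $A$ integrates to the $C_1-C_2/t^{1/2}$ bound. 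The threshold $T^{**}$ and the lower bound on $\int_{[r,\infty]}f p^3\, d\mu$ are then algebraic consequences. For (iv), the monotone quantity $\int_{\{\infty\}}f p^3\, d\mu$ has a limit $\ell\le E$; combining (iii) forces $\ell=E$, and energy conservation forces $f(t,\cdot)p^3\rvert_{[0,\infty)}\stackrel{\ast}{\rightharpoonup}0$, yielding the Dirac concentration. Finally (v) is a short direct check: if $g_0 p^2=E\delta_{\{\infty\}}$, then $g_0\equiv 0$ on $[0,\infty)$, both integrals in~\eqref{Def:WeakSolution:E1} vanish (the integrands are supported where $g(s,\cdot)$ is finite), and the right-hand side reduces to the $\phi_s$ term, giving $g(t,p)p^2\equiv E\delta_{\{\infty\}}$.

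\textbf{Main obstacle.} The existence and monotonicity parts are mechanical once the sign computations for $\mathfrak{H}^1_{\phi_r},\mathfrak{H}^2_{\phi_r}$ are in hand, and the case analysis, while tedious, is elementary. The genuinely delicate step is (iii): obtaining an \emph{explicit} $C_1-C_2/\sqrt{t}$ lower bound requires a coercive quadratic lower bound on the collision gain $\int g(s,p_1)g(s,p_2)p_1 p_2\,\mathfrak{H}^1_{\phi_r}\, d\mu$ that is valid uniformly in $s$ and does not leak energy to the already-captured mass at $\{\infty\}$. Getting the right exponent $1/2$ and tracking the explicit constants through the optimization over $r$ and~$A$ is where most of the real work lies, and this is where I would follow the monotonicity-estimate machinery of Lu and Escobedo--Velazquez most closely.
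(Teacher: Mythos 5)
Your outline matches the paper's architecture for the existence step, the conservation laws (the computation $\mathfrak{H}^1_1\equiv\mathfrak{H}^2_1\equiv 0$ and the sign of $\mathfrak{H}^1_{1/p}=-4p_1p_2$ are exactly right), the positivity of $\mathfrak{H}^1_{\phi_r},\mathfrak{H}^2_{\phi_r}$, and item (i). However, there is a genuine gap in your argument for (ii), and it propagates into (iii) and (iv). You deduce strict increase of $\int_{\{\infty\}}g\,p^2\,d\mu$ from the fact that, for each fixed $r$, the derivative $\partial_t\int\phi_r\,g\,p^2\,d\mu$ admits a strictly positive lower bound near $t_0$. But $\int_{\{\infty\}}g\,p^2\,d\mu$ is the \emph{infimum} over $r$ of the quantities $\int\phi_r\,g\,p^2\,d\mu$, and an infimum of strictly increasing functions need not be strictly increasing (the rates can degenerate, and here they do: $\mathfrak{H}^1_{\phi_r}(p_1,p_2)=0$ whenever $p_1+p_2\le r$, so the gain at scale $r$ vanishes unless the solution already carries energy at momenta comparable to $r$). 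This is precisely why the paper does not argue scale by scale: it first proves the quantitative, $r$-uniform lower bound $\int_{[r,\infty]}g(t,p)p^2\,d\mu\gtrsim rt$ for all small $r$ (Lemma \ref{Lemma:LowerBound}, via the key inequality of Lemma \ref{lemma:KeyLemma} summed over dyadic-type scales $(3/2)^nR_0$ with Cauchy--Schwarz), and then runs a contradiction argument: if the energy remained in bounded annuli $[\rho,\vartheta^N\rho]$ for all time, the quadratic gain would force $\int_{[R_0,\infty]}g(t)p^2\,d\mu\gtrsim \rho^2R_0t\to\infty$, violating conservation. Without this multiscale pigeonhole step, positivity of the measure at $\{\infty\}$ does not follow.

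For (iii), the differential inequality $\frac{d}{dt}\mathcal{E}_{\ge A}\ge cA^{-\alpha}(\cdots)^2$ you posit is not established and is doubtful as stated, since the gain at scale $A$ sees only the energy located near scale $A$, not the total unreleased energy; the paper instead proves a \emph{time-integrated} upper bound $\int_{t_1}^{t_2}\int_{[R,\infty)}g\,p^2\,d\mu\,ds\lesssim\sqrt{(t_2-t_1)\|g_0p^2\|_1/R}$ (Lemma \ref{Lemma:UpperBound}, again by dyadic summation plus H\"older) and subtracts it from the linear-in-$\tau$ lower bound $\int_\tau^{2\tau}\int_{[r,\infty]}g\,p^2\ge\tau\gamma$, using the monotonicity of (i) to convert the time average into a pointwise bound; this is where the exponent $1/2$ actually comes from. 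For (iv), your claim that (iii) forces the limit $\ell=\lim_t\int_{\{\infty\}}g\,p^2$ to equal $E$ is incorrect: (iii) only yields $\ell\ge C_1$, and $C_1$ is in general strictly smaller than $E$. The paper's Proposition \ref{Propo:Longtime} requires a separate argument: if $\ell<E$, a fixed amount of energy is trapped in a compact annulus $[R_1,R_2]$ uniformly in time, and the quadratic gain estimate again forces a linearly growing lower bound contradicting conservation. Item (v) and the existence step are essentially fine, though note that the paper mollifies the quadratic term with a convolution kernel $\omega_\lambda$ and uses Schauder rather than Picard, precisely because the diagonal contribution $\{p_1=p_2\}$ obstructs a Lipschitz estimate for atomic measures.
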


%

\begin{corollary}\label{Col}
If
$$\int_{[0,\infty]}p^3f_0(p)d \mu(p)<\infty,$$
and
$$\int_{[0,\infty]}p^2f_0(p)d \mu(p)<\infty,$$
then the weak solution found in Theorem \ref{Theorem:Existence} is also the weak solution in the classical sense i.e. $f$ satisfies 
\begin{equation}\label{Def:WeakSolution:E1}
\begin{aligned}
&\int_{[0,\infty]}\psi(t,p)f(t,p)|p|^2d \mu(p)-\int_{[0,\infty]}\psi(0,p)f_0(p)|p|^2d \mu(p)\\
= &\int_0^t\bigg[\int_{[0,\infty]}\psi_s(s,p)f(s,p)|p|^2d \mu(p)\\&\indent+2\int_{p_1>p_2\ge 0}{f(s,p_1)|p_1|^2f(s,p_2)|p_2|^2}\Big[|p_1+p_2|^2\psi(p_1+p_2)-2(p_1^2+p_2^2)\psi(p_1)\\
&-4p_1p_2\psi(p_2)+|p_1-p_2|^2\psi(p_1-p_2)\Big]d \mu(p_1)d \mu(p_2) + \\
&+ \int_{p_1=p_2\ge 0}{f(s,p_1)|p_1|^2f(s,p_2)|p_2|^2}|p_1+p_2|^2\big[\psi(p_1+p_2)-\psi(p_1)-\psi(p_2)\big]d \mu(p_1)d \mu(p_2)  \bigg]d s,
\end{aligned}
\end{equation}
for all $\psi\in C_c([0,\infty))$.
\end{corollary}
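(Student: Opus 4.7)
The strategy is to recover the classical weak formulation by plugging into Definition \ref{Def:WeakSolution}(ii) the test function $\phi(t,p):=\psi(t,p)/p$ constructed from $\psi\in C_c([0,\infty))$. The key observation is that $p\phi=\psi\in C_c([0,\infty))$, so $\phi(t,\cdot)\in\mathfrak{M}$ for each $t$, and $\phi\in C^1([0,\infty):\mathfrak{M})$ inherits the time-regularity of $\psi$. This is precisely why the corollary needs the additional second-moment assumption $\int p^2 f_0\,d\mu(p)<\infty$: Definition \ref{Def:WeakSolution}(i), available under only the energy hypothesis, uses $\phi\in C([0,\infty])$ and excludes the $1/p$ singularity at the origin that $\phi=\psi/p$ exhibits whenever $\psi(0)\neq 0$.

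The next step is a term-by-term identification in the weak formulation. The `integrated' and time-derivative pieces transform as
\begin{equation*}
\phi(s,p)g(s,p)|p|^2 \;=\; \psi(s,p)f(s,p)|p|^2, \qquad \phi_s(s,p)g(s,p)|p|^2\;=\;\psi_s(s,p)f(s,p)|p|^2,
\end{equation*}
using $g=f|p|$. For the collision functionals, direct substitution into \eqref{L1} and \eqref{L2} gives
\begin{align*}
\mathfrak{H}^1_\phi(p_1,p_2) &= (p_1+p_2)^2\psi(p_1+p_2)-2(p_1^2+p_2^2)\psi(p_1)-4p_1p_2\psi(p_2)+(p_1-p_2)^2\psi(p_1-p_2),\\
\mathfrak{H}^2_\phi(p_1,p_2) &= (p_1+p_2)^2\bigl[\psi(p_1+p_2)-\psi(p_1)-\psi(p_2)\bigr],
\end{align*}
where each factor $\phi(\cdot)=\psi(\cdot)/(\cdot)$ cancels one power of the monomial in front of it; in particular the \emph{a priori} singular term $|p_1-p_2|^3\phi(p_1-p_2)$ reduces to the manifestly regular $(p_1-p_2)^2\psi(p_1-p_2)$, and the $|p_1=p_2|$ diagonal term involving $\phi(0)$ collapses to an expression in $\psi(0)$. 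Multiplying by $g(s,p_1)|p_1|g(s,p_2)|p_2|=f(s,p_1)|p_1|^2\,f(s,p_2)|p_2|^2$ reproduces exactly the integrands appearing in the corollary.

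The remaining point to check is integrability. Since $\psi$ is compactly supported on $[0,R]$ for some $R$, the factors $\psi(p_1\pm p_2)$ and $\psi(p_j)$ localize each term to a region where either $p_1,p_2\le R$ or $p_1+p_2\le R$. On such a bounded region the bounds $\int p^2 f(s,p)\,d\mu(p)<\infty$ (from \eqref{Theorem:Existence:1a}) and $\int p^3 f(s,p)\,d\mu(p)<\infty$ (from \eqref{Theorem:Existence:1b}) ensure integrability of the quadratic expressions in $f(s,p_1)f(s,p_2)$ with the polynomial weights that arise. Applying Fubini and invoking the identity of Definition \ref{Def:WeakSolution}(ii) with $\phi=\psi/p$ then yields the equation stated in the corollary. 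I do not foresee a serious obstacle here: the only delicate point, namely admitting the singular test function $\psi/p$ at $p=0$, has been built into the hypotheses of the corollary through the second-moment assumption, and after that the argument is essentially algebraic.
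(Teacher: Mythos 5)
Your proposal is correct and takes essentially the same route as the paper, whose proof is exactly the observation that $\psi\in C_c([0,\infty))$ implies $\psi/p\in\mathfrak{M}$, so that substituting $\phi=\psi/p$ into Definition \ref{Def:WeakSolution}(ii) and simplifying $\mathfrak{H}^1_\phi$, $\mathfrak{H}^2_\phi$ with $g=f|p|$ gives the stated identity. Your added remarks on the second-moment hypothesis and on integrability are consistent elaborations of the paper's ``straightforward computations.''
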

\begin{proof}
The proof follows from straightforward computations and the fact that if $\psi\in C_c([0,\infty))$ then $\psi/p\in \mathfrak{M}$. 
\end{proof}

\section{Properties of the collision operator and a special class of test functions}\label{Sec:CollisionStructure}
This section is devoted to the construction of test function $\phi$ such that $\mathfrak{H}^1_\phi$ and $\mathfrak{H}^2_\phi$ are positive. These test functions play a crucial role in proving the cascade of energy to infinity. 
\subsection{Boundedness of the functionals $\mathfrak{H}^1_\phi$ and $\mathfrak{H}^2_\phi$}
In the propositions below, we bound $\mathfrak{H}^1_\phi$ and $\mathfrak{H}^2_\phi$ in terms of the norm of the test functions $\phi$. These estimates will be used later in proving the existence of weak solutions in the sense of Definition \ref{Def:WeakSolution}.
\begin{proposition}\label{Propo:L1Boundedness}
For a function  $\phi\in C([0,\infty])$ and $p^2\phi'(p)\in L^\infty([0,\infty]),$ $\phi'(p)$ is piece-wise continuous on $[0,\infty)$ and $\exists \mathfrak{C}>0$ such that $$p_2|\phi(p_1+p_2)-\phi(p_1)|\le\mathfrak{C}$$ for $p_1\ge p_2\ge 0$.
 Denote the set of all $\phi$ by $\mathcal{V}([0,\infty])$, then $\mathcal{V}$
is a vector space with the following norm
$$\|\phi\|_{\mathcal{V}} \ = \ \left\|\phi\right\|_{L^\infty} \ + \left\|p^2\phi'(p)\right\|_{L^\infty} \ + \ \sup_{p_1\ge p_2\ge 0}p_2|\phi(p_1+p_2)-\phi(p_1)|$$ then
 $$|\mathfrak{H}^1_\phi(p_1,p_2)| \ \le \ 10 p_1p_2\|\phi\|_{\mathcal{V}}.$$
 
\end{proposition}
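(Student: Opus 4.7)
The first move is to exploit the algebraic identity
\begin{equation*}
(x+y)^3 + (x-y)^3 - 2(x^2+y^2)x - 4xy^2 = (2x^3 + 6xy^2) - 2x^3 - 2xy^2 - 4xy^2 = 0,
\end{equation*}
valid for all $x,y\ge 0$. This says exactly that constants lie in the kernel of $\phi\mapsto \mathfrak{H}^1_\phi$, which is no coincidence: it is the algebraic fingerprint of energy conservation in the weak formulation. Subtracting $\phi(x)$ times the identity from the definition puts $\mathfrak{H}^1_\phi$, for $x\ge y\ge 0$, in the centered form
\begin{equation*}
\mathfrak{H}^1_\phi(x,y) = (x+y)^3[\phi(x+y)-\phi(x)] + (x-y)^3[\phi(x-y)-\phi(x)] - 4xy^2[\phi(y)-\phi(x)],
\end{equation*}
which I will estimate term by term.

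The second step is to turn $\phi$-differences into integrals of $\phi'$ and use the hypothesis $|\phi'(s)|\le M/s^2$, where $M:=\|p^2\phi'\|_{L^\infty}$; piecewise continuity of $\phi'$ makes this legitimate on every compact subinterval of $(0,\infty)$, yielding the elementary bound $|\phi(b)-\phi(a)|\le M(b-a)/(ab)$ for $0<a<b$. Applying it to the three centered terms, and using $x\ge y\ge 0$ so that $(x+y)^2\le 4x^2$ and $(x-y)^2\le x^2$:
\begin{align*}
(x+y)^3|\phi(x+y)-\phi(x)| &\le (x+y)^3\cdot \tfrac{My}{x(x+y)} = \tfrac{My(x+y)^2}{x}\le 4Mxy,\\
(x-y)^3|\phi(x-y)-\phi(x)| &\le (x-y)^3\cdot \tfrac{My}{x(x-y)} = \tfrac{My(x-y)^2}{x}\le Mxy \quad (x>y),\\
4xy^2|\phi(y)-\phi(x)| &\le 4xy^2\cdot \tfrac{M(x-y)}{xy} = 4My(x-y)\le 4Mxy.
\end{align*}
Summing gives $|\mathfrak{H}^1_\phi(x,y)|\le 9Mxy\le 10\|\phi\|_{\mathcal{V}}\,xy$, which is the claimed inequality. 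The verification that $\mathcal{V}$ is a vector space and $\|\cdot\|_{\mathcal{V}}$ is a norm is routine since each of the three summands is a seminorm and the first is already a norm.

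The only subtle point I expect to have to argue carefully is the degenerate case $x=y$ (and $y=0$) in the middle term: the pointwise bound $My/(x(x-y))$ on $|\phi(x-y)-\phi(x)|$ blows up as $y\to x$, but the prefactor $(x-y)^3$ kills the singularity, so the product $My(x-y)^2/x$ extends continuously to $0$ at $y=x$, and the bound $Mxy$ survives. It is worth noting that the proof above actually uses only the middle seminorm in $\|\cdot\|_{\mathcal{V}}$; the $L^\infty$ piece and the third seminorm $\sup p_2|\phi(p_1+p_2)-\phi(p_1)|$ do not intervene here, but they will be essential in the companion estimate on $\mathfrak{H}^2_\phi$ and in the compactness/continuity arguments that use $\mathcal{V}$ as the test-function space later in the paper.
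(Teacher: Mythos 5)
Your proof is correct, and it takes a genuinely different route from the paper. The paper proves the bound by adding and subtracting $(p_1-p_2)^3\phi(p_1+p_2)$, which splits $\mathfrak{H}^1_\phi$ into a term $(p_1-p_2)^3[\phi(p_1-p_2)-\phi(p_1+p_2)]$ and the two pieces $\mathcal{A}=(2p_1^3+2p_1p_2^2)[\phi(p_1+p_2)-\phi(p_1)]$ and $\mathcal{B}=4p_1p_2^2[\phi(p_1+p_2)-\phi(p_2)]$; the first two are controlled through $\left\|p^2\phi'\right\|_{L^\infty}$ exactly as you do, but $\mathcal{B}$ is bounded by $4p_1p_2\mathfrak{C}$ using the third seminorm, yielding the constant $10$. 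You instead observe that constants lie in the kernel of $\phi\mapsto\mathfrak{H}^1_\phi$ (the identity $(x+y)^3+(x-y)^3=2(x^2+y^2)x+4xy^2$), center every term at $\phi(x)$, and then control all three differences with the single bound $|\phi(b)-\phi(a)|\le \left\|p^2\phi'\right\|_{L^\infty}(b-a)/(ab)$, which gives the slightly sharper constant $9$ and shows that the estimate on $\mathfrak{H}^1_\phi$ uses only the middle seminorm. This is a cleaner argument; it also sidesteps a small awkwardness in the paper's treatment of $\mathcal{B}$, where the difference that appears is $\phi(p_1+p_2)-\phi(p_2)$ while the seminorm defining $\mathfrak{C}$ is stated for $\phi(p_1+p_2)-\phi(p_1)$. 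Your handling of the degenerate cases ($x=y$, $y=0$, and implicitly $x=0$) is adequate, since the prefactors vanish there and the integration of $\phi'$ is only ever performed on compact subintervals of $(0,\infty)$, which matches the paper's own level of rigor with its piecewise decomposition of the integral. The only caveat is your closing side remark: whether the third seminorm is truly \emph{essential} for the $\mathfrak{H}^2_\phi$ estimate is debatable (the $L^\infty$ piece alone suffices there up to a constant), but this does not affect the validity of your proof of the present statement.
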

\begin{proof}
First, rewrite the form of $\mathfrak{H}^1_\phi$ for $p_1\ge p_2\ge 0$
\begin{equation}
\label{Propo:L1Boundedness:E1}
\begin{aligned}
\mathfrak{H}^1_\phi(p_1,p_2) \ = & \ (p_1+p_2)^3 \phi(p_1+p_2) \ - \ (2p_1^3+2p_1p_2^2)\phi(p_1)\\
& \ - 4 p_1p_2^2\phi(p_2) \ + \ (p_1-p_2)^3\phi(p_1-p_2).
\end{aligned}
\end{equation}

We add and subtract at the same time the above identity by $(p_1-p_2)^3\phi(p_1+p_2)$
\begin{equation*}
\begin{aligned}
\mathfrak{H}^1_\phi(p_1,p_2) \ = & \ [(p_1+p_2)^3+(p_1-p_2)^3] \phi(p_1+p_2) \ - \ (2p_1^3+2p_1p_2^2)\phi(p_1)\\
& \ - 4 p_1p_2^2\phi(p_2) \ + \ (p_1-p_2)^3[\phi(p_1-p_2)-\phi(p_1+p_2)],
\end{aligned}
\end{equation*}
and  estimate the absolute value of the last term on the right hand side of the new identity
\begin{equation*}
\begin{aligned}
|(p_1-p_2)^3[\phi(p_1-p_2)-\phi(p_1+p_2)]| 
\ = & \ |p_1-p_2|^3\left|\int_{p_1-p_2}^{p_1+p_2}\frac{\xi^2\phi'(\xi)}{\xi^2}d\xi\right|,
\end{aligned}
\end{equation*}
in which the integral $\int_{p_1-p_2}^{p_1+p_2}$ is defined in the following sense: Suppose that $\phi'(\xi)$ is discontinuous at the points $a_1,\cdots,a_l$ in the interval $[p_1-p_2,p_1+p_2]$, then
\begin{equation}
\label{Discon}
\int_{p_1-p_2}^{p_1+p_2} \ = \ \int_{p_1-p_2}^{a_1} \ + \ \cdots \ + \int_{a_l}^{p_1+p_2}.
\end{equation}
Since in the above integral, the values of $\xi$ is taken in the interval $[p_1-p_2,p_1+p_2]$, it is straightforward that $\frac{1}{\xi^2}\le \frac{1}{|p_1-p_2|^2},$
which implies
\begin{equation*}
\begin{aligned}
|(p_1-p_2)^3[\phi(p_1-p_2)-\phi(p_1+p_2)]| \ \le & \ |p_1-p_2|^3\left|\int_{p_1-p_2}^{p_1+p_2}\phi'(\xi)d\xi\right|
\ \le  \ 2p_1p_2\left\|\xi^2\phi'(\xi)\right\|_{L^\infty},
\end{aligned}
\end{equation*}
where the integral is defined in the sense of \eqref{Discon}.

Combining the last two inequalities, we find the following bound on $\mathfrak{H}^1_\psi(p_1,p_2)$
 \begin{equation*}
\begin{aligned}
\Big|\mathfrak{H}^1_\phi(p_1,p_2)\Big| 
\ \le & \ \Big|[2p_1^3+2p_1p_2^2+4p_1p_2^2] \phi(p_1+p_2) \ - \ (2p_1^3+2p_1p_2^2)\phi(p_1)\\
& \ - 4 p_1p_2^2\phi(p_2)\Big| \ + \ 2p_1p_2\left\|\xi^2\phi'(\xi)\right\|_{L^\infty},
\end{aligned}
\end{equation*}
which can be rewritten under the form
\begin{equation}\label{Propo:L1Boundedness:E2}
\begin{aligned}
\Big|\mathfrak{H}^1_\phi(p_1,p_2) \Big|
\ \le & \ \mathcal{A} \ + \  \mathcal{B} \ + \ 2p_1p_2\left\|\xi^2\phi'(\xi)\right\|_{L^\infty},
\end{aligned}
\end{equation}
where
$$\mathcal{A} \ := \ \Big|(2p_1^3+2p_1p_2^2)[\phi(p_1+p_2)-\phi(p_1)]\Big| \mbox{
and }
\mathcal{B} \ := \ \Big|4 p_1p_2^2[\phi(p_1+p_2)-\phi(p_2)]\Big|.$$

Let us now estimate $\mathcal{A} $
\begin{equation*}
\mathcal{A} \ =  \ \Big|(2p_1^3+2p_1p_2^2)\int_{p_1}^{p_1+p_2}\phi'(\xi)d\xi\Big|\
\ =  \ \Big|(2p_1^3+2p_1p_2^2)\int_{p_1}^{p_1+p_2}\frac{\xi^2\phi'(\xi)}{\xi^2}d\xi\Big|,
\end{equation*}
where the integral is defined in the sense of \eqref{Discon}.

Observe that $\xi$ in the integral is taken within the interval $[p_1,p_1+p_2]$, then
$\frac{1}{\xi^2}\le \frac{1}{p_1^2},$
that means
\begin{equation}\label{Propo:L1Boundedness:E3}
\begin{aligned}
\mathcal{A} \ \le & \  \frac{2p_1^2+2p_2^2}{p_1}\int_{p_1}^{p_1+p_2}{\xi^2\phi'(\xi)}d\xi
\ \le  \  4p_1p_2\left\|\xi^2\phi'(\xi)\right\|_{L^\infty},
\end{aligned}
\end{equation}
where the integral is defined in the sense of \eqref{Discon}.

It is straightforward to bound $\mathcal{B}$
\begin{equation}\label{Propo:L1Boundedness:E4}
\begin{aligned}
\mathcal{B} \ \le & \ 4 p_1p_2\mathfrak{C}.
\end{aligned}
\end{equation}

Combining \eqref{Propo:L1Boundedness:E2}, \eqref{Propo:L1Boundedness:E3} and \eqref{Propo:L1Boundedness:E4} gives the final conclusion of the Proposition.
\end{proof}
\begin{proposition}\label{Propo:L2Boundedness}

 For a function  $\phi\in\mathcal{V}([0,\infty))$, where $\mathcal{V}$ is defined in Proposition \ref{Propo:L1Boundedness}, then
 $$|\mathfrak{H}^2_\phi(p,p)| \ \le \ 8p^2\left\|\phi\right\|_{\mathcal{V}}.$$
\end{proposition}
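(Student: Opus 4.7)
The plan is to substitute $x=y=p$ directly into the definition of $\mathfrak{H}^2_\phi$ and reduce the inequality to a one-variable estimate on $\phi(2p)-\phi(p)$, which is then controlled by the seminorms entering $\|\phi\|_{\mathcal{V}}$.

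First I would observe that, at the diagonal point $x=y=p$,
\begin{equation*}
\mathfrak{H}^2_\phi(p,p) \;=\; (2p)^2\bigl[\,2p\,\phi(2p)-2p\,\phi(p)\bigr] \;=\; 8p^{3}\bigl[\phi(2p)-\phi(p)\bigr],
\end{equation*}
so the whole task reduces to proving the scalar bound $p\bigl|\phi(2p)-\phi(p)\bigr|\le \|\phi\|_{\mathcal{V}}$. This is immediate from the third seminorm in the definition of $\|\phi\|_{\mathcal{V}}$ by choosing $p_1=p_2=p$, since then $p_2\,|\phi(p_1+p_2)-\phi(p_1)|=p\,|\phi(2p)-\phi(p)|\le\sup_{p_1\ge p_2\ge 0}p_2|\phi(p_1+p_2)-\phi(p_1)|\le\|\phi\|_{\mathcal{V}}$.

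As an alternative (and as a sanity check giving a slightly sharper constant), I would, following the argument of Proposition~\ref{Propo:L1Boundedness}, write
\begin{equation*}
\phi(2p)-\phi(p) \;=\; \int_{p}^{2p}\phi'(\xi)\,d\xi \;=\; \int_{p}^{2p}\frac{\xi^{2}\phi'(\xi)}{\xi^{2}}\,d\xi,
\end{equation*}
with the integral interpreted in the piecewise sense of \eqref{Discon}. Since $\xi\in[p,2p]$ implies $1/\xi^{2}\le 1/p^{2}$, one obtains $|\phi(2p)-\phi(p)|\le \|\xi^{2}\phi'(\xi)\|_{L^{\infty}}\,p^{-2}\int_{p}^{2p}d\xi\cdot p^{-0}\le\tfrac{1}{p}\|\xi^{2}\phi'(\xi)\|_{L^{\infty}}$, which again yields $p\,|\phi(2p)-\phi(p)|\le\|\phi\|_{\mathcal{V}}$.

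Combining this with the explicit identity $\mathfrak{H}^{2}_\phi(p,p)=8p^{3}[\phi(2p)-\phi(p)]$ gives
\begin{equation*}
\bigl|\mathfrak{H}^{2}_\phi(p,p)\bigr| \;=\; 8p^{2}\cdot p\,\bigl|\phi(2p)-\phi(p)\bigr| \;\le\; 8p^{2}\|\phi\|_{\mathcal{V}},
\end{equation*}
which is the desired bound. There is no substantive obstacle here: the inequality is really a direct consequence of having built the seminorm $\sup_{p_1\ge p_2\ge 0}p_{2}|\phi(p_1+p_2)-\phi(p_1)|$ into $\|\phi\|_{\mathcal{V}}$ precisely to control such diagonal increments, so the only care needed is the bookkeeping of the factor $|x+y|^{2}(x+y)=8p^{3}$ that arises from setting $x=y=p$ in $\mathfrak{H}^{2}_\phi$.
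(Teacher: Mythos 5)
Your proof is correct, and your ``alternative'' route is in fact the paper's own proof: the paper writes $\mathfrak{H}^2_\phi(p,p)=8p^3[\phi(2p)-\phi(p)]$ (its first display has a typographical $8p^2$, but the subsequent estimate uses $8p^3$), expresses the increment as $\int_p^{2p}\phi'(\xi)\,d\xi$ in the piecewise sense of \eqref{Discon}, and bounds it by $\frac1p\|\xi^2\phi'(\xi)\|_{L^\infty}$ exactly as you do. Your primary route is a slight shortcut the paper does not take: you invoke the third seminorm $\sup_{p_1\ge p_2\ge 0}p_2|\phi(p_1+p_2)-\phi(p_1)|$ at the diagonal $p_1=p_2=p$, which gives $p|\phi(2p)-\phi(p)|\le\|\phi\|_{\mathcal V}$ without differentiating $\phi$ at all; this is legitimate since the diagonal is admissible in that supremum, and it avoids any appeal to the piecewise-continuity of $\phi'$, whereas the paper's derivative argument only uses the $\|\xi^2\phi'\|_{L^\infty}$ component of the norm. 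Either way the constant $8$ and the factor $p^2$ come out as claimed, so there is no gap.
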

\begin{proof}
First, rewrite the form of $\mathfrak{H}^2_\phi(p,p)$  in terms of $\phi$
\begin{equation*}
\mathfrak{H}^2_\phi(p,p) \ = \ 8p^2[\phi(2p)-\phi(p)].
\end{equation*}

The same argument used in the previous proposition gives
\begin{equation*}
|\mathfrak{H}^2_\phi(p,p)| \ = \ 8p^3\int_{p}^{2p}\phi'(\xi)d\xi \ \le \ 8p^2\left\|\phi\right\|_{\mathcal{V}}.
\end{equation*}
\end{proof}

\begin{proposition}\label{Propo:L3Boundedness}
Define the space $\mathcal{W}([0,\infty))$ to be the vector space spanned by $$\mathcal{V}_0([0,\infty)) \ = \ \Big\{ \psi(p)\ = \ p\phi(p) \ \ \Big| \ \phi\in \mathcal{V}([0,\infty))\Big\} \mbox{ and the vector } \psi \ = \ 1.$$ We also define $$\mathcal{W}^0([0,\infty)) \ = \ \Big\{\phi \ \Big|\  p\phi\in \mathcal{W}([0,\infty))  \Big\}.$$

Then $\mathcal{V}([0,\infty))$ is dense in $C([0,\infty])$ with respect to the $L^\infty$-norm. And $\mathcal{W}([0,\infty))$ is also dense in  $C_c([0,\infty))$ with respect to the $L^\infty$-norm.
\end{proposition}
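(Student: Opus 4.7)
The plan is to handle the two density statements separately: the first via the Stone--Weierstrass theorem on the compact Hausdorff space $[0,\infty]$, and the second via a direct mollification-plus-decomposition argument, since $\mathcal{W}$ is a linear span rather than an algebra and cannot be treated by Stone--Weierstrass directly.

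For density of $\mathcal{V}([0,\infty))$ in $C([0,\infty])$, I note that $[0,\infty]$ is compact Hausdorff (homeomorphic to $[0,1]$ via $p\mapsto p/(1+p)$ with $\infty\mapsto 1$), so Stone--Weierstrass applies. Since the proposition already asserts $\mathcal{V}$ is a vector space, the only algebraic check is closure under pointwise products. For $\phi,\psi\in\mathcal{V}$, the Leibniz rule yields
\[
p^2(\phi\psi)'(p)=(p^2\phi'(p))\,\psi(p)+\phi(p)\,(p^2\psi'(p))\in L^\infty,
\]
using that each of $\phi,\psi\in C([0,\infty])$ is automatically bounded; piece-wise continuity of $(\phi\psi)'$ is inherited; and the bilinear estimate
\[
p_2|(\phi\psi)(p_1+p_2)-(\phi\psi)(p_1)|\leq \|\phi\|_{L^\infty}p_2|\psi(p_1+p_2)-\psi(p_1)|+\|\psi\|_{L^\infty}p_2|\phi(p_1+p_2)-\phi(p_1)|
\]
preserves the third component of $\|\cdot\|_\mathcal{V}$. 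The constant $1$ trivially lies in $\mathcal{V}$, and the function $\phi_\star(p):=1/(1+p)$ (extended by $\phi_\star(\infty)=0$) is strictly decreasing on $[0,\infty]$ with $p^2\phi_\star'(p)=-p^2/(1+p)^2\in L^\infty$, so it belongs to $\mathcal{V}$ and separates every pair of distinct points. Stone--Weierstrass then yields the claim.

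For density of $\mathcal{W}([0,\infty))$ in $C_c([0,\infty))$, the strategy is a two-step construction. Given $\psi\in C_c([0,\infty))$, I first produce $\psi_n\in C_c^\infty([0,\infty))$ converging uniformly to $\psi$ by even-reflecting to $\tilde{\psi}(p):=\psi(|p|)\in C_c(\mathbb{R})$, convolving with a standard mollifier $\rho_{1/n}$ on $\mathbb{R}$, and restricting back to $[0,\infty)$; the even reflection avoids the boundary discontinuity that a zero extension across $p=0$ would introduce when $\psi(0)\neq 0$. Then, for each smooth compactly supported $\psi_n$, I set $c_n:=\psi_n(0)$ and define
\[
\tilde{\Phi}_n(p):=\frac{\psi_n(p)-c_n}{p}\ (p>0),\quad \tilde{\Phi}_n(0):=\psi_n'(0),\quad \tilde{\Phi}_n(\infty):=0,
\]
obtaining $\tilde{\Phi}_n\in C([0,\infty])$ (Taylor expansion at the origin handles continuity there, and compact support of $\psi_n$ forces $\tilde{\Phi}_n(p)=-c_n/p\to 0$ at infinity). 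Direct differentiation gives the key identity $p^2\tilde{\Phi}_n'(p)=p\psi_n'(p)-\psi_n(p)+c_n\in L^\infty$, and the other defining conditions of $\mathcal{V}$ follow immediately, so $\tilde{\Phi}_n\in\mathcal{V}$. Consequently $\psi_n=p\,\tilde{\Phi}_n+c_n\cdot 1$ lies in the span of $\mathcal{V}_0\cup\{1\}$, i.e.\ $\psi_n\in\mathcal{W}$, and uniform approximation transfers from $\psi_n$ to $\psi$.

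The main obstacle I anticipate is verifying that $\mathcal{V}$ is genuinely closed under products, especially the third component of $\|\cdot\|_\mathcal{V}$: the bilinear estimate above requires $L^\infty$ bounds on both factors, available precisely because elements of $\mathcal{V}$ are continuous on the compact set $[0,\infty]$. A more delicate secondary point is the endpoint analysis of $\tilde{\Phi}_n$ near $p=0$, where the smoothness gained in the mollification step is essential to ensure both continuity of $\tilde{\Phi}_n$ and piece-wise continuity of $\tilde{\Phi}_n'$. Finally, the inclusion of the constant $1$ in the span defining $\mathcal{W}$ is exactly what absorbs the non-zero value $\psi_n(0)$ in the decomposition---without it the second density would fail for every $\psi$ with $\psi(0)\neq 0$.
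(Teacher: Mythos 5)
Your proof is correct, but it takes a partly different route from the paper's. For the density of $\mathcal{V}([0,\infty))$ in $C([0,\infty])$ the paper does not invoke Stone--Weierstrass: it exhibits the explicit subspace $S$ spanned by $C^1_c((0,\infty))$, $\tfrac{1}{p+1}$ and $1$, notes $S\subset\mathcal{V}$, and asserts the (elementary) density of $S$; you instead verify that $\mathcal{V}$ is a subalgebra of $C([0,\infty])$ containing the constants and separating points via $1/(1+p)$, which replaces the unproved density of $S$ by the algebra check $p^2(\phi\psi)'=(p^2\phi')\psi+\phi(p^2\psi')$ together with the bilinear estimate for the third component of $\|\cdot\|_{\mathcal{V}}$ --- a cleaner black-box argument at the cost of that verification. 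For the density of $\mathcal{W}([0,\infty))$ in $C_c([0,\infty))$ the paper splits $\psi$ by a cutoff into a piece supported near the origin and a piece in $C_c((0,\infty))$, approximates the latter by $C^1_c((0,\infty))$ functions (whose quotients by $p$ lie in $\mathcal{V}$), and handles the former by subtracting the constant $\psi_1(0)\in\mathcal{W}$; you instead mollify globally (with the even reflection correctly avoiding the boundary issue when $\psi(0)\neq0$) and then use the single exact decomposition $\psi_n=p\,\tilde\Phi_n+\psi_n(0)$ with $\tilde\Phi_n=(\psi_n-\psi_n(0))/p$, checking $p^2\tilde\Phi_n'=p\psi_n'-\psi_n+\psi_n(0)\in L^\infty$ directly. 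Both arguments hinge on the same two structural points --- the constant $1\in\mathcal{W}$ absorbs the value at the origin, and dividing by $p$ lands in $\mathcal{V}$ because of the weight $p^2$ in the derivative condition --- so the approaches are conceptually close, but yours is more explicit about where smoothness at $p=0$ is needed, while the paper's avoids mollification altogether. One cosmetic remark: your claim that the remaining condition $p_2|\tilde\Phi_n(p_1+p_2)-\tilde\Phi_n(p_1)|\le\mathfrak{C}$ ``follows immediately'' is true but deserves the one-line justification $p_2|\tilde\Phi_n(p_1+p_2)-\tilde\Phi_n(p_1)|\le \|p^2\tilde\Phi_n'\|_{L^\infty}\,p_2^2/\bigl(p_1(p_1+p_2)\bigr)\le \|p^2\tilde\Phi_n'\|_{L^\infty}$ for $p_1\ge p_2>0$, which is the same integration trick used in Proposition \ref{Propo:L1Boundedness}.
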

\begin{proof}

Let us define $S$ to be the vector space spanned by $C^1_c((0,\infty))$ and the two functions $\frac{1}{p+1}$ and $1$. Since   $S \subset\mathcal{V}([0,\infty))$ and $S$ is dense in $C([0,\infty])$, we deduce that $\mathcal{V}([0,\infty))$ is dense in $C([0,\infty])$.

Now, let us consider a function $\psi\in C_c([0,\infty))$. Observe that $\psi$ can be decomposed as the sum  of $\psi_1\in C([0,a])$ for some $0<a<\infty$ and $\psi_2\in C_c(0,\infty)$. It is clear that $\psi_2(p)$ can be approximated as the limit of a sequence $\{\psi_{2,n}\}$ in $C_c^1(0,\infty)$. Since $\{\psi_{2,n}/p\}\subset C_c^1(0,\infty)\subset \mathcal{V}([0,\infty))$. As a consequence $\psi_2$ can be approximated in the $L^\infty$ norm by a sequence in $\mathcal{V}_0([0,\infty))\subset \mathcal{W}([0,\infty))$.

There are two cases for $\psi_1$. If $\psi_1(0)=0$, then $\psi$ can be approximated as the limit of a sequence $\{\psi_{1,n}\}$ in $C_c^1(0,\infty)$. Arguing similarly as above, we obtain $\psi_1$ can be approximated in the $L^\infty$ norm by a sequence in $\mathcal{V}([0,\infty))$. If $\psi_1(0)=A\ne 0$, then $\psi(0)-A=0$, and therefore can be approximated in the $L^\infty$ norm by a sequence in $\mathcal{V}_0([0,\infty))\subset \mathcal{W}([0,\infty))$. Since $A\in \mathcal{W}([0,\infty))$, we deduce $\psi_1$  can be approximated in the $L^\infty$ norm by a sequence in $\mathcal{W}([0,\infty))$.

\end{proof}

\begin{proposition}\label{Propo:L4Boundedness}
For any $\psi\in\mathcal{W}([0,\infty))$, then define $\phi=\psi/|p|$, there exist two universal constants $c_1,c_2>0$ such that
$$|\mathfrak{H}^1_{\phi}(p_1,p_2)| \ \le c_1\ p_1p_2 \mbox{ 
and } |\mathfrak{H}^2_{\phi}(p,p)| \ \le c_2\ p^2$$
for $p_1\ge p_2\ge 0$ and $p\ge 0$.
\end{proposition}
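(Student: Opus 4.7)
The plan is to exploit the structure of $\mathcal{W}([0,\infty))$: by its very definition, every $\psi\in\mathcal{W}$ can be written as $\psi(p)=p\phi_0(p)+c$ with $\phi_0\in\mathcal{V}([0,\infty))$ and $c\in\mathbb{R}$, so that $\phi:=\psi/|p|=\phi_0+c/|p|$ on $(0,\infty)$. Since $\mathfrak{H}^1_\phi$ and $\mathfrak{H}^2_\phi$ are linear in $\phi$, I would split
\[
\mathfrak{H}^i_\phi \ = \ \mathfrak{H}^i_{\phi_0} \ + \ c\,\mathfrak{H}^i_{1/|p|}, \qquad i=1,2,
\]
and estimate the two pieces separately. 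The first piece is controlled for free by Propositions \ref{Propo:L1Boundedness} and \ref{Propo:L2Boundedness}, which give $|\mathfrak{H}^1_{\phi_0}(p_1,p_2)|\le 10\,\|\phi_0\|_{\mathcal{V}}\,p_1 p_2$ and $|\mathfrak{H}^2_{\phi_0}(p,p)|\le 8\,\|\phi_0\|_{\mathcal{V}}\,p^2$.

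For the second piece, I would compute $\mathfrak{H}^i_{1/|p|}$ by direct expansion. It is convenient to set $\eta(p):=p\phi(p)$, so that \eqref{L1} rewrites as
\[
\mathfrak{H}^1_\phi(x,y) \ = \ (x+y)^2\eta(x+y) - 2(x^2+y^2)\eta(x) - 4xy\,\eta(y) + (x-y)^2\eta(x-y);
\]
this form removes the apparent $0\cdot\infty$ in the term $(x-y)^3\phi(x-y)$ at $x=y$ when $\phi=1/|p|$. Taking $\phi(p)=1/|p|$, so that $\eta\equiv 1$, elementary algebra yields
\[
\mathfrak{H}^1_{1/|p|}(p_1,p_2) \ = \ (p_1+p_2)^2 - 2(p_1^2+p_2^2) - 4p_1 p_2 + (p_1-p_2)^2 \ = \ -4\,p_1 p_2,
\]
\[
\mathfrak{H}^2_{1/|p|}(p,p) \ = \ 4p^2\bigl[\,1 - 1 - 1\,\bigr] \ = \ -4\,p^2.
\]

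Combining the two sets of bounds by the triangle inequality gives the desired estimates with the explicit constants $c_1 = 10\,\|\phi_0\|_{\mathcal{V}} + 4|c|$ and $c_2 = 8\,\|\phi_0\|_{\mathcal{V}} + 4|c|$, both depending only on $\psi$ and not on $p_1$, $p_2$. There is no genuine obstacle here: the only subtle point is the interpretation of $\phi(p_1-p_2)$ at the diagonal $p_1=p_2$ for the singular function $\phi=1/|p|$, which is handled cleanly by rewriting in terms of $\eta=p\phi$ (continuous and bounded, with $\eta\equiv 1$ in this case). Everything else is linearity and a quote of Propositions \ref{Propo:L1Boundedness}--\ref{Propo:L2Boundedness}.
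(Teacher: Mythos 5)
Your proof is correct and follows essentially the same route as the paper: decompose $\psi = c + p\phi_0$ with $\phi_0\in\mathcal{V}([0,\infty))$, use linearity of $\mathfrak{H}^1,\mathfrak{H}^2$, quote Propositions \ref{Propo:L1Boundedness}--\ref{Propo:L2Boundedness} for the $\phi_0$ part, and compute the singular part explicitly to get $\mathfrak{H}^1_{c/p}(p_1,p_2)=-4cp_1p_2$ and $\mathfrak{H}^2_{c/p}(p,p)=-4cp^2$. The only cosmetic difference is your substitution $\eta=p\phi$ to clean up the diagonal term $|p_1-p_2|^3\phi(p_1-p_2)$, which the paper handles by the same direct algebra.
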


\begin{proof}
For a vector $\psi\in\mathcal{W}([0,\infty))$, $\psi$ has the form
$A  \ + \ p\theta(p),$
where $A$ is a constant and $\theta \in \mathcal{V}([0,\infty))$. Since $\mathfrak{H}^1$ and $\mathfrak{H}^2$ are linear,
$\mathfrak{H}^1_\phi \ = \ \mathfrak{H}^1_{A/p} \ +  \ \mathfrak{H}^1_{\theta},
\mbox{ and }
\mathfrak{H}^2_\phi \ = \ \mathfrak{H}^2_{A/p} \ +  \ \mathfrak{H}^2_{\theta}.$ From Propositions \ref{Propo:L1Boundedness} and \ref{Propo:L2Boundedness}, it follows that
$|\mathfrak{H}^1_{\theta}(p_1,p_2)| \ \lesssim\ p_1p_2, \mbox{
and } |\mathfrak{H}^2_{\theta}(p,p)| \ \lesssim\ p^2.$ Let us now consider $\mathfrak{H}^1_{A/p}(p_1,p_2)$
$$\mathfrak{H}^1_{A/p}(p_1,p_2) \ =  \  2A|p_1+p_2|^2-2A(p_1^2+p_2^2)-4Ap_1p_2 +A|p_1-p_2|^2 \ = \ -4Ap_1p_2.$$
Moreover, it is also straightforward that
$\mathfrak{H}^2_{A/p}(p,p) \ =  \   -4Ap^2.$ As a consequence, the conclusion of the proposition follows.

\end{proof}
\subsection{The special effect of a  class of test functions on the collision operator}
In the propositions below, we show that $\mathfrak{H}^1_\phi$ and $\mathfrak{H}^2_\phi$ are positive with the test functions $\phi(p) \ = \ \phi_r(p) \ = \  \left(1-\frac{r}{p}\right)_+$, for $r>0$. This is the class of test functions that help us to detect the behavior of the solutions at infinity, due to the fact that the limit when $p$ tends to infinity of $\phi_r$ is $1$. 

\begin{proposition}[The positivity of $\mathfrak{H}^1_\phi$ and $\mathfrak{H}^2_\phi$]\label{Propo:L1L2Positivity}
For the special choices of $\phi$ belonging to the class
\begin{equation}
\label{Propo:L1L2Positivity}
\phi_{r}(p) \ = \ \left(1-\frac{r}{p}\right)_+, \ \ \ \ r\in(0,\infty),
\end{equation}
where
$$\left(1-\frac{r}{p}\right)_+ \ = \ 1-\frac{r}{p}, \mbox{ for } p\ge r,\ \ \mbox{ and  } \ \ \left(1-\frac{r}{p}\right)_+ \ = \ 0, \mbox{ for } 0\le p < r,$$
for $r\in(0,\infty)$, the two operators  $\mathfrak{H}^1_{\phi}(p_1,p_2), \mathfrak{H}^2_{\phi}(p_1,p_2)$ become non-negative for all $0\le p_2\le p_1<\infty$.  

In addition, $\phi$ satisfies the conditions of Propositions \ref{Propo:L1Boundedness} and \ref{Propo:L2Boundedness}:
 $\phi\in C([0,\infty])$, $p^2\phi'(p)\in L^\infty([0,\infty]),$ $\phi'(p)$ is piece-wise continuous on $(0,\infty)$ and $\exists \mathfrak{C}>0$ such that $p_2|\phi(p_1+p_2)-\phi(p_1)|\le\mathfrak{C}$ for $p_1\ge p_2\ge 0$.

Moreover, for this choice of test functions
$\mathfrak{H}^1_{\phi}(p,p) \ = \  \mathfrak{H}^2_{\phi}(p,p), \ \forall p\in[0,\infty),$
that means the weak formulation \eqref{Def:WeakSolution:E1} can be reformulated as
\begin{equation}\label{Def:WeakSolution:New}
\begin{aligned}
&\int_{[0,\infty]}\phi(t,p)g(t,p)|p|^2d \mu(p)-\int_{[0,\infty]}\phi(0,p)g(0,p)|p|^2d \mu(p)\\
\indent= &\int_0^t\bigg[\int_{[0,\infty]}\phi_s(s,p)g(s,p)|p|^2d \mu(p)+\int_{[0,\infty]^2}{g(s,p_1)|p_1|g(s,p_2)|p_2|}\mathfrak{H}^1_{\phi}(p_1,p_2)d \mu(p_1)d \mu(p_2)\bigg].
\end{aligned}
\end{equation}
\end{proposition}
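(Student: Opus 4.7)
The plan is to verify each of the four claims in turn: regularity of $\phi_r$ for membership in $\mathcal{V}([0,\infty])$, nonnegativity of $\mathfrak{H}^1_{\phi_r}$ on $\{p_1\ge p_2\ge 0\}$ and of $\mathfrak{H}^2_{\phi_r}(p,p)$, the diagonal identity $\mathfrak{H}^1_{\phi_r}(p,p)=\mathfrak{H}^2_{\phi_r}(p,p)$, and finally the resulting simplification of the weak formulation.

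For the regularity, I would observe that $\phi_r\in C([0,\infty])$ with $\phi_r(\infty)=1$, $\phi_r\equiv 0$ on $[0,r]$, and $\phi_r'(p)=r/p^2$ on $(r,\infty)$, so $\phi_r'$ is piecewise continuous on $(0,\infty)$ with a single jump at $p=r$ and $\|p^2\phi_r'\|_{L^\infty}=r$. The finiteness of $\sup_{p_1\ge p_2\ge 0}p_2|\phi_r(p_1+p_2)-\phi_r(p_1)|$ then splits into three regimes: both arguments below $r$ (difference vanishes); $p_1\ge r$ (the difference equals $rp_2/[p_1(p_1+p_2)]$ and $p_1\ge p_2$ gives $p_2$ times it at most $r/2$); and $p_1<r\le p_1+p_2$, which forces $p_2\le p_1<r$ and yields $p_2(p_1+p_2-r)/(p_1+p_2)\le p_2/2<r/2$. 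Hence $\mathfrak{C}=r/2$ is admissible and $\phi_r\in\mathcal{V}([0,\infty])$.

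For the positivity of $\mathfrak{H}^1_{\phi_r}(p_1,p_2)$ I would case-split according to which of the four relevant arguments $a=p_1+p_2$, $b=p_1$, $c=p_2$, $d=p_1-p_2$ exceeds $r$, using the two algebraic backbones $\mathfrak{H}^1_1(p_1,p_2)\equiv 0$ (energy conservation against the constant test function) and $\mathfrak{H}^1_{1/p}(p_1,p_2)=-4p_1p_2$. When all four arguments exceed $r$, writing $\phi_r=1-r/p$ gives $\mathfrak{H}^1_{\phi_r}=-r\mathfrak{H}^1_{1/p}=4rp_1p_2\ge 0$. In the remaining cases the summands corresponding to arguments below $r$ drop out. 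The simplest subcases are $a\ge r>b$ (giving $a^2(a-r)\ge 0$), $a,b,c\ge r>d$ (reducing to $-d^3+ra^2\ge 0$, which holds since $r\ge d$ and $a\ge d$), and $a,b,d\ge r>c$ (collapsing to $4bc^2\ge 0$). The decisive case is $a,b\ge r$ while $c,d<r$: after substituting $u=d/r\in[0,1)$ and $v=c/r\in[0,1)$ and using the backbone identities, the expression becomes $r^3[u^2(1-u)+4v^2(u+v)]$, which is manifestly nonnegative. This case is the main technical obstacle, and the clean factorization hinges on that particular change of variables.

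For the diagonal, direct substitution gives $\mathfrak{H}^2_{\phi_r}(p,p)=8p^3[\phi_r(2p)-\phi_r(p)]\ge 0$ because $\phi_r$ is nondecreasing, and expanding $\mathfrak{H}^1_{\phi_r}(p,p)$ returns the same expression $8p^3[\phi_r(2p)-\phi_r(p)]$, establishing $\mathfrak{H}^1_{\phi_r}(p,p)=\mathfrak{H}^2_{\phi_r}(p,p)$. The reformulation \eqref{Def:WeakSolution:New} is then obtained by extending $\mathfrak{H}^1_\phi$ symmetrically to $[0,\infty]^2$ via $(p_1,p_2)\mapsto\mathfrak{H}^1_\phi(\max(p_1,p_2),\min(p_1,p_2))$: the integral over $[0,\infty]^2$ equals twice the integral over $\{p_1>p_2\}$ plus the diagonal contribution, and the diagonal identity identifies that diagonal contribution with the $\mathfrak{H}^2$-term appearing in \eqref{Def:WeakSolution:E1}.
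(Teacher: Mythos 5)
Your proposal is correct, and in the two hard cases it is actually sharper than the paper's own argument, so a brief comparison is in order. The paper proves positivity of $\mathfrak{H}^1_{\phi_r}$ by the same kind of case analysis on which of the arguments $p_1+p_2,\,p_1,\,p_2,\,p_1-p_2$ lie above $r$, but where you evaluate exactly, the paper estimates: in the case $p_2\ge r>p_1-p_2$ it replaces $\phi_r(p_1-p_2)=0$ by the smaller quantity $1-\tfrac{r}{p_1-p_2}$ and falls back on the all-arguments computation $4rp_1p_2$, and in the decisive case $p_1\ge r>p_2,\,p_1-p_2$ it compares $2(p_1^3+p_1p_2^2)-(p_1+p_2)^3\le(p_1-p_2)^3\le r(p_1-p_2)^2$ rather than producing your exact closed form $d^2(r-d)+4c^2(c+d)$ (equivalently $r^3[u^2(1-u)+4v^2(u+v)]$), which I checked agrees with a direct computation. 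Your enumeration by which of the four arguments exceed $r$ is also complete, whereas the paper's four listed cases, as written, do not explicitly cover $p_2<r\le p_1-p_2$ (your case $a,b,d\ge r>c$, giving $4p_1p_2^2$); and you additionally verify the membership of $\phi_r$ in the test class, the diagonal identity $\mathfrak{H}^1_{\phi_r}(p,p)=\mathfrak{H}^2_{\phi_r}(p,p)=8p^3[\phi_r(2p)-\phi_r(p)]$, and the symmetrization behind \eqref{Def:WeakSolution:New}, none of which the paper's written proof spells out. One small omission on your side: the proposition asserts $\mathfrak{H}^2_{\phi_r}(p_1,p_2)\ge0$ for \emph{all} $0\le p_2\le p_1<\infty$, not only on the diagonal; this is immediate from the same monotonicity you already invoke, since
\begin{equation*}
\mathfrak{H}^2_{\phi_r}(p_1,p_2)=(p_1+p_2)^2\Bigl[p_1\bigl(\phi_r(p_1+p_2)-\phi_r(p_1)\bigr)+p_2\bigl(\phi_r(p_1+p_2)-\phi_r(p_2)\bigr)\Bigr]\ge0,
\end{equation*}
which is exactly how the paper handles it, and you may as well also record the trivial subcase $p_1+p_2<r$ where $\mathfrak{H}^1_{\phi_r}$ vanishes identically.
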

\begin{proof} The proof is divided into two parts.

{\bf Part 1:} Positivity of $\mathfrak{H}^1_\phi$.

We rewrite the form of $\mathfrak{H}^1_\phi$ in terms of $\phi$ for $p_1\ge p_2\ge 0$
\begin{equation}\label{L1intermPhi}
\begin{aligned}
\mathfrak{H}^1_\phi(p_1,p_2) \ = & \ (p_1+p_2)^3\phi(p_1+p_2)-2(p_1^3+p_1p_2^2)\phi(p_1)-4p_1p_2^2\phi(p_2)+(p_1-p_2)^3\phi(p_1-p_2),
\end{aligned}
\end{equation}
and consider several cases.

{\it Case 1:} $p_1\ge p_2\ge p_1-p_2\ge r$. 

In this case, we compute
\begin{equation*}
\begin{aligned}
\mathfrak{H}^1_\phi(p_1,p_2)
\ = & \ -r(p_1+p_2)^{2}+2r\left(p_1^{2}+p_2^2\right) \ +4r p_1p_2-r(p_1-p_2)^{2}
\ \ge  \ 0.
\end{aligned}
\end{equation*}

{\it Case 2:} $p_1\ge p_2\ge r > p_1-p_2\ge 0$. 

Since $r > p_1-p_2$, it follows that $\phi(p_1-p_2) \ = \ 0 \ \ge \ 1-\frac{r}{(p_1-p_2)},$
which implies
\begin{equation*}
\begin{aligned}
\mathfrak{H}^1_\phi(p_1,p_2) \ \ge  & \ (p_1+p_2)^3\left(1-\frac{r}{p_1+p_2}\right)-2(p_1^3+p_1p_2^2)\left(1-\frac{r}{p_1}\right)\\
& \ -4p_1p_2^2\left(1-\frac{r}{p_2}\right)+(p_1-p_2)^3\left(1-\frac{r}{p_1-p_2}\right)
\ \ge  \ 0.
\end{aligned}
\end{equation*}

{\it Case 3:} $p_1\ge r >  p_1-p_2 ; p_2  \ge 0$. 

Since $\phi(p_2)=\phi(p_1-p_2)=0,$ it follows that
\begin{equation*}
\begin{aligned}
\mathfrak{H}^1_\phi(p_1,p_2) \ =  & \ (p_1+p_2)^3\left(1-\frac{r}{p_1+p_2}\right)-2(p_1^3+p_1p_2^2)\left(1-\frac{r}{p_1}\right).
\end{aligned}
\end{equation*}
Let us compute
\begin{equation*}
\begin{aligned}
2(p_1^3+p_1p_2^2) \ - \  (p_1+p_2)^3 \ =  & \ p_1^3 \ - \ 3p_1^2p_2 \ - \ p_1p_2^2 \ - \ p_2^3 \ \le (p_1-p_2)^3.
\end{aligned}
\end{equation*}
On the other hand, we have that
\begin{equation*}
\begin{aligned}
 (p_1+p_2)^3\left(-\frac{r}{p_1+p_2}\right)-2(p_1^3+p_1p_2^2)\left(-\frac{r}{p_1}\right) 
 =& \ r(p_1-p_2)^2.
\end{aligned}
\end{equation*}

Using the fact that $r>p_1-p_2$, we find 
$$(p_1+p_2)^3\left(-\frac{r}{p_1+p_2}\right)-2(p_1^3+p_1p_2^2)\left(-\frac{r}{p_1}\right) \ \ge \ 2(p_1^3+p_1p_2^2) \ - \  (p_1+p_2)^3,$$
which implies
$\mathfrak{H}^1_\phi(p_1,p_2) \ge 0.$

{\it Case 4:} $p_1< r$.  

In this case, it straightforward that $\mathfrak{H}^1_\phi(p_1,p_2)  \ = \ (p_1+p_2)^3\phi(p_1+p_2) \ \ge \ 0.$

{\bf Part 2:} Positivity of $\mathfrak{H}^2_\phi$.

Second, $\mathfrak{H}_\phi^2(p_1,p_2)$ can also be written as an operator of $\phi$
\begin{equation}
\label{L2intermPhi}
\begin{aligned}
\mathfrak{H}^2_\phi(p_1,p_2)
 \ = & \ (p_1+p_2)^2p_1[\phi(p_1+p_2) - \phi(p_1)] + (p_1+p_2)^2p_2[\phi(p_1+p_2) - \phi(p_2)].
\end{aligned}
\end{equation}
The monotonicity of $\phi$ gives $ \mathfrak{H}^2_\phi (p_1,p_2)\ge 0$ for all $p_1\ge p_2\ge 0$. 

\end{proof}

\section{Existence of weak solutions and  energy cascade}\label{Sec:ProofEnergyCascade}
\subsection{Existence of weak solutions}\label{Sec:Existence}
{In this section, we will show the existence of weak solutions in the sense of Definition 8. The proof is standard and is similar with the existence proof of Kierkels and Vel\'azquez  \cite{kierkels2015transfer}. It is the  classical regularized,  kernel cutting off strategy, commonly used for the  coagulation-fragmentation equation  (see, for instance, \cite{dubovski1996existence,saha2015singular}) and  the classical homogeneous Boltzmann equation (see, for instance, \cite{Arkeryd:1972:OBE}).}
\subsubsection{Regularized Equation}\label{Sec:RegularizedEquation}
The lemma below shows the existence of weak solutions for the regularized model. 

\begin{lemma}\label{Lemma:RegularizedEquationExistence}
Let $\varepsilon\in(0,1)$, $n\in \mathbb{N}$, and $g_0\in\mathfrak{D}([0,\infty])$,  $g_0\ge 0$ be arbitrary and
$$\int_{[0,\infty]}g_0(p)p^2 d \mu(p)<\infty.$$ Then there exists at least one function $g\in C([0,\infty):\mathfrak{D}([0,\infty]))$, $g \ge 0$, that for all $t\in[0,\infty)$ and all $\phi\in C^1([0,\infty):C([0,\infty]))$ satisfies
\begin{equation}\label{Lemma:RegularizedEquationExistence:1}
\begin{split}
&\int_{[0,\infty]}\phi(t,p)g(t,p)p^2 d \mu(p)-\int_{[0,\infty]}\phi(0,p)g_0(p)p^2 d \mu(p)\\
&\indent\begin{split}
=\int_0^t\bigg[&\int_{[0,\infty]}\phi_s(s,p)g(s,p)p^2 d \mu(p)\\&\indent+2\int_{p_1>p_2\ge0} {g(s,p_1)p_1^2g(s,p_2)p_2^2}\mathfrak{H}^{1,\epsilon,n}_{\phi(s,\cdot)}(p_1,p_2) d \mu(p_1) d \mu(p_2)\\
&\indent + \int_{p_1=p_2\ge0} {g(s,p_1)p_1^2g(s,p_2)p_2^2}\mathfrak{H}^{2,\epsilon,n}_{\phi(s,\cdot)}(p_1,p_2) d \mu(p_1) d \mu(p_2)\bigg] d s,
\end{split}
\end{split}
\end{equation}
where 
\begin{equation}\label{L1n}
\begin{aligned}
\mathfrak{H}^{1,\varepsilon,n}_\phi(p_1,p_2) \ = & \ \frac{1}{(p_1+\varepsilon)(p_2+\varepsilon)}\big[|p_1\wedge n+p_2\wedge n|^3\phi(p_1+p_2)+|p_1\wedge n-p_2\wedge n|^3\phi(p_1-p_2)\\
& -2((p_1\wedge n)^3+(p_1\wedge n)(p_2\wedge n)^2)\phi(p_1)-4(p_1\wedge n)(p_2\wedge n)^2\phi(p_2)\big],
\end{aligned}
\end{equation}
and
\begin{equation}\label{L2n}\begin{aligned}
\mathfrak{H}^{2,\varepsilon,n}_\phi(p_1,p_2) \ = & \ \frac{1}{(p_1+\varepsilon)(p_2+\varepsilon)}|p_1\wedge n+p_2\wedge n|^2[|p_1\wedge n+p_2\wedge n|\phi(p_1+p_2)\\
& -(p_1\wedge n)\phi(p_1)-(p_2\wedge n)\phi(p_2)\big],\end{aligned}
\end{equation}
with $a\vee	 b \ = \ \max\{a,b\}, \ \ \ \ 
a\wedge b \ = \ \min\{a,b\},\ \ \ \  a,b\in\mathbb{R}.$

Moreover, $$\int_{[0,\infty]}p^2g(t,p)d \mu(p)=\int_{[0,\infty]}p^2g(0,p)d \mu(p).$$

In the case that, we have in addition
$\int_{[0,\infty]}g_0(p)p d \mu(p)<\infty.$
The same result holds true except the test function $\phi$ belongs to  $\phi\in C^1([0,\infty):\mathfrak{M})$, where $\mathfrak{M}$ is defined in Definition \ref{Def:ContinuousSpaces}.
\end{lemma}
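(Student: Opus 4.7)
\medskip

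\noindent\textbf{Proof Proposal.} The plan is to exploit the fact that the two regularizations make the collision operator genuinely bounded: the factor $\tfrac{1}{(p_1+\varepsilon)(p_2+\varepsilon)}$ removes the low-momentum singularity, while $p\wedge n$ caps the polynomial growth. Concretely, for every bounded continuous $\phi$ on $[0,\infty]$ one has
\begin{equation*}
\big|\mathfrak{H}^{1,\varepsilon,n}_\phi(p_1,p_2)\big| + \big|\mathfrak{H}^{2,\varepsilon,n}_\phi(p_1,p_2)\big| \;\le\; \frac{C\, n^3}{\varepsilon^2}\,\|\phi\|_{L^\infty},
\end{equation*}
so the bilinear map $(g,h)\mapsto \tilde{Q}^{\varepsilon,n}[g,h]$ is bounded from $\mathfrak{D}([0,\infty])\times\mathfrak{D}([0,\infty])$, equipped with the total variation norm on the measure $p^2 g\,d\mu$, into itself. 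First, I would therefore reformulate \eqref{Lemma:RegularizedEquationExistence:1} as a mild integral equation $g(t) = g_0 + \int_0^t \tilde{Q}^{\varepsilon,n}[g(s),g(s)]\,ds$ in the Banach space $Y:=\{h : hp^2\in\mathfrak{D}([0,\infty])\}$ with norm $\|h\|_Y=\int_{[0,\infty]} hp^2\,d\mu$.

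Second, because the operator is bilinear and bounded on $Y$, a standard Picard iteration (or Banach fixed-point on a closed ball in $C([0,T];Y)$ with $T$ small depending on $\varepsilon$, $n$, and $\|g_0\|_Y$) yields a unique local mild solution. To preserve positivity I would split $\tilde{Q}^{\varepsilon,n}[g] = G^{\varepsilon,n}[g] - g\cdot L^{\varepsilon,n}[g]$, where $G^{\varepsilon,n}[g]\ge 0$ collects the gain terms and $L^{\varepsilon,n}[g]\ge 0$ is a multiplier (linear in $g$), and rewrite the equation by Duhamel as $g(t,p) = g_0(p)\exp\bigl(-\!\int_0^t L^{\varepsilon,n}[g(s)](p)\,ds\bigr) + \int_0^t G^{\varepsilon,n}[g(s)](p)\exp\bigl(-\!\int_s^t L^{\varepsilon,n}[g(\tau)](p)\,d\tau\bigr)\,ds$; a monotone iteration starting from the nonnegative function $g_0$ then yields a nonnegative fixed point which, by the uniqueness already established, coincides with the solution.

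Third, I would extend the local solution to all of $[0,\infty)$ via the conservation of the second moment. Testing the weak formulation with $\phi\equiv 1$ and setting $a:=p_1\wedge n$, $b:=p_2\wedge n$, the numerators of $\mathfrak{H}^{1,\varepsilon,n}_1$ and $\mathfrak{H}^{2,\varepsilon,n}_1$ reduce to
\begin{equation*}
(a+b)^3+(a-b)^3-2(a^3+ab^2)-4ab^2 \;=\; 2a^3+6ab^2-2a^3-6ab^2 \;=\; 0
\end{equation*}
and $(a+b)^2\bigl[(a+b)-a-b\bigr]=0$ respectively, so both vanish identically for $p_1\ge p_2\ge 0$. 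This proves $\|g(t)\|_Y=\|g_0\|_Y$ for all $t$ in the interval of existence, giving the uniform bound needed to iterate the local argument and produce a global solution satisfying \eqref{Lemma:RegularizedEquationExistence:1} on $[0,\infty)$.

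Finally, for the extended statement with $\int g_0\, p\,d\mu<\infty$, the additional test functions in $\mathfrak{M}$ (those with $p\phi\in C_c([0,\infty))$) require control of the first moment $M_1(t):=\int gp\,d\mu$. A direct estimate of $\mathfrak{H}^{1,\varepsilon,n}_\phi$ and $\mathfrak{H}^{2,\varepsilon,n}_\phi$ for $\phi(p)=1/p$ using Proposition \ref{Propo:L4Boundedness} yields $|\partial_t M_1|\le C(\varepsilon,n)\|g\|_Y M_1$, so Gr\"onwall's inequality shows $M_1(t)$ remains finite on every bounded time interval, and the weak formulation extends to $\phi\in C^1([0,\infty):\mathfrak{M})$ by density (Proposition \ref{Propo:L3Boundedness}). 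The main technical obstacle is the positivity argument, since the naive Picard iteration in the Banach space of signed measures does not automatically preserve sign; the Duhamel splitting above is what makes this step rigorous.
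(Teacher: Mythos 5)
Your argument is correct in substance, but it takes a genuinely different route from the paper. The paper does not work directly with the quadratic operator on measures: it first mollifies one factor with a kernel $\omega_\lambda$, writes the regularized equation in Duhamel form with the exponential of the loss multiplier $\mathcal{P}_\lambda$, obtains a nonnegative fixed point of the resulting operator $\mathcal{O}_\lambda$ by Schauder's theorem (compactness via Arzel\`a--Ascoli), extends it globally using conservation of $\int g\,p^2\,d\mu$, and only then removes the mollification by a weak$^*$ compactness argument as $\lambda\to0$, at which stage the diagonal term with $\mathfrak{H}^{2,\varepsilon,n}$ is recovered in the limit. You instead use the bound $|\mathfrak{H}^{1,\varepsilon,n}_\phi|+|\mathfrak{H}^{2,\varepsilon,n}_\phi|\le C n^3\varepsilon^{-2}\|\phi\|_{L^\infty}$ to view the regularized collision operator, via Riesz representation, as a bounded bilinear map on finite measures in total variation norm, and run a Banach/Picard contraction; this dispenses with the mollification and the $\lambda\to0$ limit, treats the diagonal term directly, and yields uniqueness of the regularized solution, which the paper's Schauder argument does not provide. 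Your conservation step ($\phi\equiv1$ makes both numerators vanish identically) and the global continuation are exactly the paper's Step 2. Two points should be tightened: the Duhamel map is not monotone (increasing $g$ increases both the gain and the loss exponent), so instead of a ``monotone iteration'' you should simply note that the Duhamel-form map preserves the cone of nonnegative measures and is a contraction for small $T$, whence its fixed point is nonnegative and, being a mild solution, coincides with yours; and in the $\mathfrak{M}$-case, testing with $\phi=1/p$ is itself an instance of the formulation you are trying to establish, so the first-moment bound should first be derived by testing with the truncations $\min\{1/p,\,1/\delta\}\in C([0,\infty])$ and letting $\delta\to0$ by monotone convergence, before extending to all of $\mathfrak{M}$ by the density argument.
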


\begin{proof}
We only study the case when the condition $\int_{[0,\infty]}g_0(p)p^2 d \mu(p)<\infty,$
holds true. The other case can be done by a similar argument.

By setting  $g|p|^2$, $g_0|p|^2$ to be $g$, $g_0$, we reduce \eqref{Lemma:RegularizedEquationExistence:1} to a simpler equation
\begin{equation}\label{Lemma:RegularizedEquationExistence:E0}
\begin{split}
&\int_{[0,\infty]}\phi(t,p)g(t,p) d \mu(p)-\int_{[0,\infty]}\phi(0,p)g_0(p) d \mu(p)\\
&\indent\begin{split}
=\int_0^t\bigg[&\int_{[0,\infty]}\phi_s(s,p)g(s,p)d \mu(p)\\&\indent+2\int_{p_1>p_2\ge0} {g(s,p_1)g(s,p_2)}\mathfrak{H}^{1,\varepsilon,n}_{\phi(s,\cdot)}(p_1,p_2) d \mu(p_1) d \mu(p_2)\\
&\indent + \int_{p_1=p_2\ge0} {g(s,p_1)g(s,p_2)}\mathfrak{H}^{2,\varepsilon,n}_{\phi(s,\cdot)}(p_1,p_2) d \mu(p_1) d \mu(p_2)\bigg] d s.
\end{split}
\end{split}
\end{equation}
We then study \eqref{Lemma:RegularizedEquationExistence:E0} instead of \eqref{Lemma:RegularizedEquationExistence:1}.

{\bf Step 1: Local Existence.}

First, let us prove the local existence using a fixed point argument. To this end, we suppose that $g_0$ is non-zero, otherwise the proof is trivial. Now,  let $T\in(0,\infty)$  be determined later. Set $\omega$ to be a function in $C_c^\infty((-1,1))$  and  for  $\lambda\in(0,1)$ define $\omega_{\lambda}(x):=\frac1{\lambda}\omega\left(\frac{x}{\lambda}\right).$ 

Define the operator $\mathcal{O}_{\lambda}: C([0,T]:\mathfrak{D}([0,\infty]))\to C([0,T]:\mathfrak{D}([0,\infty]))$ 
in the following manner:  

For all $g\in C([0,T]:\mathfrak{D}([0,\infty]))$, all $t\in[0,T]$ and all $\phi\in C([0,\infty])$
\begin{equation}\label{Lemma:RegularizedEquationExistence:E1}\nonumber
\begin{split}
\int_{[0,\infty]}\phi(p)\mathcal{O}_{\lambda}[g](t,p)d \mu(p)&=\int_{[0,\infty]}\phi(p)g_0(p)e^{-\int_0^t\mathcal{P}_{\lambda}[g(s,\cdot)](p)d s}d \mu(p)\\
&\hspace{-32pt}+\int_0^t\int_{[0,\infty]}\phi(p)e^{-\int_s^t\mathcal{P}_{\lambda}[g(\sigma,\cdot)](p)d\sigma}\mathcal{Q}_{\lambda}[g(s,\cdot)](p)d \mu(p)\,d s,
\end{split}
\end{equation}
where $\mathcal{P}_{\lambda}:\mathfrak{D}([0,\infty])\rightarrow C_0([0,\infty))$ is defined as follows
\begin{equation}\label{Lemma:RegularizedEquationExistence:E2}\nonumber
\mathcal{P}_{\lambda}[g](p_1):=4\int_0^{p_1}((p_1\wedge n)^3+(p_1\wedge n)(p_2\wedge n)^2)\frac{(\omega_\lambda * g)(p_2)d \mu(p_2)}{{(p_1+\varepsilon)(p_2+\varepsilon)}},
\end{equation}
$$+\ 8\int_{-\infty}^\infty\int_{p_1}^\infty(p_1\wedge n)^2(p_2\wedge n) \frac{g(p_2)}{{(p_1+\varepsilon)(p_2+\varepsilon)}}\omega_\lambda(x-p_1)d \mu(p_2)d\mu(x).$$
and  $\mathcal{Q}_{\lambda}:\mathfrak{D}([0,\infty])\rightarrow\mathfrak{D}([0,\infty])$ is such that for all $g\in\mathfrak{D}([0,\infty])$ and all $\phi\in C([0,\infty])$ 
\begin{equation}\label{Lemma:RegularizedEquationExistence:E3}\nonumber
\int_{[0,\infty)}\phi(p)\mathcal{Q}_{\lambda}[g](p)d \mu(p)=\iint_{\{p_1>p_2\geq0\}}{g(p_1)(\omega_{\lambda}* g)(p_2)}\mathcal{K}(p_1,p_2)d \mu(p_1)d \mu(p_2),
\end{equation}
with
\begin{equation*}
\begin{aligned}
\mathcal{K}(p_1,p_2) \ = & \ \frac{2}{(p_1+\varepsilon)(p_2+\varepsilon)}\big[|p_1\wedge n+p_2\wedge n|^3\phi(p_1+p_2)+|p_1\wedge n-p_2\wedge n|^3\phi(p_1-p_2)\big].
\end{aligned}
\end{equation*}

It is clear that the two operators $\mathcal{P}_{\lambda}$ and $\mathcal{Q}_{\lambda}$ are well defined and  $\mathcal{O}_{\lambda}$ maps the space $C([0,T]:\mathfrak{D}([0,\infty]))$ into itself. Now, since $\mathcal{P}_{\lambda}$ and $\mathcal{Q}_{\lambda}$ are continuous operators on $\mathfrak{D}([0,\infty])$,  the operator $t\mapsto\mathcal{O}_{\lambda}[g](t,\cdot)$ is continuous on $[0,T]$.

Define the norms $$\|h\|_0=\sup_{\phi\in C([0,\infty]),\|\phi\|_{\infty}\leq1}\int_{[0,\infty]}\phi(p)h(p)d \mu(p) \mbox{ for } h\in\mathfrak{D}([0,\infty])$$
$$\mbox{  and } \|g\|_{*}=\sup_{t\in[0,T]}\|g(t,\cdot)\|_0 \mbox{ for } g\in C([0,T]:\mathfrak{D}([0,\infty])).$$

Taking into account the positivity of   $\mathcal{P}_{\lambda}$ for $g,g_0\ge0$, we find 
$\|\mathcal{O}_{\lambda}[g]\|_{*}\leq\|g_0\|_0+\tfrac{CTn^3}{\varepsilon^2}\|g\|_{*}^2,
$ for $g,g_0\ge0$ and some universal constant $C>0$. Now, let us define the fixed-point set $\mathfrak{X}_T:=\{g\in C([0,T]:\mathfrak{D}([0,\infty])):\|g\|_{*}\leq2\|g_0\|_0 \mbox{ and } g\ge 0\}.$
It is clear that $\mathfrak{X}_T$ is invariant under $\mathcal{O}_{\lambda}$ under the smallness condition of the time interval $T\leq\frac{\varepsilon^2}{Cn^3\|g_0\|_0}.$

Choosing  $g\in C([0,\infty):\mathfrak{D}([0,\infty]))$, and $t_1\in[0,T]$ and $t_2\in[t_1,T]$. Let $\phi\in C([0,\infty])$ with $\|\phi\|_\infty\leq1$. 
\begin{equation*}
\begin{aligned}
& \bigg|\int_{[0,\infty]}\phi(p)\mathcal{O}_{\lambda}[g](t_2,p)d \mu(p)-\int_{[0,\infty]}\phi(p)\mathcal{O}_{\lambda}[g](t_1,p)d \mu(p)\bigg|\\
\leq &\int_{t_1}^{t_2}\|\mathcal{Q}_{\lambda}[g(s,\cdot)]\|_*d s+\|g_0\|_0\,\left\|\int_{t_1}^{t_2}\mathcal{P}_{\lambda}[g(s,\cdot)](\cdot)d s\right\|_\infty\\
&+\int_0^{t_1}\|\mathcal{Q}_{\lambda}[g(s,\cdot)]\|_*d s\,\left\|\int_{t_1}^{t_2}\mathcal{P}_{\lambda}[g(s,\cdot)](\cdot)d s\right\|_\infty,
\end{aligned}
\end{equation*}

Now, suppose that $T\leq\frac{\varepsilon^2}{Cn^3\|g_0\|}$ and $g\in \mathfrak{X}_T$, the following holds true
\begin{equation*}
\begin{aligned}
& \bigg|\int_{[0,\infty]}\phi(p)\mathcal{O}_{\lambda}[g](t_2,p)d \mu(p)-\int_{[0,\infty]}\phi(p)\mathcal{O}_{\lambda}[g](t_1,p)d \mu(p)\bigg|\\
\le &\  C\left(\frac{n^3}{\varepsilon^2}\|g_0\|_0^2+\frac{Tn^6}{\varepsilon^4}\|g_0\|_0^3\right)\,\big|t_2-t_1\big|,
\end{aligned}
\end{equation*}
for some universal constant $C>0$.

By a classical argument, Arzel\`a-Ascoli theorem then implies that the operator  $\mathcal{O}_{\lambda}$ is a compact operator on $\mathfrak{X}_T$. Now, by Schauder's fixed point theorem there exists a fixed point $g_{\lambda}$ in the set $\mathfrak{X}_T$ such that $\mathcal{O}_{\lambda}[g_{\lambda}]\equiv g_{\lambda}$ on $[0,T]\times[0,\infty]$. As a consequence, $g_{\lambda}$  solves
\begin{equation*}
\begin{split}
\int_{[0,\infty]}\phi(p)g_{\lambda}(t,p)d \mu(p)&=\int_{[0,\infty]}\phi(p)g_0(p)e^{-\int_0^t\mathcal{P}_{\lambda}[g_{\lambda}(s,\cdot)](p)d s}d \mu(p)\\
&\hspace{-32pt}+\int_0^t\int_{[0,\infty]}\phi(p)e^{-\int_s^t\mathcal{P}_{\lambda}[g_{\lambda}(\sigma,\cdot)](p)d\sigma}\mathcal{Q}_{\lambda}[g_{\lambda}(s,\cdot)](p)d \mu(p)\,d s,
\end{split}
\end{equation*}
for  $\phi\in C([0,\infty])$ and  $t\in[0,T]$. In addition, we have
\begin{equation}\label{Lemma:RegularizedEquationExistence:E8aaa}
\begin{aligned}
\partial_t\bigg[{\int_{[0,\infty]}\phi(p)g_{\lambda}(t,p)d \mu(p)}\bigg] \ = & \ 2\iint_{\{p_1>p_2\geq0\}}{g_{\lambda}(t,p_1)(\omega_{\lambda}* g_{\lambda}(t,\cdot))(p_2)}\frac{1}{(p_1+\varepsilon)(p_2+\varepsilon)}\times\\
&\times \big[|p_1\wedge n+p_2\wedge n|^3\phi(p_1+p_2)+|p_1\wedge n-p_2\wedge n|^3\phi(p_1-p_2)\\
&-2((p_1\wedge n)^3+(p_1\wedge n)(p_2\wedge n)^2)\phi(p_1)\big]d \mu(p_1)d \mu(p_2)\\
&- 8\int_{-\infty}^\infty\iint_{\{p_2>p_1\geq0\}}{g_{\lambda}(t,p_2)\omega_{\lambda}(p-p_1) g_{\lambda}(t,p_1)}\frac{1}{(p_2+\varepsilon)(p_1+\varepsilon)}\times\\
&\times( p_2\wedge n) (p_1\wedge n)^2\phi(p_1)d\mu(p)d \mu(p_1)d \mu(p_2).
\end{aligned}
\end{equation}
Permuting the roles of $p_1$ and $p_2$ in the term containing $( p_2\wedge n) (p_1\wedge n)^2$ of the above equation, we obtain
\begin{equation}\label{Lemma:RegularizedEquationExistence:E8a}
\begin{aligned}
\partial_t\bigg[{\int_{[0,\infty]}\phi(p)g_{\lambda}(t,p)d \mu(p)}\bigg] 
\ = & \ 2\iint_{\{p_1>p_2\geq0\}}{g_{\lambda}(t,p_1)(\omega_{\lambda}* g_{\lambda}(t,\cdot))(p_2)}\frac{1}{(p_1+\varepsilon)(p_2+\varepsilon)}\times\\
&  \times \big[|p_1\wedge n+p_2\wedge n|^3\phi(p_1+p_2)+|p_1\wedge n-p_2\wedge n|^3\phi(p_1-p_2)\\
& -2((p_1\wedge n)^3+(p_1\wedge n)(p_2\wedge n)^2)\phi(p_1)\big]d \mu(p_1)d \mu(p_2)\\
& - 8\iint_{\{p_1>p_2\geq0\}}{g_{\lambda}(t,p_1)(\omega_{\lambda}* g_{\lambda}(t,\cdot))(p_2)}\frac{1}{(p_1+\varepsilon)(p_2+\varepsilon)}\times\\
&  \times( p_1\wedge n) (p_2\wedge n)^2\phi(p_2)d \mu(p_1)d \mu(p_2).
\end{aligned}
\end{equation}
Using the fact that $(\omega_{\lambda} * g_{\lambda}(s,\cdot))$ is smooth, then the integral over $\{p_1=p_2\geq0\}$ of  $g_{\lambda}(s,\cdot)\times(\omega_{\lambda} * g_{\lambda}(s,\cdot))(\cdot)$ is zero and hence the second term on the right hand side is zero
\begin{equation}\label{Lemma:RegularizedEquationExistence:E8}
\begin{aligned}
\partial_t\bigg[{\int_{[0,\infty]}\phi(p)g_{\lambda}(t,p)d \mu(p)}\bigg]
\ =  \  & 2\iint_{\{p_1>p_2\geq0\}}{g_{\lambda}(t,p_1)(\omega_{\lambda}* g_{\lambda}(t,\cdot))(p_2)}\frac{1}{(p_1+\varepsilon)(p_2+\varepsilon)}\times\\
&  \times \big[|p_1\wedge n+p_2\wedge n|^3\phi(p_1+p_2)+|p_1\wedge n-p_2\wedge n|^3\phi(p_1-p_2)\\
& -2((p_1\wedge n)^3+(p_1\wedge n)(p_2\wedge n)^2)\phi(p_1)\big]d \mu(p_1)d \mu(p_2)\\
&- 8\iint_{\{p_1>p_2\geq0\}}{g_{\lambda}(t,p_1)(\omega_{\lambda}* g_{\lambda}(t,\cdot))(p_2)}\frac{1}{(p_1+\varepsilon)(p_2+\varepsilon)}\times\\
& \times( p_1\wedge n) (p_2\wedge n)^2\phi(p_2)d \mu(p_1)d \mu(p_2)\\
&+ \ \int_{\{p_1=p_2\ge0\}}{g_{\lambda}(t,p_1)(\omega_{\lambda}* g_{\lambda}(t,\cdot))(p_2)}\mathfrak{H}^{2,\epsilon,n}_{\phi}(p_1,p_2) d \mu(p_1) d \mu(p_2).
\end{aligned}
\end{equation}

Equation \eqref{Lemma:RegularizedEquationExistence:E8} is then equivalent with
\begin{equation}\label{Lemma:RegularizedEquationExistence:E9}
\begin{split}
&\int_{[0,\infty]}\phi(p) g_{\lambda}(t,p)d \mu(p)-\int_{[0,\infty]}\phi(p)g_0(p)d \mu(p)\\
\hspace{8pt}=\ &\ \int_0^t\iint_{[0,\infty]^2}{g_{\lambda}(s,p_1\vee p_2)(\omega_{\lambda} * g_{\lambda}(s,\cdot))(p_1\wedge p_2)}\frac{1}{(p_1+\varepsilon)(p_2+\varepsilon)}\times\\
&  \times \big[|p_1\wedge n+p_2\wedge n|^3\phi(p_1+p_2)+|p_1\wedge n-p_2\wedge n|^3\phi(p_1-p_2)\\
& -2((p_1\wedge n)^3+(p_1\wedge n)(p_2\wedge n)^2)\phi(p_1)\big]d \mu(p_1)d \mu(p_2)\,d s\\
&- 4\int_0^t\iint_{[0,\infty]^2}{g_{\lambda}(t,p_1\vee p_2)(\omega_{\lambda}* g_{\lambda}(t,\cdot))(p_1\wedge p_2)}\frac{1}{(p_1+\varepsilon)(p_2+\varepsilon)}\times\\
& \times( p_1\wedge n) (p_2\wedge n)^2\phi(p_2)d \mu(p_1)d \mu(p_2)ds.
\end{split}
\end{equation}

By choosing $\phi\equiv1$ in \eqref{Lemma:RegularizedEquationExistence:E9}, we find the conservation of energy
$$\int_{[0,\infty]} g_{\lambda}(t,p)d \mu(p)=\int_{[0,\infty]}g_0(p)d \mu(p)$$
 for all $t\in[0,T]$.

{\bf Step 2: Global Existence.}

We will use the classical argument to prove that the regularized equation has a global solution. Choosing the initial datum $\tilde{g}_0:=g(T,\cdot)$, applying again the argument of Step 1 to \eqref{WeakTurbulenceInitial} in the interval $[T,2T]$, we can find a solution $g_{\lambda}\in C([T,2T]:\mathfrak{D}([0,\infty]))$ so that
\begin{equation}\label{Lemma:RegularizedEquationExistence:E10}
\begin{split}
\int_{[0,\infty]}\phi(p)g_{\lambda}(t,p)d \mu(p)&=\int_{[0,\infty)}\phi(p)g_{\lambda}(T,p)e^{-\int_T^t\mathcal{P}_{\lambda}[g_{\lambda}(s,\cdot)](p)d s}d \mu(p)\\
&\hspace{-32pt}+\int_T^t\int_{[0,\infty)}\phi(p)e^{-\int_s^t\mathcal{P}_{\lambda}[g_{\lambda}(\sigma,\cdot)](p)d\sigma}\mathcal{Q}_{\lambda}[g_{\lambda}(s,\cdot)](p)d \mu(p)\,d s,
\end{split}
\end{equation}
for all $\phi\in C([0,\infty])$ and all $t\in[T,2T]$.

Note that from Step 1, we observe that $T$ depends only on $n,\varepsilon$ and $\lambda$ since the energy is conserved $\|\tilde{g}_0\|_{L^1}=\|g_0\|_{L^1}.$ Therefore, the existence of the solution $g_\lambda$ on $[T,2T]$ is guaranteed. Repeating this argument on $[2T,3T],[3T,4T],[4T,5T],\dots$ we obtain for any $\lambda\in(0,1)$ a global weak solution $g_{\lambda}\in C([0,\infty):\mathfrak{D}([0,\infty]))$ of \eqref{WeakTurbulenceInitial} for all $\phi\in C([0,\infty])$ and all $t\in[0,\infty)$.

{\bf Step 3: The limit $\lambda \to 0$.}

We  will now take the limit $\lambda\rightarrow0$. Since $\{g_\lambda\}_{\lambda\in(0,1)}$ is bounded, as well as equicontinuous, by Arzel\`a-Ascoli theorem, we deduce that  the family $\{g_\lambda\}_{\lambda\in(0,1)}$ is precompact in $C([0,\infty):\mathfrak{D}([0,\infty]))$.  
Therefore, there exists  $g\in C([0,\infty):\mathfrak{D}([0,\infty]))$, such that
 $\int_{[0,\infty]}g(t,p)d \mu(p)=\|g_0\|_{L^1}$ for all $t\in[0,\infty)$, and a sequence ${\lambda}_m\rightarrow0$ such that $g_{\lambda_m}\stackrel{\ast}{\rightharpoonup} g$  uniformly and locally in $t$ on $[0,\infty)$. 
 Therefore, for a time independent test function $\phi$, the left hand side of \eqref{Lemma:RegularizedEquationExistence:E8}  converges  to the left hand side of \eqref{Lemma:RegularizedEquationExistence:E0} and $\omega_{\lambda_m} *  g_{\lambda_m}(t,\cdot)\stackrel{\ast}{\rightharpoonup} g(t,\cdot),$ locally uniformly in $t$ on $[0,\infty)$.

As a result, the right hand side of \eqref{Lemma:RegularizedEquationExistence:E8} converges to
\begin{equation}\label{Lemma:RegularizedEquationExistence:E11}\nonumber
\begin{aligned}
& \iint_{[0,\infty]^2}{g(t,p_1)g(t,p_2)}\mathfrak{H}^{1,\varepsilon,n}_{\phi}(p_1,p_2)d \mu(p_1)d \mu(p_2)\,\\
=& \indent 2\int_{p_1>p_2\ge0} {g(t,p_1)g(t,p_2)}\mathfrak{H}^{1,\varepsilon,n}_{\phi}(p_1,p_2) d \mu(p_1) d \mu(p_2)\\
& + \int_{p_1=p_2\ge0} {g(t,p_1)g(t,p_2)}\mathfrak{H}^{2,\varepsilon,n}_{\phi}(p_1,p_2) d \mu(p_1) d \mu(p_2),
\end{aligned}
\end{equation} in which  with $\phi$ is time independent.

When the test function is time dependent $\phi\in C^1([0,\infty):C([0,\infty]))$,  the  linear term in  \eqref{Lemma:RegularizedEquationExistence:E8} will appear. The linear term will converge to the first term on the right hand side of \eqref{Lemma:RegularizedEquationExistence:E0}. As a consequence, the  function $g$ will satisfy \eqref{Lemma:RegularizedEquationExistence:E0} for all $t\in[0,\infty)$ and all $\phi\in C^1([0,\infty):C([0,\infty]))$.

\end{proof}
\subsubsection{Energy at \{p=0\}}\label{Sec:MassInfinity}

In this section we will prove that $\int_{\{0\}}g(t,p)p^2d \mu(p)=0$ for all $t\in(0,\infty)$. The following estimate is true independently of $\varepsilon\in(0,1)$, $n\in\mathbb{N}$ and will allow us to take the limit $\varepsilon\rightarrow0$, $n\rightarrow\infty$.
\begin{lemma}\label{Lemma:MassInfinity}
Let  $g_0\in\mathfrak{D}([0,\infty])$. Suppose that $gp^2\in C([0,\infty):\mathfrak{D}([0,\infty]))$ ,  $g\ge0$ satisfies \eqref{Lemma:RegularizedEquationExistence:1} for all $t\in[0,\infty)$. Then given $\rho\in (0,1),R\in(0,\infty)$ the following holds true
\begin{equation}\label{Lemma:MassInfinity:1}
\int_{[R\rho,\infty]}g(t,p)p^2d \mu(p)\geq(1-\rho)\int_{[R,\infty]}g_0(p)p^2d \mu(p)\text{ for all }t\in[0,\infty).
\end{equation}
\end{lemma}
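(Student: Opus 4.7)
The plan is to test the regularized weak formulation \eqref{Lemma:RegularizedEquationExistence:1} against the special family $\phi_r(p) = (1-r/p)_+$ from Proposition \ref{Propo:L1L2Positivity}, taking $r := R\rho$. The test function $\phi_{R\rho}$ lies in $C([0,\infty])$ with $\phi_{R\rho}(\infty)=1$ and so is admissible in \eqref{Lemma:RegularizedEquationExistence:1} as a time-independent test function. It is engineered so that pointwise $\phi_{R\rho} \le \1_{[R\rho,\infty]}$ on $[0,\infty]$, while $\phi_{R\rho}(p) = 1 - R\rho/p \ge 1-\rho$ for $p \ge R$, i.e.\ $\phi_{R\rho} \ge (1-\rho)\1_{[R,\infty]}$. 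Once I show that $t \mapsto \int_{[0,\infty]} g(t,p) p^2 \phi_{R\rho}(p) d\mu(p)$ is non-decreasing, squeezing this quantity between the two indicator integrals will close the argument.

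The technical core is verifying non-negativity of the regularized kernels $\mathfrak{H}^{1,\varepsilon,n}_{\phi_r}$ and $\mathfrak{H}^{2,\varepsilon,n}_{\phi_r}$ for all $r,\varepsilon>0$ and $n\in\N$. The second is immediate: $\mathfrak{H}^{2,\varepsilon,n}_{\phi_r}(p,p)$ equals $8(p\wedge n)^3/(p+\varepsilon)^2$ times $\phi_r(2p) - \phi_r(p) \ge 0$, by monotonicity of $\phi_r$. For the first, I would strip the positive prefactor $((p_1+\varepsilon)(p_2+\varepsilon))^{-1}$ and use the identity $(q_1+q_2)^3 + (q_1-q_2)^3 = 2q_1(q_1^2+q_2^2) + 4q_1 q_2^2$ with $q_i := p_i \wedge n$ to rewrite the bracket as
\begin{equation*}
(q_1+q_2)^3 \bigl[\phi_r(p_1+p_2) - \phi_r(p_1)\bigr] + 4 q_1 q_2^2 \bigl[\phi_r(p_1) - \phi_r(p_2)\bigr] - (q_1-q_2)^3 \bigl[\phi_r(p_1) - \phi_r(p_1-p_2)\bigr].
\end{equation*}
By monotonicity of $\phi_r$ all three bracketed differences are non-negative, so the first two summands are $\ge 0$ and only the last can be negative. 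A case analysis parallel to Proposition \ref{Propo:L1L2Positivity} (on whether each of $p_1+p_2, p_1, p_1-p_2, p_2$ lies above or below $r$) reduces the problem, in the non-trivial subcase where all four arguments exceed $r$, to the elementary inequality $(q_1+q_2)^3(p_1-p_2) \ge (q_1-q_2)^3(p_1+p_2)$ after substituting $\phi_r(a) - \phi_r(b) = r(a-b)/(ab)$. In the untruncated subcase $p_1 \le n$ this becomes $(p_1+p_2)^2 \ge (p_1-p_2)^2$, trivial; in the truncated subcase $p_1 > n \ge p_2$ (so $q_1 = n, q_2 = p_2$) it reduces to $n^2(3p_1 - n) + p_2^2(p_1 - 3n) \ge 0$, which holds because $p_2 \le n$ forces the second term to have absolute value at most $2n^3$, bounded by the first.

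The main obstacle is therefore this verification that the truncation $p_i \mapsto p_i \wedge n$ preserves the delicate cancellation producing positivity in the un-regularized Proposition \ref{Propo:L1L2Positivity}; the calculation above shows it does. Once non-negativity of both $\mathfrak{H}^{j,\varepsilon,n}_{\phi_{R\rho}}$ is in hand, inserting $\phi_{R\rho}$ into \eqref{Lemma:RegularizedEquationExistence:1} yields
\begin{equation*}
\int_{[0,\infty]} g(t,p) p^2 \phi_{R\rho}(p) d\mu(p) \ \ge \ \int_{[0,\infty]} g_0(p) p^2 \phi_{R\rho}(p) d\mu(p) \qquad \text{for all } t \ge 0,
\end{equation*}
and the pointwise sandwich chains into
\begin{equation*}
\int_{[R\rho,\infty]} g(t,p) p^2 d\mu(p) \ \ge \ \int g(t,p) p^2 \phi_{R\rho} d\mu \ \ge \ \int g_0 p^2 \phi_{R\rho} d\mu \ \ge \ (1-\rho) \int_{[R,\infty]} g_0(p) p^2 d\mu(p),
\end{equation*}
which is exactly \eqref{Lemma:MassInfinity:1}.
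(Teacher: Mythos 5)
Your argument is correct and is essentially the paper's own proof: test the regularized formulation with the time-independent function $\phi_{R\rho}(p)=\left(1-\frac{R\rho}{p}\right)_+\in C([0,\infty])$, use the resulting monotonicity in time of $\int_{[0,\infty]}g(t,p)p^2\phi_{R\rho}(p)\,d\mu(p)$, and squeeze $\phi_{R\rho}$ between $(1-\rho)\1_{[R,\infty]}$ and $\1_{[R\rho,\infty]}$ — the paper's proof is exactly this chain with $\kappa=R\rho$. The one place you go beyond the paper is the explicit check that the truncation $p_i\mapsto p_i\wedge n$ and the prefactor $((p_1+\varepsilon)(p_2+\varepsilon))^{-1}$ preserve the positivity of the kernels (the paper silently relies on Proposition \ref{Propo:L1L2Positivity}, stated only for the un-truncated kernels), and that check is sound: the algebraic regrouping, the reduction to $(q_1+q_2)^3(p_1-p_2)\ge(q_1-q_2)^3(p_1+p_2)$, and the truncated bound $n^2(3p_1-n)+p_2^2(p_1-3n)\ge 0$ all verify, with the subcases where $p_2$ or $p_1-p_2$ falls below $r$ reducing to the same inequality via $\phi_r(x)\ge 1-\frac{r}{x}$, as your sketched case analysis indicates.
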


\begin{proof}
For  $\rho\in(0,1)$, fix $\kappa\in(0,R)$ and set $\phi(p)=\phi_\kappa(p)=\left(1-\frac{\kappa}{ p}\right)_+$ defined in Proposition \ref{Propo:L1L2Positivity}, then 
\begin{equation}\label{Lemma:MassInfinity:E1}
\begin{split}
\int_{[\kappa,\infty]}g(t,p)p^2d \mu(p)&\geq\int_{[0,\infty]}\left(1-\frac{\kappa}{ p}\right)_+g(t,p)p^2d \mu(p)\geq\int_{[0,\infty]}\left(1-\frac{\kappa}{ p}\right)_+g_0(p)p^2d \mu(p)\\&\geq\left(1-\frac{\kappa }{R}\right)\int_{[R,\infty]}g_0(p)p^3d \mu(p)\text{ for all }t\in[0,\infty).
\end{split}
\end{equation}
Choosing $\kappa=\rho R$, we obtain the conclusion of the lemma.
\end{proof}
\begin{corollary}\label{cor:nomassatinfty}
Let $g_0p^2 \in\mathfrak{D}([0,\infty])$. Suppose that $gp^2\in C([0,\infty):\mathfrak{D}([0,\infty]))$,  $g\ge0$ satisfies \eqref{Lemma:RegularizedEquationExistence:1} for all $t\in[0,\infty)$ and all $\phi\in C^1([0,\infty):C([0,\infty]))$. 
Then  ~$\int_{\{0\}}g(t,p)p^2 d \mu(p)\equiv0$ for all $t\in[0,\infty)$.

\end{corollary}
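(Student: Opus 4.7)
The strategy is to combine the conclusion of Lemma \ref{Lemma:MassInfinity} with the energy conservation established in Lemma \ref{Lemma:RegularizedEquationExistence}. Heuristically, Lemma \ref{Lemma:MassInfinity} already says that (almost all of) the energy initially sitting in $[R,\infty]$ must remain in the slightly enlarged interval $[R\rho,\infty]$ for every later time, so no energy can be transported down to the singleton $\{0\}$; combined with conservation of total energy, this forces the mass at $\{0\}$ to stay at its (vanishing) initial value.

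The first step is to fix $\rho\in(0,1)$ in Lemma \ref{Lemma:MassInfinity} and send $R\downarrow 0$. The families $\{[R\rho,\infty]\}_{R>0}$ and $\{[R,\infty]\}_{R>0}$ are both nested and increase to $(0,\infty]$ as $R\downarrow 0$; since $g(t,p)p^{2}$ and $g_{0}(p)p^{2}$ define finite nonnegative Borel measures on $[0,\infty]$, continuity of measure on increasing sequences of sets yields
\begin{equation*}
\int_{(0,\infty]} g(t,p)p^{2}\,d\mu(p)\ \geq\ (1-\rho)\int_{(0,\infty]} g_{0}(p)p^{2}\,d\mu(p).
\end{equation*}
Letting $\rho\downarrow 0$ then gives $\int_{(0,\infty]} g(t,p)p^{2}\,d\mu(p)\geq\int_{(0,\infty]} g_{0}(p)p^{2}\,d\mu(p)$.

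Next, I would invoke the energy conservation from Lemma \ref{Lemma:RegularizedEquationExistence}, namely $\int_{[0,\infty]}g(t,p)p^{2}\,d\mu(p)=\int_{[0,\infty]}g_{0}(p)p^{2}\,d\mu(p)$. Subtracting the previous inequality from this identity, and decomposing $[0,\infty]=\{0\}\sqcup(0,\infty]$, one obtains
\begin{equation*}
\int_{\{0\}} g(t,p)p^{2}\,d\mu(p)\ \leq\ \int_{\{0\}} g_{0}(p)p^{2}\,d\mu(p).
\end{equation*}
The right-hand side equals zero: the measure denoted by $g_{0}(p)p^{2}\,d\mu(p)$ is absolutely continuous with respect to $p^{2}\,d\mu(p)$, which assigns no mass to $\{0\}$ because the weight $p^{2}$ vanishes there and $d\mu$ agrees with Lebesgue measure on $[0,\infty)$. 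Consequently $\int_{\{0\}}g(t,p)p^{2}\,d\mu(p)=0$ for all $t\in[0,\infty)$.

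The argument is essentially mechanical once Lemma \ref{Lemma:MassInfinity} is in hand; the only subtle point is the justification that $\int_{\{0\}}g_{0}(p)p^{2}\,d\mu(p)=0$, which rests on the interpretation of the initial datum as a measure with the built-in weight $p^{2}$ and is not an analytic hurdle. Everything else is continuity of finite measures on monotone sequences of sets followed by a single algebraic subtraction.
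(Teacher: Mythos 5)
Your argument is correct and is essentially the paper's own proof: both rest on Lemma \ref{Lemma:MassInfinity} with parameters chosen so that the intervals exhaust $(0,\infty]$, combined with the conservation of $\int_{[0,\infty]}g\,p^2\,d\mu(p)$ obtained by taking $\phi\equiv 1$ in \eqref{Lemma:RegularizedEquationExistence:1}; your two successive limits $R\downarrow 0$ then $\rho\downarrow 0$ are just a cosmetic variant of the paper's single coupled limit $\rho=1/R$, $R\to\infty$. Your explicit justification that $\int_{\{0\}}g_0(p)p^2\,d\mu(p)=0$ makes precise a point the paper uses only implicitly when asserting that the right-hand side tends to $\|g_0p^2\|_1$.
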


\begin{proof}
Using $\phi(p)\equiv 1$ in \eqref{Lemma:RegularizedEquationExistence:1}, it follows that
$\int_{[0,\infty]}g(t,p)p^2d \mu(p)=\|g_0p^2\|_1$ for all $t\in[0,\infty)$.
Moreover, for any $t\in[0,\infty)$ if follows from Lemma \ref{Lemma:MassInfinity} that
\begin{equation}\nonumber
\|g_0p^2\|_1=\int_{[0,\infty]}g(t,p)p^2d \mu(p)\geq\int_{\left[\frac{1}{R^2},\infty\right]}g(t,p)p^2d \mu(p)\geq\frac{R-1}{R}\int_{\left[\frac{1}{R},\infty\right]}g_0(p)p^2d \mu(p).
\end{equation}
The right hand side  tends to $\|g_0p^2\|_1$ as $R\rightarrow\infty$, we then find $\int_{\{0\}}g(t,p)p^2d \mu(p)=0$ for all $t\in[0,\infty)$. 
\end{proof}

\subsubsection{Existence result}\label{Sec:ExistenceResult}
In this proposition, we will show the existence of weak solutions in the sense of Definition \ref{Def:WeakSolution}, passing the solutions of the regularized model to the limit. 

\begin{proposition}[Existence of weak solutions]\label{Propo:Existence}
Given any $g_0p^2\in\mathfrak{D}([0,\infty])$, $g_0\ge0$ and
$$\int_{[0,\infty]}g_0(p)p^2d\mu(p)<\infty$$ there exists at least one weak solution $g\in C([0,\infty):\mathfrak{D}([0,\infty]))$,  $g\ge0$  in the sense of Definition \ref{Def:WeakSolution}  that satisfies $g(0,\cdot)=g_0$. Moreover,
$$\int_{[0,\infty]}g(t,p)p^2d\mu(p)\ = \  \int_{[0,\infty]}g_0(p)p^2d\mu(p)\ <\ \infty, \mbox{ for } t\ge 0.$$

If in addition, 
$$\int_{[0,\infty]}g_0(p)pd\mu(p)<\infty$$ there exists at least one weak solution $g\in C([0,\infty):\mathfrak{D}([0,\infty]))$,  $g\ge0$ to \eqref{WeakTurbulenceEnergy} that satisfies $g(0,\cdot)=g_0$. Moreover,
$$\int_{[0,\infty]}g(t,p)p^2d\mu(p)\ = \  \int_{[0,\infty]}g_0(p)p^2d\mu(p)\ <\ \infty,  \mbox{ for } t\ge 0,$$
$$\int_{[0,\infty]}g(t,p)pd\mu(p)\ \le  \  \int_{[0,\infty]}g_0(p)pd\mu(p)\ <\ \infty, \mbox{ for } t\ge 0.$$

\end{proposition}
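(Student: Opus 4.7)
The plan is to obtain $g$ as a weak$^*$ limit of the regularized solutions provided by Lemma~\ref{Lemma:RegularizedEquationExistence}. First, applying that lemma with a diagonal sequence $(\varepsilon_k,n_k)=(1/k,k)$, I get nonnegative $g_k\in C([0,\infty):\mathfrak{D}([0,\infty]))$ satisfying the regularized weak formulation with kernels $\mathfrak{H}^{1,\varepsilon_k,n_k}$, $\mathfrak{H}^{2,\varepsilon_k,n_k}$ and the energy identity $\int_{[0,\infty]} g_k(t,p)p^2 d\mu(p)=\|g_0 p^2\|_{L^1}$ for every $t\geq 0$. This yields a uniform $L^1$ bound on the measures $g_k(t,\cdot)p^2$; Lemma~\ref{Lemma:MassInfinity} further supplies uniform tightness near $\{\infty\}$ and Corollary~\ref{cor:nomassatinfty} rules out any atom at $\{0\}$.

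Next, I extract a convergent subsequence. Since $[0,\infty]$ is compact Hausdorff and $C([0,\infty])$ is separable, Banach--Alaoglu gives weak$^*$ precompactness of $\{g_k(t,\cdot)p^2\}$ in $\mathfrak{D}([0,\infty])$ at each fixed $t$. To obtain time-equicontinuity I insert suitable test functions into the regularized weak formulation and control the collision contribution via Propositions~\ref{Propo:L1Boundedness}--\ref{Propo:L4Boundedness}; the pointwise bounds $|\mathfrak{H}^1_\phi(p_1,p_2)|\lesssim p_1 p_2$ and $|\mathfrak{H}^2_\phi(p,p)|\lesssim p^2$, valid uniformly in $k$ for the regularized kernels, imply $\bigl|\partial_t\int \phi\, g_k p^2\, d\mu\bigr|\leq C(\phi)\,\|g_0 p^2\|_{L^1}^2$, so the curves $t\mapsto g_k(t,\cdot)p^2$ are equicontinuous in the weak$^*$ metric on $\mathfrak{D}([0,\infty])$. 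Arzelà--Ascoli then extracts a subsequence converging weakly$^*$ uniformly on compact time intervals to some $g\in C([0,\infty):\mathfrak{D}([0,\infty]))$, necessarily nonnegative.

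Now I pass to the limit in the weak formulation. Fix a test function $\phi$ in the class of Definition~\ref{Def:WeakSolution}~(i) (respectively~(ii)). The linear terms $\int \phi g_k p^2\, d\mu$ and $\int \phi_s g_k p^2\, d\mu$ converge by weak$^*$ convergence plus dominated convergence in $t$. For the quadratic collision integral I use that, pointwise on $[0,\infty]^2$, the regularized kernel satisfies $\mathfrak{H}^{1,\varepsilon_k,n_k}_\phi(p_1,p_2)\,(p_1+\varepsilon_k)(p_2+\varepsilon_k)\to\mathfrak{H}^1_\phi(p_1,p_2)$ (since $|p\wedge n_k|^3\to p^3$ even at $\infty$), with uniform domination by a multiple of $p_1 p_2$. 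Combined with the weak$^*$ convergence of the product measures $(g_k p^2 d\mu)^{\otimes 2}$ and the tightness from Lemma~\ref{Lemma:MassInfinity}, a standard dominated convergence argument transfers the limit inside, recovering the unregularized collision term. Taking $\phi\equiv 1$, for which $\mathfrak{H}^1_1\equiv 0 \equiv \mathfrak{H}^2_1$ by direct computation, yields conservation of energy on $[0,\infty]$.

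For the second statement, the additional assumption $\int g_0 p\, d\mu<\infty$ allows one to test against $\phi(p)=1/p\in\mathfrak{M}$. A direct computation gives $\mathfrak{H}^1_{1/p}(p_1,p_2)=-4p_1 p_2$ and $\mathfrak{H}^2_{1/p}(p,p)=-4p^2$, both nonpositive, so $\int g(t,p) p\, d\mu$ is non-increasing in $t$, which is the stated inequality. The principal obstacle throughout is the passage to the limit in the collision integral: the factor $1/((p_1+\varepsilon_k)(p_2+\varepsilon_k))$ becomes singular at the origin as $\varepsilon_k\to 0$ while the truncation $p\wedge n_k$ forces growth at infinity, and these effects must be balanced using the $p^2$ weight carried by the measure and the cancellation encoded in $\mathfrak{H}^1_\phi$ from Section~\ref{Sec:CollisionStructure}. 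A secondary subtle point is the diagonal contribution $\mathfrak{H}^2$, which is suppressed by the smoothing $\omega_\lambda$ at the regularized level but must be recovered at the limit from any atomic portion of the limit measure, analogously to the treatment in the coagulation--fragmentation literature.
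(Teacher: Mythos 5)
Your proposal is correct and follows essentially the same route as the paper: extract the limit of the regularized solutions of Lemma \ref{Lemma:RegularizedEquationExistence} via the uniform-in-$(\varepsilon,n)$ kernel bounds of Propositions \ref{Propo:L1Boundedness}--\ref{Propo:L4Boundedness}, time-equicontinuity and Arzel\`a--Ascoli, pass to the limit in the quadratic term using the convergence of $\mathfrak{H}^{1,\varepsilon,n}_\phi$ together with weak$^*$ convergence of the product measures, and obtain the mass inequality in case (ii) by testing with $\phi=1/p$, exactly as in the paper (your kernel-convergence statement is the same as the paper's once the weights $g_kp_1^2g_kp_2^2$ versus $gp_1\,gp_2$ are tracked). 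The only cosmetic difference is that you fix a diagonal sequence $(\varepsilon_k,n_k)$ at the outset and invoke Lemma \ref{Lemma:MassInfinity} for ``tightness,'' which is not actually needed since $[0,\infty]$ is already compact; the paper instead treats the full two-parameter family and uses the density of $\mathcal{V}$ in $C([0,\infty])$ to upgrade equicontinuity to all test functions.
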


\begin{proof} We consider first the case 
$$\int_{[0,\infty]}g_0(p)p^2dp<\infty.$$
From the previous propositions, we know that for any $\varepsilon\in(0,1)$, $n\in\mathbb{N}$ there exists a solution $g_{\varepsilon,n}\in C([0,\infty):\mathfrak{D}([0,\infty)))$ of \eqref{Lemma:RegularizedEquationExistence:1} for all $t\in[0,\infty)$ and all $\phi\in C^1([0,\infty):C([0,\infty]))$. 

Let us denote the collection of these solutions by $\mathfrak{S}=\{g_{\varepsilon,n}\}_{\varepsilon\in(0,1),n\in\mathbb{N}}.$ For any test function $\phi\in \mathcal{V}([0,\infty]) \subset C^1([0,\infty])$  and any times $t_1,t_2\in[0,\infty)$, equation \eqref{Lemma:RegularizedEquationExistence:1} implies the following estimate for any $g_{\varepsilon\in(0,1),n\in\mathbb{N}}\in\mathfrak{S}$, 
\begin{equation}\label{Lipschitz}
\bigg|\int_{[0,\infty]}\phi(p)g_\varepsilon(t_2,p)p^2 d \mu(p)-\int_{[0,\infty]}\phi(p)g_\varepsilon(t_1,p)p^2 d \mu(p)\bigg|\leq|t_2-t_1|C_{\phi,\varepsilon,n}\|g_0p^2\|_{L^1}^2,
\end{equation}
where, due to Proposition \ref{Propo:L4Boundedness},
$
C_{\phi,\varepsilon,n}  =  C\left[\sup_{p_1>p_2\ge0 } \frac{\mathfrak{H}^1_\phi(p_1,p_2)}{p_1p_2}\ + \ \sup_{p\ge 0 } \frac{\mathfrak{H}^2_\phi(p,p)}{p^2}\right]\|\phi\|_\mathcal{V},
$ in which $C$ is a universal constant.

Now, since $\mathcal{V}([0,\infty])$ is dense in $C([0,\infty])$ according to Proposition \ref{Propo:L3Boundedness}, for any $\phi\in C([0,\infty])$ the family of mappings $t\mapsto\int_{[0,\infty)}\phi(x)g_{\varepsilon,n}(t,p)p^2d \mu(p)$ with $g_{\varepsilon,n}\in\mathfrak{S}$, is uniformly continuous on $[0,\infty)$.

As a consequence, applying Arzel\`a-Ascoli theorem again, we find that the family $\mathfrak{S}$ is precompact in $C([0,\infty):\mathfrak{D}([0,\infty]))$. By Corollary \ref{cor:nomassatinfty} there  exist sequences $\varepsilon_m\rightarrow0, n_m\to\infty$ and some function $g\in C([0,\infty):\mathfrak{D}([0,\infty]))$ such that $\|g(t,\cdot)p^2\|_{L^1}=\|g_0p^2\|_{L^1}$ for all $t\in[0,\infty)$ and $g_{\varepsilon_m,n_m}(t,p)p^2\stackrel{\ast}{\rightharpoonup} g(t,p)p^2$ locally uniformly in $t$ on $[0,\infty)$. Using the definition of of weak$^*$ convergence, we infer that the left hand side and the first term on the right hand side of \eqref{Lemma:RegularizedEquationExistence:1} converge to the corresponding terms in \eqref{WeakTurbulenceInitial}. 
In addition, it is also clear that $g_{\varepsilon_m,n_m}(s,p_1)p_1g_{\varepsilon_m,n_m}(s,p_2)p_2\stackrel{\ast}{\rightharpoonup} g(s,p_1)p_1g(s,p_2)p_2$ in $\mathfrak{D}([0,\infty]^2)$, locally uniformly in $s\in[0,\infty)$. Moreover, we find
\begin{equation*}
\begin{aligned}
\mathfrak{H}^{1,\varepsilon,n}_\phi(p_1,p_2)
\longrightarrow & \ \ \ \  \frac{1}{p_1p_2}\big[|p_1+p_2|^3\phi(p_1+p_2)+|p_1-p_2|^3\phi(p_1-p_2)\\
& -2(p_1^3+p_1p_2^2)\phi(p_1)-4p_1p_2^2\phi(p_2)\big]
\end{aligned}
\end{equation*} uniformly in $\{p_1\geq p_2\geq 0\}$ due to Proposition \ref{Propo:L4Boundedness}. The conclusion of the Proposition then follows.

Now, if in addition
$\int_{[0,\infty]}g_0(p)pd\mu(p)<\infty,$ 
then the above argument still holds true, except $C([0,\infty])$ is replaced by $\mathfrak{M}$ (see Definition \ref{Def:ContinuousSpaces}). And in inequality  \eqref{Lipschitz}, the constant $C_{\phi,\varepsilon,n}$ depends on the constants $c_1, c_2$ of Proposition \ref{Propo:L4Boundedness}. By choosing the test function $\phi=1/p$, we get
$$\int_{[0,\infty]}g(t,p)pd\mu(p)\ \le  \  \int_{[0,\infty]}g_0(p)pd\mu(p)\ <\ \infty, \mbox{ for } t\ge 0.$$

\end{proof}

\subsection{Monotonicity of the energy at $\{\infty\}$ and trivial solutions}\label{Sec:Monoton}
The following proposition indicates that the energy  at $\{\infty\}$ is indeed non-decreasing.
\begin{proposition}[Monotonicity of the measure of $\{\infty\}$]\label{Propo:Monoton}
Let $g$, $gp^2\in C([0,\infty):\linebreak\mathfrak{D}([0,\infty]))$,  $g\ge0$ be a weak solution in the sense of Definition \ref{Def:WeakSolution}. Then the mapping $$t\mapsto\int_{\{\infty\}}g(t,p)p^2d \mu(p)$$ is nondecreasing on $[0,\infty)$.
\end{proposition}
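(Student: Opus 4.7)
The plan is to exploit the special class of test functions $\phi_r(p)=(1-r/p)_+$ introduced in Proposition \ref{Propo:L1L2Positivity}, together with the observation that $\phi_r(p)\to\mathbf{1}_{\{p=\infty\}}$ pointwise as $r\to\infty$. This reduces the monotonicity statement to an essentially one-line consequence of the positivity of $\mathfrak{H}^1_{\phi_r}$ and $\mathfrak{H}^2_{\phi_r}$.

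First I would fix an arbitrary $r>0$ and plug the time-independent test function $\phi=\phi_r$ into the weak formulation \eqref{Def:WeakSolution:E1} (equivalently its reduced form \eqref{Def:WeakSolution:New} permitted by Proposition \ref{Propo:L1L2Positivity}). Since $\phi_r\in C([0,\infty])$ and satisfies the regularity hypotheses of Propositions \ref{Propo:L1Boundedness}–\ref{Propo:L2Boundedness}, the collision-type integrands $\mathfrak{H}^1_{\phi_r}(p_1,p_2)$ and $\mathfrak{H}^2_{\phi_r}(p_1,p_2)$ are well defined and (by Proposition \ref{Propo:L1L2Positivity}) pointwise non-negative for all $p_1\ge p_2\ge 0$. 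Consequently, for any $0\le t_1<t_2$, the identity \eqref{Def:WeakSolution:E1} gives
\begin{equation*}
\int_{[0,\infty]}\phi_r(p)\,g(t_2,p)\,p^2\,d\mu(p)\;\ge\;\int_{[0,\infty]}\phi_r(p)\,g(t_1,p)\,p^2\,d\mu(p),
\end{equation*}
so the function $t\mapsto A_r(t):=\int_{[0,\infty]}\phi_r(p)g(t,p)p^2\,d\mu(p)$ is non-decreasing on $[0,\infty)$.

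Next I would let $r\to\infty$. For every finite $p\in[0,\infty)$ one has $\phi_r(p)=0$ as soon as $r>p$, while $\phi_r(\infty)=1$ for every $r>0$. Since $0\le\phi_r\le 1$ uniformly and $g(t,\cdot)p^2\in\mathfrak{D}([0,\infty])$, dominated convergence with respect to the finite measure $g(t,p)p^2\,d\mu(p)$ yields
\begin{equation*}
\lim_{r\to\infty}A_r(t)\;=\;\int_{\{\infty\}}g(t,p)\,p^2\,d\mu(p)
\end{equation*}
for every $t\in[0,\infty)$. Because each $A_r$ is monotone non-decreasing in $t$ and the pointwise limit of monotone non-decreasing functions is itself monotone non-decreasing, the mapping $t\mapsto\int_{\{\infty\}}g(t,p)p^2\,d\mu(p)$ is non-decreasing on $[0,\infty)$.

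The only delicate point in this plan is justifying that $\phi_r$ is an admissible test function in Definition \ref{Def:WeakSolution} and that the convergence $A_r(t)\to\int_{\{\infty\}}g(t,p)p^2d\mu(p)$ may be taken pointwise in $t$ rather than merely in some weak sense. The former is handled by checking the norms in $\mathcal V([0,\infty])$ (which was done in Proposition \ref{Propo:L1L2Positivity}), and the latter is harmless since we only need pointwise monotonicity; no uniformity in $t$ is required to conclude that the limit inherits the monotonicity property. Thus no genuine obstacle arises beyond carefully invoking the positivity lemma.
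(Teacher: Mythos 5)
Your proof is correct and follows essentially the same route as the paper: positivity of $\mathfrak{H}^1_{\phi_r}$ and $\mathfrak{H}^2_{\phi_r}$ gives monotonicity of $t\mapsto\int_{[0,\infty]}\phi_r(p)g(t,p)p^2\,d\mu(p)$, and the measure of $\{\infty\}$ is recovered by sending $r\to\infty$. The paper phrases the last step as an infimum over the family $\{\phi_r\}$ (the infimum of nondecreasing functions being nondecreasing) rather than a pointwise limit via dominated convergence, but since $\phi_r$ is monotone in $r$ these are the same argument.
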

\begin{proof}
Let $\nu\in\mathfrak{D}([0,\infty))$, the following holds true  $$\int_{\{\infty\}}\nu(p)p^3 d \mu(p)=\inf_{\phi_r}\int_{[0,\infty)}\phi_r(p)\nu(p)p^3d \mu(p),$$
in which the infimum is taken over all increasing functions $\phi_r\in C([0,\infty])$  defined in Proposition \ref{Propo:L1L2Positivity}. 

For any of these test functions, it is clear that $\mathfrak{H}^1_{\phi_r(p)},\mathfrak{H}^2_{\phi_r(p)} \geq0$. As a consequence, the mappings $t\mapsto\int_{[0,\infty]}\phi_r(p)g(t,p)p^2d \mu(p)$ are nondecreasing on $[0,\infty)$. 
Therefore $t\mapsto\int_{\{\infty\}}g(t,p)p^2d \mu(p)$ is also nondecreasing on $[0,\infty)$, due to the fact that it is the infimum of a collection of nondecreasing functions.
\end{proof}

A consequence of the previous proposition is the following result.
\begin{corollary}\label{cor:trivialsolutions}[Stationary (trivial) solutions]
Suppose that $g\in C([0,\infty):\mathfrak{D}([0,\infty]))$ is a weak solution  in the sense of Definition \ref{Def:WeakSolution}  satisfying $g(0,p)p^2\equiv E\delta_{\{p=\infty\}}$ for some $E\in[0,\infty)$. Then $g$ is a trivial solution of \eqref{WeakTurbulenceEnergy} in the sense $g(t,p)p^2\equiv E\delta_{\{p=\infty\}}$ for all $t\in[0,\infty)$.
\end{corollary}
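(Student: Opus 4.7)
The plan is to deduce the conclusion directly from two ingredients already established: the conservation of the total mass of $gp^2$ on $[0,\infty]$ obtained by testing against the constant function, and the monotonicity of the measure at $\{\infty\}$ given by Proposition \ref{Propo:Monoton}.

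First, I would take $\phi\equiv 1$ in the weak formulation \eqref{Def:WeakSolution:E1}. This is admissible for both versions of Definition \ref{Def:WeakSolution}, since $1\in C([0,\infty])$ and $C([0,\infty])\subset\mathfrak{M}$. A direct computation from \eqref{L1} and \eqref{L2} shows that
\begin{equation*}
\mathfrak{H}^1_1(x,y) \;=\; (x+y)^3 + (x-y)^3 - 2(x^2+y^2)x - 4xy^2 \;=\; 0, \qquad \mathfrak{H}^2_1(x,y) \;=\; (x+y)^2\bigl[(x+y)-x-y\bigr] \;=\; 0,
\end{equation*}
so \eqref{Def:WeakSolution:E1} collapses to the conservation identity
\begin{equation*}
\int_{[0,\infty]} g(t,p)\,p^2 \, d\mu(p) \;=\; \int_{[0,\infty]} g_0(p)\,p^2 \, d\mu(p) \;=\; E \qquad \text{for all } t\ge 0.
\end{equation*}

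Second, Proposition \ref{Propo:Monoton} guarantees that the mapping $t\mapsto\int_{\{\infty\}} g(t,p)\,p^2\,d\mu(p)$ is non-decreasing on $[0,\infty)$. The hypothesis $g(0,p)p^2\equiv E\delta_{\{p=\infty\}}$ forces the initial value of this mapping to equal $E$, hence
\begin{equation*}
\int_{\{\infty\}} g(t,p)\,p^2 \, d\mu(p) \;\ge\; E \;=\; \int_{[0,\infty]} g(t,p)\,p^2 \, d\mu(p), \qquad t\ge 0.
\end{equation*}
Since $g\ge 0$ and $[0,\infty]=[0,\infty)\cup\{\infty\}$ disjointly, the two non-negative parts sum to $E$, which forces $\int_{[0,\infty)} g(t,p)\,p^2\,d\mu(p)=0$ and $\int_{\{\infty\}} g(t,p)\,p^2\,d\mu(p)=E$. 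Consequently $g(t,p)p^2\equiv E\delta_{\{p=\infty\}}$ in $\mathfrak{D}([0,\infty])$ for every $t\ge 0$, which is the claim.

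The argument is essentially a corollary of Proposition \ref{Propo:Monoton} combined with total mass conservation, so I do not anticipate any genuine obstacle. The only step worth verifying explicitly is the algebraic identity $\mathfrak{H}^1_1\equiv\mathfrak{H}^2_1\equiv 0$, which legitimises the substitution $\phi\equiv 1$ and produces the conservation law for the total measure of $gp^2$; that conservation is precisely what upgrades the monotonicity at $\{\infty\}$ into the concentration statement of the corollary.
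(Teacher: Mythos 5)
Your proof is correct and follows essentially the same route as the paper: the paper's own argument combines the monotonicity of $t\mapsto\int_{\{\infty\}}g(t,p)p^2d\mu(p)$ from Proposition \ref{Propo:Monoton} with conservation of the total measure of $gp^2$ (which the paper obtains, as you do, by taking $\phi\equiv1$, cf. Proposition \ref{Propo:Conservation}) to force $\int_{[0,\infty)}g(t,p)p^2d\mu(p)=0$ for all $t$. Your explicit check that $\mathfrak{H}^1_1\equiv\mathfrak{H}^2_1\equiv0$ is a fine way of making the conservation step self-contained, but it adds nothing beyond what the paper already invokes.
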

\begin{remark} In the context of quantum Boltzmann equations, $\delta_{\{p=0\}}$ is the trivial equilibrium \cite{Lu:2014:TBE}.
\end{remark}
\begin{proof}
Using the previous proposition
\begin{equation}\nonumber
\int_{\{p=\infty\}}g(0,p)p^2d \mu(p) = E\leq \int_{\{p=\infty\}}g(t,p)p^2d \mu(p)\text{ for all }t\in[0,\infty),
\end{equation}
which implies
\begin{equation}\nonumber
0\leq\int_{[0,\infty)}g(t,p)p^2d \mu(p)=E-\int_{\{p=\infty\}}g(t,p)p^2d \mu(p)\leq0\text{ for all }t\in[0,\infty).
\end{equation}

Therefore, $g(t,p)\equiv0$ on $[0,\infty)\times[0,\infty)$ and hence $g$ is trivial.
\end{proof}

\subsection{Conservation of energy}\label{Sec:Conservation}
\begin{proposition}[Conservation of energy]\label{Propo:Conservation}
Suppose that $gp^2\in C([0,\infty):\mathfrak{D}([0,\infty]))$,  $g\ge0$ is a weak solution  in the sense of Definition \ref{Def:WeakSolution}, then
\begin{equation}\label{Propo:Conservation:2}
\int_{[0,\infty]}p^2g(t,p)d \mu(p)=\int_{[0,\infty]}p^2g(0,p)d \mu(p)\text{ for all }t\in[0,\infty).
\end{equation}
\end{proposition}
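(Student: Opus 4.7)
The plan is to test the weak formulation with the constant function $\phi\equiv 1$. First I would verify that this is an admissible test function in the sense of Definition \ref{Def:WeakSolution}: the constant $1$ belongs to $C([0,\infty])$ because the limit $\lim_{p\to\infty}1$ exists, and hence the time-independent choice $\phi(t,p)\equiv 1$ lies in $C^1([0,\infty):C([0,\infty]))$ for case (i), and in $C^1([0,\infty):\mathfrak{M})$ for case (ii) since $C([0,\infty])\subset\mathfrak{M}$.

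Next I would compute the two functionals $\mathfrak{H}^1_1$ and $\mathfrak{H}^2_1$ directly from \eqref{L1}--\eqref{L2}. For $x\ge y\ge 0$ the expansion
\begin{equation*}
(x+y)^3+(x-y)^3 = 2x^3+6xy^2 = 2(x^2+y^2)x + 4xy^2
\end{equation*}
shows that $\mathfrak{H}^1_1(x,y)=0$, and trivially $\mathfrak{H}^2_1(x,y)=(x+y)^2\bigl[(x+y)-x-y\bigr]=0$. These two cancellations are the microscopic expression of energy conservation on the resonant manifold $\{p=p_1+p_2,\;|p|=|p_1|+|p_2|\}$: multiplying the collision integral by the energy $|p|$ and summing yields zero.

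Substituting $\phi\equiv 1$ into the weak formulation \eqref{Def:WeakSolution:E1}, the term $\phi_s$ vanishes identically since $\phi$ is time-independent, and both the $\mathfrak{H}^1$ and $\mathfrak{H}^2$ integrands vanish pointwise by the computation above. What remains is
\begin{equation*}
\int_{[0,\infty]} g(t,p)|p|^2\, d\mu(p)\;-\;\int_{[0,\infty]} g_0(p)|p|^2\, d\mu(p)\;=\;0,
\end{equation*}
which, upon recalling $g=f|p|$ and multiplying through, is exactly \eqref{Propo:Conservation:2}.

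There is no real obstacle here beyond checking admissibility: the key facts (that constants lie in $C([0,\infty])$ and that $\mathfrak{H}^1_\phi,\mathfrak{H}^2_\phi$ annihilate the energy test function) are immediate, so the proposition is really a one-line consequence of the weak formulation once the cancellation identities are recorded. This is the analogue of testing the Boltzmann equation against a collision invariant, and it is precisely this identity that motivates the extension of the phase space to $[0,\infty]$: the energy balance holds on the extended line, but as Proposition \ref{Propo:Monoton} and the later estimates will show, part of this conserved quantity escapes from $[0,\infty)$ into the point mass at $\{\infty\}$.
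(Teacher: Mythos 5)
Your proof is correct and follows essentially the same route as the paper, which simply tests the weak formulation with $\phi\equiv 1$; your explicit verification that $\mathfrak{H}^1_1=\mathfrak{H}^2_1=0$ and that constants are admissible test functions just spells out what the paper leaves implicit. No gap here.
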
\begin{proof}
\begin{remark}
\end{remark}

\eqref{Propo:Conservation:2} follows immediately from Lemma \ref{Lemma:WeakFormulation} by choosing the test function $\phi\equiv1$.

\end{proof}

\subsection{Cascade and accumulation of energy toward $\{\infty\}$}\label{Sec:LongTime}
The following two lemmas prove  that the energy is accumulated at $\{\infty\}$. The first lemma compares the energy on $[R\rho,\infty]$ and $[R,\infty]$ for some numbers $R$ and $\rho$ at two different  times. The second one compares the energy between $[r,\infty]$ and $[0,r]$.

\begin{lemma}[Cascade of energy toward $\{\infty\}$]\label{Propo:Tightness}
Let $gp^2\in C([0,\infty):\mathfrak{D}([0,\infty]))$,  $g\ge0$ be a non-trivial weak solution  in the sense of Definition \ref{Def:WeakSolution} . Then given $\rho\in (0,1)$, $R\in(0,\infty)$ the following holds.
\begin{equation}\label{Propo:Tightness:1}
\int_{[R \rho, \infty]}g(t,p)p^2d \mu(p)\geq(1-\rho)\int_{[R,\infty]}g(0,p)p^2d \mu(p)\text{ for all }t\in[0,\infty)
\end{equation}
\end{lemma}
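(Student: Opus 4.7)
The plan is to mimic the proof of Lemma~\ref{Lemma:MassInfinity}, but carry it out at the level of the weak solution itself rather than the regularized approximation, using only the weak formulation and the positivity of the operators $\mathfrak{H}^1_\phi$, $\mathfrak{H}^2_\phi$ for the one-parameter family $\phi_\kappa$ from Proposition~\ref{Propo:L1L2Positivity}. The key input is that $\phi_\kappa(p)=(1-\kappa/p)_+$ lies in $C([0,\infty])$, is admissible as a test function in Definition~\ref{Def:WeakSolution}~(i), and satisfies $\mathfrak{H}^1_{\phi_\kappa}\ge 0$, $\mathfrak{H}^2_{\phi_\kappa}\ge 0$ pointwise.

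First I would fix $\rho\in(0,1)$, $R\in(0,\infty)$ and set $\kappa=\rho R$. Plugging $\phi\equiv\phi_\kappa$ into the weak formulation \eqref{Def:WeakSolution:E1}, the time derivative reduces (as in the proof of Proposition~\ref{Propo:Monoton}) to
\begin{equation*}
\frac{d}{dt}\int_{[0,\infty]}\phi_\kappa(p)\,g(t,p)p^2\,d\mu(p)
= 2\!\!\int_{p_1>p_2\ge 0}\!\!g(t,p_1)p_1 g(t,p_2)p_2\,\mathfrak{H}^1_{\phi_\kappa}(p_1,p_2)\,d\mu(p_1)d\mu(p_2) + \int_{p_1=p_2\ge 0}\!\!(\cdots)\,\mathfrak{H}^2_{\phi_\kappa}\ge 0,
\end{equation*}
since both $\mathfrak{H}^1_{\phi_\kappa}$ and $\mathfrak{H}^2_{\phi_\kappa}$ are nonnegative by Proposition~\ref{Propo:L1L2Positivity}. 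Therefore the map $t\mapsto\int_{[0,\infty]}\phi_\kappa(p)g(t,p)p^2\,d\mu(p)$ is nondecreasing on $[0,\infty)$, giving
\begin{equation*}
\int_{[0,\infty]}\phi_\kappa(p)\,g(t,p)p^2\,d\mu(p)\ \ge\ \int_{[0,\infty]}\phi_\kappa(p)\,g(0,p)p^2\,d\mu(p).
\end{equation*}

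Next I would sandwich both sides to get pointwise integrals on $[R\rho,\infty]$ and $[R,\infty]$. For the left-hand side, use $\phi_\kappa\le \mathbf 1_{[\kappa,\infty]}=\mathbf 1_{[R\rho,\infty]}$ to obtain
\begin{equation*}
\int_{[R\rho,\infty]}g(t,p)p^2\,d\mu(p)\ \ge\ \int_{[0,\infty]}\phi_\kappa(p)\,g(t,p)p^2\,d\mu(p).
\end{equation*}
For the right-hand side, note that on $[R,\infty]$ one has $\phi_\kappa(p)=1-\kappa/p\ge 1-\kappa/R=1-\rho$, so
\begin{equation*}
\int_{[0,\infty]}\phi_\kappa(p)\,g(0,p)p^2\,d\mu(p)\ \ge\ (1-\rho)\int_{[R,\infty]}g(0,p)p^2\,d\mu(p).
\end{equation*}
Chaining the three displayed inequalities yields exactly \eqref{Propo:Tightness:1}.

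I expect no serious obstacle; the only point requiring care is that the monotonicity identity above must be justified rigorously for the weak solution (as opposed to the regularized one), which amounts to checking that $\phi_\kappa$ is an admissible, time-independent test function in Definition~\ref{Def:WeakSolution}~(i). This is immediate from Proposition~\ref{Propo:L1L2Positivity} (which states $\phi_\kappa\in C([0,\infty])$ with the claimed sign properties), so the argument is essentially the pointwise test-function argument carried out in the regularized setting in Lemma~\ref{Lemma:MassInfinity}, transferred verbatim to the limit solution. The nontriviality hypothesis plays no role in the inequality itself.
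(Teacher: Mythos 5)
Your proposal is correct and follows essentially the same route as the paper: the paper proves this lemma by citing the proof of Lemma \ref{Lemma:MassInfinity}, which is exactly your chain of inequalities — take $\phi_\kappa(p)=(1-\kappa/p)_+$ with $\kappa=\rho R$, use the positivity of $\mathfrak{H}^1_{\phi_\kappa}$ and $\mathfrak{H}^2_{\phi_\kappa}$ from Proposition \ref{Propo:L1L2Positivity} to get monotonicity of $t\mapsto\int\phi_\kappa\, g(t,p)p^2\,d\mu(p)$, and then sandwich with $\phi_\kappa\le\mathbf{1}_{[R\rho,\infty]}$ and $\phi_\kappa\ge 1-\rho$ on $[R,\infty]$. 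Your remark that the admissibility of $\phi_\kappa$ as a time-independent test function in Definition \ref{Def:WeakSolution}~(i) is the only point needing verification, and that the monotonicity should be read off from the integrated weak formulation rather than a regularized equation, is exactly the (implicit) content of the paper's one-line proof.
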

\begin{proof} The proof is the same as the one of Lemma \ref{Lemma:MassInfinity}. 
\end{proof}
\begin{corollary}\label{cor:nomassatinftynew}
Let $gp^2\in C([0,\infty):\mathfrak{D}([0,\infty]))$,  $g\ge0$ be a non-trivial weak solution in the sense of Definition \ref{Def:WeakSolution}. Then if,
$$\int_{\{0\}}g(0,p)p^2d \mu(p)\equiv0,$$
we have
$$\int_{\{0\}}g(t,p)p^2d \mu(p)\equiv0$$ for all $t\in[0,\infty)$. 
\end{corollary}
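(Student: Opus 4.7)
The plan is to mimic the argument used for Corollary \ref{cor:nomassatinfty}, but with Lemma \ref{Propo:Tightness} (which is stated for the weak solutions of Definition \ref{Def:WeakSolution}) in place of Lemma \ref{Lemma:MassInfinity}, and with Proposition \ref{Propo:Conservation} supplying the global conservation of energy on the extended line. The role of the hypothesis $\int_{\{0\}} g(0,p)p^2 d\mu(p) = 0$ is to make the limit of the right-hand side of the tightness estimate exhaust the entire total energy $E$, which then forces the $\{0\}$-mass at time $t$ to vanish.

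The case $t=0$ is immediate from the hypothesis, so fix $t>0$ and set $E := \int_{[0,\infty]} g_0(p) p^2 d\mu(p)$; by Proposition \ref{Propo:Conservation}, $\int_{[0,\infty]} g(t,p) p^2 d\mu(p) = E$ as well. Apply Lemma \ref{Propo:Tightness} with the choice $R = 1/R'$ and $\rho = 1/R'$ for $R' > 1$; this yields
\begin{equation*}
\int_{[1/(R')^2, \infty]} g(t,p) p^2 d\mu(p) \ \geq \ \frac{R'-1}{R'} \int_{[1/R', \infty]} g(0,p) p^2 d\mu(p).
\end{equation*}

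Now let $R' \to \infty$. On the right, the sets $[1/R',\infty]$ increase to $(0,\infty]$, so by monotone convergence together with the assumption $\int_{\{0\}} g(0,p)p^2 d\mu(p) = 0$, the right-hand side tends to $\int_{(0,\infty]} g(0,p)p^2 d\mu(p) = E - 0 = E$. On the left, the sets $[1/(R')^2, \infty]$ also increase to $(0,\infty]$, so the left-hand side tends to $\int_{(0,\infty]} g(t,p)p^2 d\mu(p) = E - \int_{\{0\}} g(t,p)p^2 d\mu(p)$ by the conservation identity above. Passing to the limit in the inequality gives
\begin{equation*}
E - \int_{\{0\}} g(t,p)p^2 d\mu(p) \ \geq \ E,
\end{equation*}
and combining with $g \geq 0$ forces $\int_{\{0\}} g(t,p)p^2 d\mu(p) = 0$, as desired.

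There is no real obstacle here; the only point requiring a bit of care is the application of monotone convergence in the extended measure space $\mathfrak{D}([0,\infty])$. In particular, one must verify that the $\{\infty\}$-endpoint appears on both sides for every $R'$ and so contributes symmetrically in the two limits, and that the difference between the two limiting integrals is exactly the mass at $\{0\}$ at time $t$. This is transparent from the definition of $\mathfrak{B}([0,\infty])$ in Definition \ref{Def:RadonMeasure} and the finiteness of the total measure $E$.
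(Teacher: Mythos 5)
Your argument is correct and is essentially the paper's own proof: the paper proves Corollary \ref{cor:nomassatinftynew} by repeating the argument of Corollary \ref{cor:nomassatinfty}, i.e. combining the tightness estimate (Lemma \ref{Propo:Tightness}, with the choice $R=1/R'$, $\rho=1/R'$) with conservation of energy and letting $R'\to\infty$, exactly as you do. Your explicit use of Proposition \ref{Propo:Conservation} and of the hypothesis $\int_{\{0\}}g_0(p)p^2d\mu(p)=0$ when passing to the limit just makes the paper's sketch precise.
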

\begin{proof} The proof is the same as the one of Corollary \ref{cor:nomassatinfty}. 
\end{proof}
\begin{lemma}[Accumulation of energy toward $\{\infty\}$]\label{lemma:KeyLemma}
Suppose that $g$, $gp^2\in C([0,\infty):\mathfrak{D}([0,\infty]))$,  $g\ge0$ is a  non-trivial weak solution  in the sense of Definition \ref{Def:WeakSolution}. For any $r\in(0,\infty)$ and all $t\in[0,\infty)$ the following holds.
\begin{equation}\label{lemma:KeyLemma:1}
\int_{[r,\infty]}g(t,p)p^2d \mu(p)\geq2\int_0^t\iint_{[0,r]^2}{g(s,p_1)p_1^2g(s,p_2)p_2^2}{[p_1p_2]^\frac12}\left(1-\frac{r}{p_1+p_2}\right)_+d \mu(p_1)d \mu(p_2)\ d s.
\end{equation}
\end{lemma}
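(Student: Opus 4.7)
The natural test function to use in the weak formulation is $\phi_r(p)=(1-r/p)_+$ from Proposition~\ref{Propo:L1L2Positivity}: it is continuous on $[0,\infty]$, vanishes on $[0,r]$, is bounded by $1$, and (crucially) makes $\mathfrak{H}^1_{\phi_r}$ and $\mathfrak{H}^2_{\phi_r}$ pointwise non-negative with $\mathfrak{H}^1_{\phi_r}(p,p)=\mathfrak{H}^2_{\phi_r}(p,p)$. Being time-independent and lying in $C([0,\infty])$, it is admissible for Definition~\ref{Def:WeakSolution}, so the unified identity \eqref{Def:WeakSolution:New} gives
\begin{equation*}
\int_{[0,\infty]}\phi_r(p) g(t,p)p^2 d\mu(p) - \int_{[0,\infty]}\phi_r(p) g_0(p) p^2 d\mu(p) = \int_0^t\!\!\iint_{[0,\infty]^2}\!\! g(s,p_1)p_1\, g(s,p_2)p_2\, \mathfrak{H}^1_{\phi_r}(p_1,p_2)\, d\mu(p_1)d\mu(p_2)\,ds.
\end{equation*}

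\textbf{Bounding the left-hand side.} Since $0\le\phi_r\le 1$ with $\phi_r\equiv 0$ on $[0,r]$, we have $\int_{[0,\infty]}\phi_r(p) g(t,p)p^2 d\mu(p)\le \int_{[r,\infty]} g(t,p) p^2 d\mu(p)$. The initial integral $\int \phi_r g_0 p^2 d\mu$ is non-negative, so dropping it from the identity preserves the inequality in the correct direction.

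\textbf{Bounding the right-hand side from below.} By the pointwise positivity of $\mathfrak{H}^1_{\phi_r}$ (Proposition~\ref{Propo:L1L2Positivity}), the integrand on $[0,\infty]^2$ is non-negative, so restricting integration to $(p_1,p_2)\in [0,r]^2$ only decreases the right-hand side. For $(p_1,p_2)\in[0,r]^2$ every argument $p_1,p_2,|p_1-p_2|$ lies in $[0,r]$, hence $\phi_r$ vanishes there and formula \eqref{L1} reduces to
\begin{equation*}
\mathfrak{H}^1_{\phi_r}(p_1,p_2) = (p_1+p_2)^3 \phi_r(p_1+p_2) = (p_1+p_2)^3 \left(1 - \frac{r}{p_1+p_2}\right)_+.
\end{equation*}
Applying AM-GM, $(p_1+p_2)^3\ge 8(p_1p_2)^{3/2}\ge 2(p_1p_2)^{3/2}$, and regrouping $p_1 p_2\cdot (p_1p_2)^{3/2}=p_1^2p_2^2(p_1p_2)^{1/2}$, we get the pointwise bound
\begin{equation*}
g(s,p_1)p_1\, g(s,p_2)p_2\, \mathfrak{H}^1_{\phi_r}(p_1,p_2) \ge 2\, g(s,p_1)p_1^2\, g(s,p_2)p_2^2\, (p_1 p_2)^{1/2}\left(1-\frac{r}{p_1+p_2}\right)_+
\end{equation*}
on $[0,r]^2$. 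Integrating against $d\mu(p_1)d\mu(p_2)\,ds$ and combining with the LHS bound yields the claim.

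\textbf{Main obstacle.} The argument is essentially mechanical once the right test function is identified; the only mildly delicate point is checking that the kink of $\phi_r$ at $p=r$ does not disqualify it as a test function. Its continuity on $[0,\infty]$ and time-independence already place it in the admissible class of Definition~\ref{Def:WeakSolution}, and Proposition~\ref{Propo:L1L2Positivity} validates the computations involving $\mathfrak{H}^1_{\phi_r}$ and $\mathfrak{H}^2_{\phi_r}$, so no approximation step is needed.
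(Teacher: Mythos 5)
Your proposal is correct and follows essentially the same route as the paper: plug the test function $\phi_r$ into the weak formulation, drop the non-negative initial term, restrict the positive right-hand side to $[0,r]^2$, and conclude via an AM--GM bound on $(p_1+p_2)^3$. The only cosmetic difference is that you observe $\mathfrak{H}^1_{\phi_r}$ reduces exactly to $(p_1+p_2)^3\phi_r(p_1+p_2)$ on $[0,r]^2$ (since $\phi_r\equiv 0$ there), whereas the paper reaches the same point via the monotonicity bound $\mathfrak{H}^1_{\phi_r}\ge(p_1^3+3p_1p_2^2)[\phi_r(p_1+p_2)+\phi_r(p_1-p_2)-2\phi_r(p_1)]$; both yield the stated inequality.
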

\begin{proof}

Choose the test function $\phi$ to be $\phi_r$ defined in Proposition \ref{Propo:L1L2Positivity},  we find

\begin{equation}\label{lemma:KeyLemma:E1}
\begin{aligned}
& \int_{[0,\infty]}g(t,p)p^2\phi_r(p)d\mu(p) - \int_{[0,\infty]}g(0,p)p^2\phi_r(p)d\mu(p) \\
 =& \ 2\int_0^t\int_{p_1>p_2\ge0}{|p_1||p_2|}g(s,p_1)g(s,p_2)\mathfrak{H}^1_{\phi_r}(p_1,p_2){d}\mu(p_{1})\,{d}\mu(p_2)ds\,\\
 &  + \int_0^t\int_{p_1=p_2\ge0}{|p_1||p_2|}g(s,p_1)g(s,p_2)\mathfrak{H}^2_{\phi_r}(p_1,p_2) {d}\mu(p_{1})\,{d}\mu(p_2)ds\,.
\end{aligned}
\end{equation}
which, by the fact that $g_0\ge 0$, yields
\begin{equation}\label{lemma:KeyLemma:E2}
\begin{aligned}
 \int_{[0,\infty]}g(t,p)p^2\phi_r(p)d\mu(p) \ \ge & \ 2\int_0^t\int_{ p_1>p_2\ge0}{|p_1||p_2|}g(s,p_1)g(s,p_2)\mathfrak{H}^1_{\phi_r}(p_1,p_2){d}\mu(p_{1})\,{d}\mu(p_2)ds\,\\
 & \ + \int_0^t\int_{p_1=p_2\ge0}{|p_1||p_2|}g(s,p_1)g(s,p_2)\mathfrak{H}^2_{\phi_r}(p_1,p_2) {d}\mu(p_{1})\,{d}\mu(p_2)ds\,.
\end{aligned}
\end{equation}

Since $\mathfrak{H}^1_{\phi_r}(p_1,p_2), \mathfrak{H}^2_{\phi_r}(p_1,p_2) \ge 0$, we can restrict the integrals on $p_1>p_2\ge0$ and $p_1=p_2\ge0$ to $r\ge p_1>p_2\ge0$ and $r\ge p_1=p_2\ge0$, yielding
\begin{equation}\label{lemma:KeyLemma:E2aa}
\begin{aligned}
 \int_{[0,\infty]}g(t,p)p^2\phi_r(p)d\mu(p) \ \ge & \ 2\int_0^t\int_{r\ge p_1>p_2\ge0}{|p_1||p_2|}g(s,p_1)g(s,p_2)\mathfrak{H}^1_{\phi_r}(p_1,p_2){d}\mu(p_{1})\,{d}\mu(p_2)ds\,\\
 & \ + \int_0^t\int_{r\ge p_1=p_2\ge0}{|p_1||p_2|}g(s,p_1)g(s,p_2)\mathfrak{H}^2_{\phi_r}(p_1,p_2) {d}\mu(p_{1})\,{d}\mu(p_2)ds\,.
\end{aligned}
\end{equation}
We compute $\mathfrak{H}^1_{\phi_r}(p_1,p_2) $ when $r\ge p_1>p_2\ge0$
\begin{equation}\label{L1intermPhi1}
\begin{aligned}
\mathfrak{H}^1_{\phi_r}(p_1,p_2) 
 \ = & \ p_1^3[\phi_r(p_1+p_2)+\phi_r(p_1-p_2)-2\phi_r(p_1)]+3p_1^2p_2[\phi_r(p_1+p_2)-\phi_r(p_1-p_2)]\\
 & +p_1p_2^2[3\phi_r(p_1+p_2)-6\phi_r(p_1)+3\phi_r(p_1-p_2)]\\
  & +p_1p_2^2[4\phi_r(p_1)-4\phi_r(p_2)]\    + \ p_2^3[\phi_r(p_1+p_2)-\phi_r(p_1-p_2)].
\end{aligned}
\end{equation}
Since $\phi$ is increasing, it is clear from \eqref{L1intermPhi1} that
\begin{equation}\label{lemma:KeyLemma:E2a}
\begin{aligned}
\mathfrak{H}^1_{\phi_r}(p_1,p_2) 
 \ \ge & \  (p_1^3+3p_1p_2^2)[{\phi_r}(p_1+p_2)+{\phi_r}(p_1-p_2)-2{\phi_r}(p_1)].
\end{aligned}
\end{equation}
Since  ${\phi_r}(p_1+p_2)+{\phi_r}(p_1-p_2)-2{\phi_r}(p_1)={\phi_r}(p_1+p_2)$ for $p_1\leq r$, we then compute
 $${\phi_r}(p_1+p_2)+{\phi_r}(p_1-p_2)-2{\phi_r}(p_1)\geq{\phi_r}(p_1+p_2)\mathbf{1}_{[0,r]}(p_1)$$ on $\{r\ge p_1\geq p_2\geq 0\}$. In addition, $\mathfrak{H}^2_\phi(p,p)=\mathfrak{H}^1_\phi(p,p)$. 
 
Combining the above estimates yields
\begin{equation}\label{lemma:KeyLemma:E2b}
\begin{aligned}
\int_{\mathbb{R}^3}g(t,p)p^2\phi(p)d\mu(p) 
 \ge & \ \int_0^t\iint_{[0,r]^2}{g(s,p_1)p_1g(s,p_2)p_2}(p_1^3+3p_1p_2^2)\phi(p_1+p_2)d \mu(p_1)d \mu(p_2)d s \\
 \ge & \ 2\int_0^t\iint_{[0,r]^2}{g(s,p_1)p_1^2g(s,p_2)p_2^2}(p_1p_2)^{
\frac12}\phi(p_1+p_2)d \mu(p_1)d \mu(p_2)d s.
\end{aligned}
\end{equation}
The above inequalities  lead to
\begin{equation}\label{lemma:KeyLemma:E3}
\int_{[r,\infty]}\,g(t,p)p^2d \mu(p)\geq 2\int_0^t\iint_{[0,r]^2}{g(s,p_1)p_1^2g(s,p_2)p_2^2}{(p_1p_2)^{
\frac12}}\phi(p_1+p_2)d \mu(p_1)d \mu(p_2)d s,
\end{equation}
then \eqref{lemma:KeyLemma:1} holds true for all $t\in[0,\infty)$.
\end{proof}

\subsection{Transferring of energy away from $\{0\}$}
In this subsection, we show that the energy is cascaded away from $\{0\}$.
\begin{proposition} Suppose that $g$, $gp^2\in C([0,\infty):\mathfrak{D}([0,\infty]))$,  $g\ge0$ is a non-trivial weak solution  in the sense of Definition \ref{Def:WeakSolution} and
$$\int_{\{0\}} g_0(p)d\mu(p) =0.$$
For all $\epsilon\in(0,1)$, there exists $R_\epsilon>0$ such that
$$\int_{\left[0, R_\epsilon \right)}g(t,p)p^2d \mu(p) \ \le \ \epsilon E\text{ for all }t\in[0,\infty).$$

\end{proposition}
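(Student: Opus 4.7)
The plan is to combine the monotonicity of the moments against the Lyapunov-type test functions $\phi_r(p) = (1 - r/p)_+$ from Proposition~\ref{Propo:L1L2Positivity} with a routine dominated-convergence argument on the initial datum. First I would recall, from the proof of Proposition~\ref{Propo:Monoton}, that for every fixed $r > 0$ the map $t \mapsto \int_{[0,\infty]} \phi_r(p)\, g(t,p)\, p^2\, d\mu(p)$ is nondecreasing on $[0, \infty)$, because $\mathfrak{H}^1_{\phi_r}$ and $\mathfrak{H}^2_{\phi_r}$ are nonnegative. Combined with the conservation of energy (Proposition~\ref{Propo:Conservation}), this yields the reverse monotonicity
\[
\int_{[0,\infty]} (1 - \phi_r(p))\, g(t,p)\, p^2\, d\mu(p) \ \le\ \int_{[0,\infty]} (1 - \phi_r(p))\, g_0(p)\, p^2\, d\mu(p)
\]
for all $t \ge 0$ and all $r > 0$.

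Next, since $1 - \phi_r \equiv 1$ on $[0, r)$ and $1 - \phi_r(p) = r/p \in [0, 1]$ on $[r, \infty]$, we have the pointwise bound $\mathbf{1}_{[0, r)}(p) \le 1 - \phi_r(p) \le 1$. Hence the left-hand side of the previous display dominates $\int_{[0, r)} g(t,p)\, p^2\, d\mu(p)$, and we get
\[
\int_{[0, r)} g(t,p)\, p^2\, d\mu(p) \ \le\ \int_{[0,\infty]} (1 - \phi_r(p))\, g_0(p)\, p^2\, d\mu(p), \qquad t \ge 0.
\]
It therefore suffices to show that the right-hand side tends to $0$ as $r \to 0^+$, and then to choose $R_\epsilon > 0$ so that the right-hand side evaluated at $r = R_\epsilon$ is at most $\epsilon E$.

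For the vanishing, I would observe that for each fixed $p \in (0, \infty]$, $1 - \phi_r(p) \to 0$ as $r \to 0^+$, while at the single point $p = 0$ the measure $g_0\, p^2\, d\mu$ has no mass thanks to the hypothesis $\int_{\{0\}} g_0(p)\, d\mu(p) = 0$. Since $|1 - \phi_r| \le 1$ and $g_0\, p^2\, d\mu$ is a finite nonnegative measure on $[0, \infty]$, the dominated convergence theorem gives
\[
\lim_{r \to 0^+} \int_{[0,\infty]} (1 - \phi_r(p))\, g_0(p)\, p^2\, d\mu(p)\ =\ 0.
\]

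I do not anticipate a serious obstacle. The only non-trivial analytic input is the positivity of $\mathfrak{H}^1_{\phi_r}$ and $\mathfrak{H}^2_{\phi_r}$ established in Proposition~\ref{Propo:L1L2Positivity}, which is already in hand; the remaining delicate point is the accounting at $\{p = 0\}$, which is precisely what the hypothesis $\int_{\{0\}} g_0(p)\, d\mu(p) = 0$ is designed to handle in the dominated convergence step. Everything else is bookkeeping with the finite measure $g_0\, p^2\, d\mu$.
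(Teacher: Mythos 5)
Your proof is correct and rests on exactly the same ingredients as the paper's: the positivity of $\mathfrak{H}^1_{\phi_r},\mathfrak{H}^2_{\phi_r}$ for the test functions $\phi_r=(1-r/p)_+$ (hence the monotonicity in time of $\int_{[0,\infty]}\phi_r\,g(t,p)\,p^2\,d\mu(p)$, which is precisely what underlies Lemma \ref{Propo:Tightness}), combined with conservation of energy and the absence of an energy atom at $p=0$. The only difference is cosmetic bookkeeping: the paper chooses $R_1$ with $\int_{[R_1,\infty]}g_0(p)p^2\,d\mu(p)\ge\left(1-\frac{\epsilon}{2}\right)E$ and applies Lemma \ref{Propo:Tightness} with $\rho=\frac{\epsilon}{2}$, whereas you work with the complementary mass $1-\phi_r$ and send $r\to0^+$ by dominated convergence on the finite measure $g_0p^2\,d\mu$.
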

\begin{proof}
Let $R_1$ be a constant satisfying $\int_{R_1}^{\infty} g_0(p)|p|^2 d\mu(p) \ \ge \left(1-\frac{\epsilon}{2}\right)E.$
By Lemma \ref{Propo:Tightness}, it follows that
\begin{equation*}
\int_{\left[R_1 \frac{\epsilon}{2}, \infty\right]}g(t,p)p^2d \mu(p)\geq\left(1-\frac{\epsilon}{2}\right)\int_{[R_1,\infty]}g(0,p)p^2d\mu(p)\text{ for all }t\in[0,\infty),
\end{equation*}
which implies
\begin{equation*}
\int_{\left[R_1 \frac{\epsilon}{2}, \infty\right]}g(t,p)p^2d \mu(p)\geq\left(1-\frac{\epsilon}{2}\right)^2 E\text{ for all }t\in[0,\infty),
\end{equation*}
Since
$\left(1-\frac{\epsilon}{2}\right)^2 \ge 1-\epsilon,$
we have
\begin{equation*}
\int_{\left[R_1 \frac{\epsilon}{2}, \infty\right]}g(t,p)p^2d \mu(p)\geq\left(1-{\epsilon}\right) E\text{ for all }t\in[0,\infty).
\end{equation*}
Finally, choose $R_\epsilon= R_1 \frac{\epsilon}{2}$, and by the conservation of energy
$\int_{\left[0, \infty\right]}g(t,p)p^2d \mu(p) \ = \ E,$
we find
$$\int_{\left[0, R_\epsilon \right)}g(t,p)p^2d \mu(p) \ \le \ \epsilon E\text{ for all }t\in[0,\infty).$$
\end{proof}

\subsection{Positivity of the energy at $\{\infty\}$ as time evolves}\label{Sec:Mass0}
{The main result of this section is the following Proposition \ref{Propo:Mass0}, which shows that the energy at  $\{\infty\}$ is indeed strictly increasing, leading to the energy cascade phenomenon. Notice that similar phenomena have also been observed  by Lu \cite{Lu:2013:TBE}  in the context of Bose-Einstein Condensates and later by Kierkels and Vel\'azquez \cite{kierkels2015transfer} in the context of the nonlinear Schr\"odinger equation for the mass at $\{0\}$. In these works, it is proved that the mass at $\{0\}$ is also strictly increasing, leading to the condensation phenomenon. In both cases, the similarity is the fact that there is a strictly increasing accumulation of mass/energy towards the singular points $\{\infty\}$ and $\{0\}$.
}
\begin{proposition}\label{Propo:Mass0}
Given any nontrivial weak solution $g$ in the sense of Definition \ref{Def:WeakSolution}   such that $g|p|^2\in C([0,\infty):\mathfrak{D}([0,\infty]))$,  $g\ge0$. For any ${t}_0\in[0,\infty)$, there exists $t'>0$ such that the following is true
\begin{equation}\label{Theorem:Existence:3}
\int_{\{\infty\}}g(t',p)|p|^2d \mu(p)>\int_{\{\infty\}}g({t}_0,p)|p|^2d \mu(p),
\end{equation}
which means any nontrivial weak solution $g|p|^2\in C([0,\infty):\mathfrak{D}([0,\infty]))$,  $g\ge0$ to \eqref{WeakTurbulenceEnergy} has the following property  $$\exists t_*>0, \mbox{ such that }\int_{\{\infty\}}g(t_*,p)|p|^2d \mu(p)>0.$$ 
\end{proposition}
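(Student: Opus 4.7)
Plan: I argue by contradiction. Suppose that the nondecreasing function $h(t):=\int_{\{\infty\}}g(t,p)p^2\,d\mu(p)$ (Proposition~\ref{Propo:Monoton}) equals $M_\infty$ on all of $[t_0,\infty)$ for some $t_0\ge 0$. If $M_\infty=E$, then $g(t_0,\cdot)p^2\equiv E\delta_{\{\infty\}}$, and applying Corollary~\ref{cor:trivialsolutions} to the shifted solution $\tilde g(t,\cdot):=g(t+t_0,\cdot)$, together with weak$^*$-continuity of $gp^2$ and monotonicity of $h$ extended to $[0,t_0]$, contradicts the nontriviality of $g$. Hence $M_\infty<E$, and conservation of energy (Proposition~\ref{Propo:Conservation}) forces $\int_{[0,\infty)}g(t,p)p^2\,d\mu=E-M_\infty>0$ for every $t\ge t_0$; since the weight $p^2$ annihilates $\{0\}$, this mass is supported in $(0,\infty)$.

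A dyadic decomposition of $(0,\infty)$ in ratio $3/2$ locates $a>0$ and $\eta_0>0$ with $\int_{[a,3a/2]}g(t_0,p)p^2\,d\mu\ge\eta_0$. Taking $r=7a/4$, the square $[a,3a/2]^2\subset[0,r]^2$ satisfies $p_1+p_2\ge 2a>r$ and $(p_1p_2)^{1/2}\ge a$, so the inner integral
\[
I(s,r):=\iint_{[0,r]^2}g(s,p_1)p_1^2\,g(s,p_2)p_2^2\,(p_1p_2)^{1/2}\!\left(1-\tfrac{r}{p_1+p_2}\right)_+d\mu(p_1)\,d\mu(p_2)
\]
of Lemma~\ref{lemma:KeyLemma} satisfies $I(t_0,r)\ge a\eta_0^2/8>0$. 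Weak$^*$-continuity of $gp^2$ in time, combined with boundedness and continuity of the integrand on the compact $[0,r]^2$, makes $s\mapsto I(s,r)$ continuous, so $I(s,r)\ge I(t_0,r)/2$ on some $[t_0,t_0+\tau]$. A time-shifted version of Lemma~\ref{lemma:KeyLemma}---immediate from applying the weak formulation with $\phi_r$ on $[t_0,t]$---combined with $h\equiv M_\infty$ then gives
\[
\int_{[r,\infty)}g(t_0+\tau,p)p^2\,d\mu-\int_{[r,\infty)}g(t_0,p)p^2\,d\mu\ge\tau I(t_0,r)>0,
\]
so mass strictly migrates from $[0,r)$ to $[r,\infty)$ without touching $\{\infty\}$.

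Setting $F_r(t):=\int_{[r,\infty)}g(t,p)p^2\,d\mu$, the same inequality shows $F_r$ is nondecreasing and bounded by $E-M_\infty$, whence $\int_{t_0}^\infty I(s,r)\,ds\le(E-M_\infty)/2<\infty$. A diagonal extraction on a dyadic decomposition of $(0,r)$ shows that whenever $F_r(\infty):=\lim_{t\to\infty}F_r(t)<E-M_\infty$ some subinterval $[a',3a'/2]\subset(0,r)$ carries $\liminf_{t\to\infty}\int g(t,p)p^2\,d\mu>0$, forcing $I(\cdot,r')\ge c>0$ eventually for $r'=7a'/4$ and contradicting the $L^1$ bound on $I(\cdot,r')$. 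Hence $\int_{[0,r)}g(t,p)p^2\,d\mu\to 0$ as $t\to\infty$ for every $r>0$, and a weak$^*$ subsequential limit $\mu^*$ of $g(t_n,\cdot)p^2$ along any $t_n\to\infty$ is, by the portmanteau theorem on the open sets $[0,R)$ of $[0,\infty]$, concentrated at $\{\infty\}$ with total mass $E$. The main obstacle is the concluding step: weak$^*$ convergence $g(t_n,\cdot)p^2\to E\delta_{\{\infty\}}$ is a priori consistent with each $g(t_n,p)p^2$ carrying only $M_\infty$ at $\{\infty\}$, so the strict finite-time increase of $h$ must be recovered by iterating the Key Lemma along a diverging sequence of scales $r_n\to\infty$ and time intervals $[t_n,t_{n+1}]$, combining the resulting strict increments of $F_{r_n}$ with the tightness estimate of Lemma~\ref{Propo:Tightness} to trap the migrating mass arbitrarily close to $\{\infty\}$ and thereby force $h(t_n)>M_\infty$ at some finite $n$.
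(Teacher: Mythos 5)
Your argument runs into a genuine gap exactly where you acknowledge it, and that gap is the whole content of the proposition. Under your contradiction hypothesis $h\equiv M_\infty<E$ on $[t_0,\infty)$, what your steps actually deliver (modulo technicalities) is that the energy leaves every bounded interval $[0,r)$ as $t\to\infty$, so subsequential weak$^*$ limits of $g(t,\cdot)p^2$ concentrate at $\{\infty\}$. As you concede, this is perfectly consistent with the atom at $\{\infty\}$ remaining equal to $M_\infty$ at every finite time: vague concentration toward infinity says nothing about the measure of the single point $\{\infty\}$, and your closing sentence (iterate the Key Lemma along scales $r_n\to\infty$ and use Lemma \ref{Propo:Tightness} to ``trap the migrating mass arbitrarily close to $\{\infty\}$'') is a restatement of the problem rather than a proof, since trapping energy in $[r_n,\infty]$ is again only tightness and never touches the atom. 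The paper closes precisely this gap with a quantitative mechanism you do not reproduce: from Lemma \ref{lemma:KeyLemma} and Lemma \ref{Propo:Tightness} it first derives a scale-linear bound $\int_{[\rho,\infty]}g(t,p)p^2d\mu\gtrsim \mathcal{M}\rho$ valid simultaneously for all small $\rho$ and all large $t$ (Lemmas \ref{Lemma:LowerBound} and \ref{Lemma:LowerBound2}); it then runs a dichotomy over the annuli $[\rho,\vartheta^N\rho]$: if for every admissible $\rho$ and $t$ such an annulus carried half of this bound, the quadratic Key-Lemma estimate would force the conserved energy to grow linearly in time, which is impossible; hence at some finite $t_*$ and some $\rho$ every annulus $[\rho,\vartheta^N\rho]$ carries less than $\mathcal{M}\rho/2$, and letting $N\to\infty$ the deficit must sit on the atom, i.e. $\int_{\{\infty\}}g(t_*,p)p^2d\mu\ge \mathcal{M}\rho/2>0$; the strict-increase statement follows by rerunning the same scheme with the interval open at infinity under the assumed constancy of $h$ after $t_1$. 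It is exactly this interplay between a lower bound uniform over arbitrarily small scales and uniform smallness of all finite annuli at a single time that converts escape of energy into mass at the point $\{\infty\}$ at a finite time; it is absent from your proposal.

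There are also secondary, repairable issues. (i) The displayed increment inequality is not what the weak formulation with $\phi_r$ yields: since $\phi_r<1$ on $[r,\infty)$ you only obtain $\int_{[r,\infty)}g(t,p)p^2d\mu\ge 2\int_{t_0}^{t} I(s,r)\,ds$ (the paper's time-shifted estimate), and $F_r$ is not shown to be monotone; your $L^1$ bound on $I(\cdot,r)$ survives in this weaker form, so this is fixable. (ii) Continuity of $s\mapsto I(s,r)$ does not follow from weak$^*$ continuity alone, because the cutoff to the closed square $[0,r]^2$ is not continuous; the paper instead obtains short-time persistence of annulus energy directly from the weak formulation with a test function in $\mathcal{V}$. (iii) In the case $M_\infty=E$, Corollary \ref{cor:trivialsolutions} only controls the solution for $t\ge t_0$ and does not contradict nontriviality of $g$, which concerns the datum at $t=0$; your dismissal of that case is therefore unjustified as written. (iv) The pigeonholing step producing a fixed annulus $[a',3a'/2]$ with $\liminf_{t\to\infty}$ energy bounded below must rule out drift of energy toward $p=0$ and sloshing among the annuli, which requires invoking Lemma \ref{Propo:Tightness} (and a finite pigeonhole over $[\epsilon,r)$) at that point, not only at the end.
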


\subsubsection{A lower bound for the energy}\label{Sec:LowerBound}

\begin{lemma}\label{Lemma:LowerBound}
Suppose that $g$  is a nontrivial weak solution in the sense of Definition \ref{Def:WeakSolution}  such that  $gp^2\in C([0,\infty):\mathfrak{D}([0,\infty]))$,  $g\ge0$. There then exist constants $R,T\in(0,\infty)$, depending only on the initial datum $g_0p^2\in\mathfrak{D}([0,\infty])$, so that
\begin{equation}\label{Lemma:LowerBound:1}
\int_{[r,\infty]}g(t,p)p^2d \mu(p)\gtrsim rt\text{ for all }r\in[0,R]\text{ and all }t\in[0,T].
\end{equation}
\end{lemma}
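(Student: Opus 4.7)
The claim \eqref{Lemma:LowerBound:1} asserts a bound $\int_{[r,\infty]} g(t,p)p^2\,d\mu(p) \gtrsim rt$ only on the small rectangle $[0,R]\times[0,T]$, so the product $rt$ stays bounded above by the fixed constant $RT$ throughout this region. For this reason it suffices to produce a \emph{time-uniform} constant lower bound on the tail $\int_{[r,\infty]} g(t,p)p^2\,d\mu(p)$ depending only on the initial datum $g_0$; the finer dynamical information encoded in the key Lemma \ref{lemma:KeyLemma} is not required. The relevant tool is the tightness estimate of Lemma \ref{Propo:Tightness}, which prevents initial mass at a positive finite scale from fully escaping to infinity in finite time.

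\textbf{Plan.} First, I would use the nontriviality of $g$ together with Corollary \ref{cor:trivialsolutions} (which rules out $g_0 p^2\,d\mu = E\delta_{\{p=\infty\}}$) to locate constants $0<a<b<\infty$ and $m_0>0$ with
\[
\int_{[a,b]} g_0(p)p^2\,d\mu(p) \ge m_0,
\]
so in particular $\int_{[a,\infty]} g_0(p)p^2\,d\mu(p) \ge m_0$. Second, I would apply Lemma \ref{Propo:Tightness} with $R=a$ and $\rho=\tfrac{1}{2}$ to propagate this mass into the uniform-in-time bound
\[
\int_{[a/2,\infty]} g(t,p)p^2\,d\mu(p) \ge \tfrac{1}{2}\int_{[a,\infty]} g_0(p)p^2\,d\mu(p) \ge \frac{m_0}{2}, \qquad t\ge 0.
\]
Finally, setting $R := a/2$ and $T := m_0/a$, monotonicity in the integration set yields $\int_{[r,\infty]} g(t,p)p^2\,d\mu(p) \ge m_0/2$ for every $r \in [0,R]$ and $t \in [0,T]$, while by construction $rt \le RT = m_0/2$. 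Combining these two inequalities gives precisely \eqref{Lemma:LowerBound:1}, with the implicit constant in $\gtrsim$ equal to $1$ and the constants $R,T$ determined entirely by $g_0 p^2$.

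\textbf{Main obstacle.} The only delicate point is the very first step: converting the qualitative nontriviality hypothesis into the quantitative statement that $g_0 p^2\,d\mu$ charges some compact subinterval of $(0,\infty)$. If one allowed nontriviality to be consistent with $g_0 p^2\,d\mu$ being entirely concentrated at $\{0\}$, the conclusion would fail, since then $\int_{[r,\infty]} g(t,p) p^2\,d\mu = 0$ for every $r>0$ (there is no mechanism to pull waves of vanishing wavenumber into $(0,\infty)$, in analogy with Corollary \ref{cor:nomassatinftynew}). Ruling this out is where the genuine content lies --- one invokes the standing setup of Theorem \ref{Theorem:Existence}, in which $\int_{\{0\}} g_0(p) p^2\,d\mu = 0$, so that the nontrivial part of $g_0 p^2\,d\mu$ not sitting at $\{\infty\}$ must live on $(0,\infty)$ and hence, by inner regularity of finite Borel measures, charges some $[a,b]\subset(0,\infty)$. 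Once this initial positive mass on a compact subinterval of $(0,\infty)$ is secured, the remainder is the pigeonhole step above, and no dynamical input beyond Lemma \ref{Propo:Tightness} is needed.
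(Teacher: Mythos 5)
Your proposal is correct, but it takes a genuinely different route from the paper's own proof. The paper establishes a real flux-driven growth estimate: it first shows, using a test function $\phi\in\mathcal{V}$ equal to $1$ on an interval $[R_1,R_2]$ carrying a fixed fraction of the energy together with the Lipschitz-in-time bound, that $\int_{[R_1,R_2]}g(t,p)p^2d\mu(p)\ge E/4$ persists on a short interval $[0,T]$; it then feeds this persistence into the quadratic lower bound of Lemma \ref{lemma:KeyLemma}, applied at the scales $r=R_0(3/2)^n$ and summed with Cauchy--Schwarz, to obtain $\int_{[r,\infty]}g(t,p)p^2d\mu(p)\gtrsim rE^2t$ on $[0,R]\times[0,T]$. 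You instead observe that, since the inequality is only asserted on a bounded rectangle where $rt\le RT$, a time-uniform lower bound on the tail suffices; you get one from the tightness estimate of Lemma \ref{Propo:Tightness} once the initial datum is known to charge a compact subinterval $[a,b]\subset(0,\infty)$, and choosing $T=m_0/a$, $R=a/2$ makes \eqref{Lemma:LowerBound:1} hold with constant $1$. This is shorter and more elementary, and it does support the only downstream use of the lemma (Lemma \ref{Lemma:LowerBound2} evaluates the bound at $t=T$ and then propagates it forward by tightness), so nothing later in the paper breaks. What it gives up is the dynamical content: the paper's argument exhibits genuine linear-in-time accumulation of energy in the tail generated by the collision operator, with a constant proportional to $E^2$, which is the mechanism underlying the subsequent rate estimates, whereas your bound is static and in effect shows that the statement as phrased is weaker than what the paper's proof actually delivers. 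One further point, which you treat more explicitly than the paper: ``nontrivial'' alone does not exclude data concentrated at $\{0\}$, so both arguments need $\int_{(0,\infty)}g_0(p)p^2d\mu(p)>0$, i.e. the standing assumptions of Theorem \ref{Theorem:Existence}; the paper simply declares $E\in(0,\infty)$, while you derive it from $\int_{\{0\}}g_0(p)p^2d\mu(p)=0$ together with Corollary \ref{cor:trivialsolutions} and inner regularity.
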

\begin{proof}
Since we are interested only in nontrivial solutions, we can define $$\int_{(0,\infty)}g(0,p)p^2d \mu(p)=:E\in(0,\infty),$$
then by the conservation of energy
$\int_{(0,\infty)}g(t,p)p^2d \mu(p) \le E,$
for all $t\ge 0$.

Set $R_1,R_2\in(0,\infty)$ such that $$\int_{[R_1,R_2]}g(0,p)p^2d \mu(p)\geq\frac13E.$$
Let us now choose a test function $\phi\in \mathcal{V}$ satisfying $\|\phi\|_{\mathcal{V}}\leq 1$ and $\phi\equiv 1$ on $[R_1,R_2]$. Using then $\phi$ in \eqref{WeakTurbulenceInitial}, we obtain the following estimate $$\int_{[R_1,R_2]}g(t,p)p^2d \mu(p)\geq\frac13E-C E^2 t,$$ where $C$ depends only on $R_1$ and $R_2$. 

Therefore, for $T\in(0,\infty)$ small enough, 
\begin{equation}\label{Lemma:LowerBound:E1}
\int_{[R_1,R_2]}g(t,p)p^2d \mu(p)\geq\tfrac14E\text{ for all }t\in[0,T].
\end{equation}

Choose $r\in(0,R_1]$. Applying then Lemma \ref{lemma:KeyLemma}, we obtain  for all $t\in[0,\infty)$
\begin{equation}\label{Lemma:LowerBound:E2}
\begin{aligned}
&\int_{[r,\infty]}g(t,p)p^2d \mu(p)\gtrsim \int_0^t\iint_{[0,r]^2}g(s,p_1)p_1^2g(s,p_2)p_2^2\sqrt{p_1p_2}\left(1-\frac{r}{p_1+p_2}\right)_+d \mu(p_1)d \mu(p_2)d s\\
&\indent \gtrsim \int_0^t\iint_{\left[\frac{2r}{3},r\right]^2}g(s,p_1)p_1^2g(s,p_2)p_2^2\sqrt{p_1p_2}\left(1-\frac{r}{p_1+p_2}\right)_+d \mu(p_1)d \mu(p_2)d s.
\end{aligned}
\end{equation}

Note that $\frac14\leq\left(1-\frac{r}{p_1+p_2}\right)_+\leq\frac12$ for $p_1,p_2\in\left[\frac{2r}{3},r\right]$. Moreover, $[p_1p_2]^{\frac12}\geq \frac23r$ for $p_1,p_2\in\left[\frac{2r}{3},r\right]$. 

Plugging these estimates into the right hand side of \eqref{Lemma:LowerBound:E2}
\begin{equation}\label{Lemma:LowerBound:E3}
\begin{aligned}
&\int_{[r,\infty]}g(t,p)p^2d \mu(p) \gtrsim r\int_0^t\left[\iint_{\left[\frac{2r}{3},r\right]}g(s,p)p_1^2d \mu(p)\right]^2d s.
\end{aligned}\end{equation}

Now, applying \eqref{Lemma:LowerBound:E3} for $r=R_0\frac{3^{n}}{2^{n}}$, $n>1$, $0<R_0\le R_1$, we get

\begin{equation}\label{Lemma:LowerBound:E4}
\begin{aligned}
&\left(\frac23\right)^n\int_{\left[\frac{3^{n}}{2^{n}}R_0,\infty\right]}g(t,p)p^2d \mu(p) \gtrsim R_0\int_0^t\left[\iint_{\left[\frac{3^{n-1}}{2^{n-1}}R_0,\frac{3^{n}}{2^{n}}R_0\right]}g(s,p)p^2d \mu(p)\right]^2d s.
\end{aligned}\end{equation}

Let $N$ be an integer such that $[R_1,R_2]\subset \left[R_0,\frac{3^{N}}{2^{N}}R_0\right]$. We take the sum of \eqref{Lemma:LowerBound:E4} from $n=1$ to $n=N$, employ the Cauchy-Schwarz inequality and use the fact that $\left[R_0\frac{3^{n}}{2^{n}},\infty\right]\subset [R_0,\infty],$ to find
\begin{equation}\label{Lemma:LowerBound:E5}
\begin{aligned}
\int_{[R_0,\infty]}g(t,p)p^2d \mu(p) \ \gtrsim & \ R_0\int_0^t\left[\iint_{\left[R_0,\frac{3^{N}}{2^{N}}R_0\right]}g(s,p)p^2d \mu(p)\right]^2d s\\
 \ \gtrsim & \ R_0\int_0^t\left[\iint_{\left[R_1,R_2\right]}g(s,p)p^2d \mu(p)\right]^2d s.
\end{aligned}\end{equation}
which, together with \eqref{Lemma:LowerBound:E1}, implies 

\begin{equation}\label{Lemma:LowerBound:E5aa}
\begin{aligned}
\int_{[R_0,\infty]}g(t,p)p^2d \mu(p) \ \gtrsim & \ R_0E^2t \ \ \  \forall t\in[0,T].
\end{aligned}\end{equation}
and we get the conclusion of the lemma. 
\end{proof}

\begin{lemma}\label{Lemma:LowerBound2}
Suppose that  $g$, $gp^2\in C([0,\infty):\mathfrak{D}([0,\infty]))$,  $g\ge0$ is a non-trivial weak solution the sense of Definition \ref{Def:WeakSolution}. Then exist constants $R_*\in(0,\infty)$, depending only on $g(0,\cdot)$, and some $T\in(0,\infty)$ such that
\begin{equation}\label{LemmaLowerBound2:1a}
\int_{[r,\infty]}g(t,p)p^2d \mu(p) \gtrsim
 T\,r\text{ for all }r\in[0,R_*]\text{ and all }t\in[T,\infty).
\end{equation}
\end{lemma}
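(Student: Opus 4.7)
The plan is to use Lemma \ref{Lemma:LowerBound} to obtain the lower bound at the single time $t=T$, and then propagate it forward in time using the monotonicity of the ``tail functionals'' $t\mapsto\int_{[0,\infty]}\phi_r(p)\,g(t,p)p^2\,d\mu(p)$, which is the same mechanism driving Proposition \ref{Propo:Monoton} and Lemma \ref{Propo:Tightness}.

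First, I would fix the $R\in(0,\infty)$ and $T\in(0,\infty)$ given by Lemma \ref{Lemma:LowerBound} applied to $g$, and set $R_*:=R/2$. At the time level $t=T$ this yields a constant $C_0>0$ (depending only on $g(0,\cdot)$) such that
\begin{equation*}
\int_{[\rho,\infty]}g(T,p)\,p^2\,d\mu(p)\;\ge\;C_0\,\rho\,T
\qquad\text{for every }\rho\in[0,R].
\end{equation*}
This supplies the ``base case'' at the single instant $t=T$.

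Next, I would exploit the test functions $\phi_r(p)=(1-r/p)_+$. By Proposition \ref{Propo:L1L2Positivity}, $\mathfrak{H}^1_{\phi_r}\ge 0$ and $\mathfrak{H}^2_{\phi_r}\ge 0$, so the weak formulation \eqref{Def:WeakSolution:New} immediately gives that
\begin{equation*}
t\longmapsto\int_{[0,\infty]}\phi_r(p)\,g(t,p)\,p^2\,d\mu(p)
\end{equation*}
is non-decreasing. For any $r\in(0,R_*]$ and any $t\ge T$, I would therefore combine this monotonicity with the elementary bound $\phi_r\le \mathbf{1}_{[r,\infty]}$ and the pointwise inequality $\phi_r(p)\ge(1-r/(2r))\mathbf{1}_{[2r,\infty]}(p)=\frac12\mathbf{1}_{[2r,\infty]}(p)$ to obtain
\begin{equation*}
\int_{[r,\infty]}g(t,p)p^2\,d\mu(p)\;\ge\;\int_{[0,\infty]}\phi_r\,g(t,\cdot)p^2\,d\mu
\;\ge\;\int_{[0,\infty]}\phi_r\,g(T,\cdot)p^2\,d\mu
\;\ge\;\tfrac12\int_{[2r,\infty]}g(T,p)p^2\,d\mu(p).
\end{equation*}
Since $2r\in[0,R]$, the base-case estimate bounds the right-hand side from below by $\tfrac12 C_0\,(2r)\,T=C_0\,r\,T$, which is the desired inequality (with constant $C_0 T$ absorbed into $\gtrsim$).

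I do not anticipate a real obstacle here: the only delicate point is that the monotonicity used to ``freeze'' the estimate at time $T$ and push it forward requires the non-negativity of the collision kernel against the particular test functions $\phi_r$, which has already been established in Section \ref{Sec:CollisionStructure}. Everything else is a bookkeeping choice of the constants $R_*=R/2$ and of the threshold $2r$ inside the tail.
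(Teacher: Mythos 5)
Your proof is correct and is essentially the paper's argument: the paper also fixes $R,T$ from Lemma \ref{Lemma:LowerBound}, freezes the estimate at time $T$, sets $R_*=R/2$, and propagates forward via the time-shifted Lemma \ref{Propo:Tightness} with $\rho=\tfrac12$, which is exactly the monotonicity of $t\mapsto\int\phi_r\,g(t,p)p^2\,d\mu(p)$ combined with $\tfrac12\mathbf{1}_{[2r,\infty]}\le\phi_r\le\mathbf{1}_{[r,\infty]}$ that you use. The only cosmetic difference is that you inline that comparison instead of citing the tightness lemma.
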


\begin{proof}
Without loss of generality we consider only nontrivial solutions. By Lemma \ref{Lemma:LowerBound} there  exist constants $R,T\in(0,\infty)$, depending only on $g(0,\cdot)$, such that \begin{equation}\label{Lemma:LowerBound:1}
\int_{[r,\infty]}g(t,p)p^2d \mu(p)\gtrsim rt\text{ for all }r\in[0,R]\text{ and all }t\in[0,T].
\end{equation} 
Now choose applying Proposition \ref{Propo:Tightness} with $\rho=\frac12$ implies that
\begin{equation*}\label{Propo:Tightness:1}
\int_{\left[\frac{r}{2}, \infty\right]}g(t+T,p)p^2d \mu(p)\geq \frac12\int_{[r,\infty]}g(T,p)p^2d \mu(p)\text{ for all }t\in[0,\infty),
\end{equation*}
which implies
\begin{equation*}\label{Lemma:LowerBound2:1}
\int_{\left[\frac{r}{2}, \infty\right]}g(t,p)p^2d \mu(p) \gtrsim
 T\,r\text{ for all }r\in[0,R]\text{ and all }t\in[T,\infty).
\end{equation*}
Hence \eqref{LemmaLowerBound2:1a} holds for $R_*=R/2$.
\end{proof}
\subsubsection{Proof of Proposition \ref{Propo:Mass0}}\label{Sec:AverageMass}

We will prove that any nontrivial weak solution  $g$, $gp^2\in C([0,\infty):\mathfrak{D}([0,\infty]))$,  $g\ge0$  to \eqref{WeakTurbulenceEnergy} has the property that $$\exists t_*>0, \mbox{ such that }\int_{\{\infty\}}g(t_*,p)|p|^2d \mu(p)>0.$$ 

According to \eqref{LemmaLowerBound2:1a}, there exist $R_*$ and $\mathcal{M}>0$ such that
\begin{equation*}
\int_{[\rho,\infty]}g(t,p)p^2d \mu(p) \geq
c \mathcal{M}\,\rho\text{ for all }\rho\in[0,R_*]\text{ and all }t\in[\mathcal{M},\infty),
\end{equation*}
for some universal constant $c>0$, that we suppose to be $1$ for the sake of simplicity.

Let $\vartheta>1$ and suppose  there exists $N>0$ such that
\begin{equation}\label{Lemma:LowerBound3:E1aa}
\int_{[\rho,\vartheta^N  \rho]}g(t,p)p^2d \mu(p)>\frac{\mathcal{M}\rho}{2}\text{ for all }\rho\in[0,R_*]\text{ and all }t\in[\mathcal{M},\infty).
\end{equation}

Let $n$ be an integer and $R_0<\rho$ such that ${[\rho,\vartheta^N  \rho]}\subset \left[R_0,\frac{3^{n}}{2^{n}}R_0\right]$, we recall from \eqref{Lemma:LowerBound:E5} that
\begin{equation*}
\begin{aligned}
\int_{[R_0,\infty]}g(t,p)p^2d \mu(p) \ \gtrsim & \ R_0\int_0^t\left[\iint_{\left[R_0,\frac{3^{n}}{2^{n}}R_0\right]}g(s,p)p^2d \mu(p)\right]^2d s\\
 \ \gtrsim & \ R_0\int_0^t\left[\iint_{\left[\rho,\vartheta^N  \rho\right]}g(s,p)p^2d \mu(p)\right]^2d s\
 \ \gtrsim  \ \frac{\mathcal{M}^2\rho^2 R_0t}{4}.
\end{aligned}\end{equation*}
which means
\begin{equation*}
\begin{aligned}
\int_{[0,\infty]}g(0,p)p^2d \mu(p) \ = & \int_{[0,\infty]}g(t,p)p^2d \mu(p)
 \ \gtrsim \ \int_{[R_0,\infty]}g(t,p)p^2d \mu(p)
 \ \gtrsim  \ \frac{\mathcal{M}^2\rho^2 R_0t}{4} \ \to \infty 
\end{aligned}\end{equation*}
as $t$ goes to $\infty$. This leads to a contradiction since the left hand side is a constant and the right hand side tends to infinity as $t$ tends to infinity.

Therefore   \eqref{Lemma:LowerBound3:E1aa} is false. Then there must exist $\rho\in[0,R_*]$ and $t_*\in[\mathcal M,\infty)$ such that
\begin{equation}\nonumber
\int_{[\rho,\vartheta^N\rho]}g(t_*,p)p^2 d \mu(p)\leq\frac{\mathcal M \rho}{2},
\end{equation}
for all $N>0$. 

Since $\vartheta>1$, 
$\lim_{N\to \infty}\vartheta^N\rho=\infty,$
we deduce a consequence of the above 
\begin{equation}\nonumber
\int_{\{\infty\}}g(t_*,p)p^2 d \mu(p)\ge \frac{\mathcal M \rho}{2}>0.
\end{equation}

Now, we will prove \eqref{Theorem:Existence:3}. Suppose the contrary that for all $t>t_1$
$$\int_{\{\infty\}}g(t,p)p^2d\mu(p) \ = \ \int_{\{\infty\}}g(t_1,p)p^2d\mu(p).$$

We can use exactly the same argument as before, but Lemma \ref{lemma:KeyLemma} is then modified  by
\begin{equation}\label{Lemma:LowerBound3:E1aa1}
\int_{[r,\infty)}g(t,p)p^2d \mu(p)\geq2\int_{t_1}^t\iint_{[0,r]^2}{g(s,p_1)p_1^2g(s,p_2)p_2^2}{[p_1p_2]^\frac12}\left(1-\frac{r}{p_1+p_2}\right)_+d \mu(p_1)d \mu(p_2)\ d s,
\end{equation}
in which the initial condition is chosen at $t_1$. 

This inequality can be proved as follows. Similar with \eqref{lemma:KeyLemma:E1}, one has

\begin{equation}\label{lemma:KeyLemma:E1new}
\begin{aligned}
& \int_{[0,\infty]}g(t,p)p^2\phi(p)dp - \int_{[0,\infty]}g(t_1,p)p^2\phi(p)d\mu(p)  \\
 =& \ 2\int_{t_1}^t\int_{p_1>p_2\ge0}{|p_1||p_2|}g(s,p_1)g(s,p_2)\mathfrak{H}^1_\phi(p_1,p_2){d}\mu(p_{1})\,{d}\mu(p_2)ds\,\\
 &  + \int_{t_1}^t\int_{p_1=p_2\ge0}{|p_1||p_2|}g(s,p_1)g(s,p_2)\mathfrak{H}^2_\phi(p_1,p_2) {d}\mu(p_{1})\,{d}\mu(p_2)ds\,.
\end{aligned}
\end{equation}
which, by the fact that $g\ge 0$, yields
\begin{equation}\label{lemma:KeyLemma:E2new}
\begin{aligned}
& \int_{[0,\infty]}g(t,p)p^2\phi(p)d\mu(p) - \int_{\{p=\infty\}}g(t_1,p)p^2\phi(p)d\mu(p) \\ \ \ge & \ 2\int_{t_1}^t\int_{p_1>p_2\ge0}{|p_1||p_2|}g(s,p_1)g(s,p_2)\mathfrak{H}^1_\phi(p_1,p_2){d}\mu(p_{1})\,{d}\mu(p_2)ds\,\\
 &  + \int_{t_1}^t\int_{p_1=p_2\ge0}{|p_1||p_2|}g(s,p_1)g(s,p_2)\mathfrak{H}^2_\phi(p_1,p_2) {d}\mu(p_{1})\,{d}\mu(p_2)ds\,.
\end{aligned}
\end{equation}

Notice that $\int_{\{p=\infty\}}g(t_1,p)p^2\phi(p)d\mu(p)= \int_{\{p=\infty\}}g(t_1,p)p^2d\mu(p)= \int_{\{p=\infty\}}g(t,p)p^2d\mu(p).$
Therefore, similar as \eqref{lemma:KeyLemma:E2b}, we also have
\begin{equation}\label{lemma:KeyLemma:E2bnewa}
\begin{aligned}
& \int_{[0,\infty]}g(t,p)p^2\phi(p)d\mu(p) \ - \ \int_{\{p=\infty\}}g(t_1,p)p^2d\mu(p)\\
 \ge & \ \int_{t_1}^t\iint_{[0,r]^2}{g(s,p_1)p_1g(s,p_2)p_2}(p_1^3+3p_1p_2^2)\phi(p_1+p_2)d \mu(p_1)d \mu(p_2)d s,
\end{aligned}
\end{equation}
which, by the Cauchy-Schwarz inequality $p_1+p_2\ge 2\sqrt{p_1p_2}$, leads to
\begin{equation}\label{lemma:KeyLemma:E2bnew}
\begin{aligned}
& \int_{[0,\infty]}g(t,p)p^2\phi(p)d\mu(p) \ - \ \int_{\{p=\infty\}}g(t_1,p)p^2d\mu(p)\\
 \ge & \ 2\int_{t_1}^t\iint_{[0,r]^2}{g(s,p_1)p_1^2g(s,p_2)p_2^2}(p_1p_2)^{
\frac12}\phi(p_1+p_2)d \mu(p_1)d \mu(p_2)d s,
\end{aligned}
\end{equation}

The above inequalities  yields
\begin{equation}\label{lemma:KeyLemma:E3new}
\int_{[r,\infty)}\,g(t,p)p^2d \mu(p)\geq 2\int_{t_1}^t\iint_{[0,r]^2}{g(s,p_1)p_1^2g(s,p_2)p_2^2}{(p_1p_2)^{
\frac12}}\phi(p_1+p_2)d \mu(p_1)d \mu(p_2)d s,
\end{equation}
which then implies \eqref{Lemma:LowerBound3:E1aa1}. 

The same argument of Lemma \ref{Lemma:LowerBound} can be applied to deduce that there exist $0<R_1<R_2$ satisfying for $T\in(0,\infty)$ small enough, 
\begin{equation}\label{Lemma:LowerBound:E1new}
\int_{[R_1,R_2]}g(s,p)p^2d \mu(p)\geq\tfrac14E\text{ for all }s\in[t_1,t_1+T]
\end{equation}
as well as 
\begin{equation}\label{Lemma:LowerBound:E4new}
\begin{aligned}
&\left(\frac23\right)^n\int_{\left[\frac{3^{n}}{2^{n}}R_0,\infty\right)}g(t,p)p^2d \mu(p) \gtrsim R_0\int_0^t\left[\iint_{\left[\frac{3^{n-1}}{2^{n-1}}R_0,\frac{3^{n}}{2^{n}}R_0\right]}g(s,p)p^2d \mu(p)\right]^2d s.
\end{aligned}\end{equation}

Letting $N$ be an integer such that $[R_1,R_2]\subset \left[R_0,\frac{3^{N}}{2^{N}}R_0\right]$,  taking the sum of \eqref{Lemma:LowerBound:E4new} from $n=1$ to $n=N$, employing the Cauchy-Schwarz inequality and using the fact that $\left[R_0\frac{3^{n}}{2^{n}},\infty\right)\subset [R_0,\infty),$ yields
\begin{equation}\label{Lemma:LowerBound:E5new}
\begin{aligned}
\int_{[R_0,\infty)}g(t,p)p^2d \mu(p) \ \gtrsim & \ R_0\int_{t_1}^t\left[\iint_{\left[R_0,\frac{3^{N}}{2^{N}}R_0\right]}g(s,p)p^2d \mu(p)\right]^2d s\\
 \ \gtrsim & \ R_0\int_{t_1}^t\left[\iint_{\left[R_1,R_2\right]}g(s,p)p^2d \mu(p)\right]^2d s.
\end{aligned}\end{equation}
 
That, with a similar argument with \eqref{LemmaLowerBound2:1a} also gives the existence of $R_*$ and $\mathcal{M}>0$ such that
\begin{equation*}
\int_{[\rho,\infty)}g(t,p)p^2d \mu(p) \ge
 C\mathcal{M}\,\rho\text{ for all }\rho\in[0,R_*]\text{ and all }t\in[t_1+\mathcal{M},\infty),
\end{equation*}
for some universal constant $C>0$.

Let us fix $\vartheta>1$, then there exists $N\in\mathbb{N}$ such that
\begin{equation}\label{Lemma:LowerBound3:E1aaa}
\int_{[\rho,\vartheta^N  \rho]}g(t,p)p^2d \mu(p)\ge \frac{C\mathcal{M}\rho}{2}\text{ for all }\rho\in[0,R_*]\text{ and all }t\in[t_1+\mathcal{M},\infty);
\end{equation} 
Otherwise, if  there exist $\rho\in[0,R_*]\text{ and  }t\in[t_1+\mathcal{M},\infty)$ such that
\begin{equation*}
\int_{[\rho,\vartheta^N  \rho]}g(t,p)p^2d \mu(p)< \frac{C\mathcal{M}\rho}{2};
\end{equation*}
for all $N\in\mathbb{N}$, then
\begin{equation*}
\int_{[\rho,\infty)}g(t,p)p^2d \mu(p) \le \frac{C\mathcal{M}\rho}{2},
\end{equation*}
which is a contradiction.

As   a consequence,  by \eqref{Lemma:LowerBound3:E1aaa} we deduce  that
\begin{equation*}
\begin{aligned}
\int_{[0,\infty)}g(t_1,p)p^2d \mu(p) \ = & \int_{[0,\infty)}g(t,p)p^2d \mu(p)
 \ \gtrsim  \ \frac{\mathcal{M}^2\rho^2 (t-t_1)}{4} \ \to \infty 
\end{aligned}\end{equation*}
as $t$ goes to $\infty$, which is also a contradiction.

\subsection{Rate of cascading the energy toward $\{\infty\}$}
The main result of this section is the following Proposition, which gives a rate of cascading the energy toward $\{\infty\}$.

\begin{proposition}\label{Propo:RateCascade}
Given any non-trivial weak solution $g$ in the sense of Definition \ref{Def:WeakSolution}   such that $g|p|^2\in C([0,\infty):\mathfrak{D}([0,\infty]))$,  $g\ge0$, and
$$\int_{\{0\}}p^2g_0(p)d \mu(p)=\int_{\{\infty\}}p^2g_0(p)d \mu(p)=0.$$
 There exist explicit constants $C_1$, $C_2$, $T^*$ depending on the initial condition $g_0$ such that   the following holds true
\begin{equation}\label{Propo:RateCascade:1}
\int_{\{\infty\}}g(t,p)|p|^2d \mu(p)\ \ge \ C_1 \ - \ \frac{C_2}{\sqrt{t}},
\end{equation}
for all $t>T^*$. 

That leads to the existence of an explicit $T^{**}>0$ such that 
\begin{equation}\label{Propo:RateCascade:2}
\int_{\{\infty\}}g(t,p)|p|^2d \mu(p)\ \ge\ \frac{C_1}{2},
\end{equation}
for all $t>T^{**}$. 

Moreover, there exists $R^*>0$ such that for any $r>R^*$, there exists an explicit $T_r>0$ and for all $t>T_r$ 
\begin{equation}\label{Propo:RateCascade:3}
\int_{[r,\infty]}g(t,p)|p|^2d \mu(p)\ \ge \ C_1.
\end{equation}
\end{proposition}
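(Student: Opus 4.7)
The plan is to combine three ingredients: the $L^2$-in-time control of a localized ``flux'' provided by Lemma~\ref{lemma:KeyLemma}, the uniform-in-time lower bound from Lemma~\ref{Lemma:LowerBound2}, and the monotonicity of $t \mapsto E_\infty(t) := \int_{\{\infty\}} g(t,p)|p|^2 d\mu(p)$ (Proposition~\ref{Propo:Monoton}). First, Lemma~\ref{Lemma:LowerBound2} furnishes explicit constants $R_*,\,T>0$ depending only on $g_0$ such that $\int_{[R_*,\infty]} g(t,p)|p|^2 d\mu(p) \ge cTR_*$ for all $t\ge T$, where $c$ is universal. I set $C_1:=cTR_*/2$.

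Next I would iterate Lemma~\ref{lemma:KeyLemma} on the geometric scales $r_k:=(4/3)^k R_*$, $k\ge 0$. Restricting the double integral in that lemma to the square $[r_k,r_{k+1}]^2=[3r_{k+1}/4,r_{k+1}]^2$, where $\sqrt{p_1p_2}\ge 3r_{k+1}/4$ and $(1-r_{k+1}/(p_1+p_2))_+\ge 1/3$, one obtains
$$ E \;\ge\; \int_{[r_{k+1},\infty]} g(t,p)|p|^2 d\mu(p) \;\ge\; \frac{r_{k+1}}{2}\int_0^t a_k(s)^2\,ds, \qquad a_k(s):=\int_{[r_k,r_{k+1}]} g(s,p)|p|^2 d\mu(p),$$
so $\int_0^t a_k(s)^2\,ds \le 2E/r_{k+1}$. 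Cauchy--Schwarz in time gives $\int_0^t a_k(s)\,ds \le \sqrt{2tE/r_{k+1}}$, and summing over $k\ge 0$ uses the convergent geometric series $\sum_{k\ge 0}r_{k+1}^{-1/2}=R_*^{-1/2}\sum_{k\ge 0}(3/4)^{(k+1)/2}$, producing an explicit universal $K$ with
$$ \int_0^t \!\! \int_{[R_*,\infty)} g(s,p)|p|^2 d\mu(p)\,ds \;=\; \sum_{k\ge 0}\int_0^t a_k(s)\,ds \;\le\; K\sqrt{tE/R_*}.$$

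The crucial conversion from this time-average to a pointwise-in-time estimate uses a pigeonhole observation: if $f\ge 0$ satisfies $\int_0^t f(s)\,ds\le Gt$, then there exists $s\in[t/2,t]$ with $f(s)\le 4G$ (otherwise $\int_{t/2}^t f > 2Gt$, contradicting the hypothesis). Applied to $f(s)=\int_{[R_*,\infty)} g(s,p)|p|^2 d\mu(p)$ with $G=K\sqrt{E/(R_*t)}$, it produces $s_t\in[t/2,t]$ such that $\int_{[R_*,\infty)} g(s_t,p)|p|^2 d\mu(p) \le C_2/\sqrt{t}$ with $C_2:=4K\sqrt{E/R_*}$. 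The decomposition $\int_{[R_*,\infty]} = \int_{[R_*,\infty)} + \int_{\{\infty\}}$ together with Lemma~\ref{Lemma:LowerBound2} applied at $s_t$ (legitimate once $T^*:=2T\le t$, so $s_t\ge T$) then yields $E_\infty(s_t)\ge cTR_* - C_2/\sqrt{t} = 2C_1 - C_2/\sqrt{t}$. Proposition~\ref{Propo:Monoton} propagates this to all later times: $E_\infty(t)\ge E_\infty(s_t)\ge 2C_1 - C_2/\sqrt{t} \ge C_1 - C_2/\sqrt{t}$, proving (\ref{Propo:RateCascade:1}). The bound (\ref{Propo:RateCascade:2}) follows with $T^{**}:=C_2^2/C_1^2$, since then $C_2/\sqrt{t}\le C_1$ and $E_\infty(t)\ge C_1\ge C_1/2$; for (\ref{Propo:RateCascade:3}), $T_r:=4C_2^2/C_1^2$ gives $E_\infty(t)\ge 3C_1/2\ge C_1$, and the trivial bound $\int_{[r,\infty]} g(t,p)|p|^2 d\mu(p)\ge E_\infty(t)$ shows any $R^*>0$ works.

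The main obstacle is precisely this passage from an $L^2$-in-time bound on the localized flux $a_k$ to a pointwise lower bound for $E_\infty(t)$. The dyadic choice $r_k=(4/3)^k R_*$ is forced by the requirement that $r_{k+1}^{-1/2}$ be summable over $k$, which exploits the linear factor of $r$ appearing in front of $\int_0^t a_r^2$ in Lemma~\ref{lemma:KeyLemma}; without monotonicity of $E_\infty$, the pigeonhole argument would control the measure at $\{\infty\}$ only on a scattered subset of times, whereas Proposition~\ref{Propo:Monoton} upgrades it to every sufficiently large $t$, with the Cauchy--Schwarz exponent $1/2$ producing the advertised $1/\sqrt{t}$ rate.
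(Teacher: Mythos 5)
Your proof is correct and follows essentially the same route as the paper: your dyadic-annulus/Cauchy--Schwarz-in-time bound on $\int_0^t\int_{[R_*,\infty)}g\,p^2\,d\mu(p)\,ds$ is precisely the paper's Lemma \ref{Lemma:UpperBound}, and combining it with Lemma \ref{Lemma:LowerBound2} and the monotonicity of Proposition \ref{Propo:Monoton} is exactly the paper's scheme, the only cosmetic difference being that you pigeonhole a good time $s_t\in[t/2,t]$ where the paper instead compares time-integrals over $[\tau,2\tau]$ directly via $\int_\tau^{2\tau}\int_{\{\infty\}}g(s,p)p^2\,d\mu(p)\,ds\le\tau\int_{\{\infty\}}g(2\tau,p)p^2\,d\mu(p)$. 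The only (trivial) bookkeeping to add is that $T^{**}$ and $T_r$ must also be taken larger than $T^*=2T$, so that the intermediate bound $\int_{\{\infty\}}g(t,p)p^2\,d\mu(p)\ge 2C_1-C_2/\sqrt{t}$ is actually available at those times.
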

\subsubsection{An upper bound for the energy }\label{Sec:UpperBound}
The following lemma gives an estimate of the energy on any interval $[R,\infty)$. 
\begin{lemma}\label{Lemma:UpperBound}
Suppose that  $g$, $gp^2\in C([0,\infty):\mathfrak{D}([0,\infty]))$,  $g\ge0$ is a non-trivial weak solution  in the sense of Definition \ref{Def:WeakSolution}  and
$$\int_{\{0\}}p^2g_0(p)d \mu(p)=\int_{\{\infty\}}p^2g_0(p)d \mu(p)=0.$$ Then for all $R\in[0,\infty)$, all $t_1\in[0,\infty)$ and any $t_2\in[t_1,\infty)$, the following holds true
\begin{equation}\label{Lemma:UpperBound:1}
\int_{t_1}^{t_2}\int_{[R,\infty)}g(s,p)p^2d \mu(p)d s \lesssim \sqrt\frac{{(t_2-t_1)\|g_0p^2\|_1}}{{R}}.
\end{equation}
\end{lemma}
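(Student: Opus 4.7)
The plan is to iterate Lemma~\ref{lemma:KeyLemma} along a geometric sequence of scales, bound each resulting shell estimate by Cauchy--Schwarz in time, and sum.

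Assume $R>0$ (the case $R=0$ is vacuous, since the right-hand side of \eqref{Lemma:UpperBound:1} is then infinite). Set $r_n := R(3/2)^n$ for $n\geq 1$ and
\[
\beta_n(s) := \int_{[r_{n-1},r_n]} g(s,p)p^2\,d\mu(p).
\]
Since the intervals $[r_{n-1},r_n]$ for $n\geq 1$ cover $[R,\infty)$, one has $\sum_{n\geq 1}\beta_n(s) \geq \int_{[R,\infty)}g(s,p)p^2\,d\mu(p)$. First I apply Lemma~\ref{lemma:KeyLemma} at $r=r_n$ and restrict the double integral on the right-hand side to $[2r_n/3,r_n]^2 = [r_{n-1},r_n]^2$, on which $\sqrt{p_1p_2}\geq 2r_n/3$ and $(1-r_n/(p_1+p_2))_+\geq 1/4$; this gives
\[
\int_{[r_n,\infty]} g(t,p)p^2\,d\mu(p) \;\geq\; \frac{r_n}{3}\int_0^t \beta_n(s)^2\,ds.
\]
Bounding the left-hand side by $E:=\|g_0p^2\|_{L^1}$ via the conservation of energy (Proposition~\ref{Propo:Conservation}) yields
\[
\int_0^t \beta_n(s)^2\,ds \;\leq\; \frac{3E}{r_n} \;=\; \frac{3E}{R}\Big(\tfrac{2}{3}\Big)^n.
\]

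Next I apply Cauchy--Schwarz in $s$ on $[t_1,t_2]$ shell by shell:
\[
\int_{t_1}^{t_2}\beta_n(s)\,ds \;\leq\; \sqrt{t_2-t_1}\left(\int_{t_1}^{t_2}\beta_n(s)^2\,ds\right)^{1/2} \;\leq\; \sqrt{\frac{3(t_2-t_1)E}{R}}\,\Big(\tfrac{2}{3}\Big)^{n/2},
\]
and sum over $n\geq 1$:
\[
\int_{t_1}^{t_2}\!\int_{[R,\infty)} g(s,p)p^2\,d\mu(p)\,ds \;\leq\; \sum_{n\geq 1}\int_{t_1}^{t_2}\beta_n(s)\,ds \;\lesssim\; \sqrt{\frac{(t_2-t_1)\|g_0p^2\|_{L^1}}{R}},
\]
because the geometric series $\sum_{n\geq 1}(2/3)^{n/2}$ converges.

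\textbf{Main obstacle.} The delicate point is choosing the right hierarchy of scales. A single application of Lemma~\ref{lemma:KeyLemma} at $r=R$ controls only the shell $[2R/3,R]$, not the whole tail $[R,\infty)$, so one must iterate along the geometric sequence $r_n = R(3/2)^n$ dictated by the multiplicative scaling of the lemma's region $[2r/3,r]$. The prefactor $1/r_n = (2/3)^n/R$ in the $L^2$-in-time bound on $\beta_n$ decays geometrically, and after Cauchy--Schwarz this becomes the still-summable $(2/3)^{n/2}$---precisely what is needed to reconstruct the tail $\int_{[R,\infty)}gp^2\,d\mu$ without losing any power of $R$.
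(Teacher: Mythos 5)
Your proposal is correct and follows essentially the same route as the paper's proof: both apply Lemma~\ref{lemma:KeyLemma} at the geometric scales $r=(3/2)^nR$, restrict the quadratic term to the shell $[\tfrac23 r,r]^2$ where $\sqrt{p_1p_2}\,(1-\tfrac{r}{p_1+p_2})_+\gtrsim r$, bound the left-hand side by the conserved energy, apply Cauchy--Schwarz (H\"older) in time on $[t_1,t_2]$, and sum the resulting geometric series to recover the tail $[R,\infty)$. The only difference is cosmetic: you organize the argument shell by shell via $\beta_n$ before summing, while the paper first writes the single-scale inequality and then substitutes $r=(3/2)^nR$.
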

\begin{proof}
Choosing $t_1\in[0,\infty)$ and $t_2\in[t_1,\infty)$,  applying Lemma \ref{lemma:KeyLemma} and using the fact $\frac43 r \le p_1+p_2\leq 2{r}$ on $[\frac23r,r]\times[\frac23r,r]$, we obtain
\begin{equation}\label{Lemma:UpperBound:E1a}\begin{aligned}
\int_{[r,\infty]}g(t_2,p)p^2 d\mu(p) \ \gtrsim & \ \int_0^{t_2}\int_{[0,r]^2}g(s,p_1)p_1^2g(s,p_2)p_2^2 \sqrt{p_1p_2}d\mu(p_1)d\mu(p_2)ds.
\end{aligned}
\end{equation}
Based on the observation $\left[\frac23r,r\right]\subset [0,r]$, we obtain from the above inequality that
\begin{equation}\label{Lemma:UpperBound:E1b}\begin{aligned}
\int_{[r,\infty]}g(t_2,p)p^2 d\mu(p) 
\ \gtrsim & \ \int_{t_1}^{t_2}\int_{\left[\frac23r,r\right]^2}g(s,p_1)p_1^2g(s,p_2)p_2^2 \sqrt{p_1p_2}d\mu(p_1)d\mu(p_2)ds.
\end{aligned}
\end{equation}
Applying the inequality $\sqrt{p_1p_2}\ge \frac{2}{3}r$ into the above estimate, we deduce that
\begin{equation}\label{Lemma:UpperBound:E1}\begin{aligned}
\int_{[r,\infty]}g(t_2,p)p^2 d\mu(p) 
\ \gtrsim & \ \frac23r\int_{t_1}^{t_2}\left[\int_{\left[\frac23r,r\right]}g(s,p)p^2d\mu(p)\right]^2ds,
\end{aligned}
\end{equation}
It now follows from H\"older's inequality that
$$(t_2-t_1)\int_{t_1}^{t_2}\left[\int_{\left[\frac23r,r\right]}g(s,p)p^2d\mu(p)\right]^2ds \ \ge \ \left[\int_{t_1}^{t_2}\int_{\left[\frac23r,r\right]}g(s,p)p^2d\mu(p)ds\right]^2, $$
which, in combination with \eqref{Lemma:UpperBound:E1} yields
\begin{equation}\label{Lemma:UpperBound:E2}\begin{aligned}
\int_{[r,\infty]}g(t_2,p)p^2 d\mu(p) \ \gtrsim &  \ \frac{r}{t_2-t_1}\left[\int_{t_1}^{t_2}\int_{\left[\frac23r,r\right]}g(s,p)p^2d\mu(p)ds\right]^2.
\end{aligned}
\end{equation}
Since $\int_{[r,\infty]}g(t_2,p)p^2 d\mu(p) \ \le \|gp^2\|_{L^1},$
we deduce from \eqref{Lemma:UpperBound:E2} that 
\begin{equation}\label{Lemma:UpperBound:E3}
\|gp^2\|_{L^1} \ \gtrsim   \ \frac{r}{t_2-t_1}\left[\int_{t_1}^{t_2}\int_{\left[\frac23r,r\right]}g(s,p)p^2d\mu(p)ds\right]^2.
\end{equation}
Applying \eqref{Lemma:UpperBound:E3} for $r=\left(\frac32\right)^nR$, $(n\ge 1)$,  we obtain
\begin{equation}\label{Lemma:UpperBound:E4}\begin{aligned}
\left(\frac23\right)^n\|gp^2\|_{L^1} \ \gtrsim &  \ \frac{R}{t_2-t_1}\left[\int_{t_1}^{t_2}\int_{\left[\left(\frac32\right)^{n-1}R,\left(\frac32\right)^nR\right]}g(s,p)p^2d\mu(p)ds\right]^2.
\end{aligned}
\end{equation}
which, by taking the square roots of both sides, implies
\begin{equation}\label{Lemma:UpperBound:E4}\begin{aligned}
\left(\frac23\right)^{\frac{n}{2}}\sqrt{\|gp^2\|_{L^1}} \ \gtrsim &  \ \sqrt{\frac{R}{t_2-t_1}}\left[\int_{t_1}^{t_2}\int_{\left[\left(\frac32\right)^{n-1}R,\left(\frac32\right)^nR\right]}g(s,p)p^2d\mu(p)ds\right],
\end{aligned}
\end{equation}
Taking the sum of \eqref{Lemma:UpperBound:E4} for $n\ge 1$, 
\begin{equation}\label{Lemma:UpperBound:E5}\begin{aligned}
\sum_{n=1}^\infty\left(\frac23\right)^{\frac{n}{2}}\sqrt{\|gp^2\|_{L^1}}\ \gtrsim &  \ \sqrt{\frac{R}{t_2-t_1}}\sum_{n=1}^\infty\left[\int_{t_1}^{t_2}\int_{\left[\left(\frac32\right)^{n-1}R,\left(\frac32\right)^nR\right]}g(s,p)p^2d\mu(p)ds\right],
\end{aligned}
\end{equation}
which can be simplified into
\begin{equation}\label{Lemma:UpperBound:E5}
\sqrt{\|gp^2\|_{L^1}}\ \gtrsim   \ \sqrt{\frac{R}{t_2-t_1}}\left[\int_{t_1}^{t_2}\int_{[R,\infty)}g(s,p)p^2d\mu(p)ds\right],
\end{equation}

The estimate \eqref{Lemma:UpperBound:1} follows.
\end{proof}

\subsubsection{Proof of Proposition \ref{Propo:RateCascade}}\label{Sec:ProofMass0}
By Lemma \ref{Lemma:LowerBound2}, there are $r, T>0$ such that
\begin{equation*}
\int_{[r,\infty]}g(t,p)p^2 p \geq
\gamma:= c_0 T\,r\text{ for all }t\in[T,\infty),
\end{equation*}
for some constant $c_0$.

For any $\tau \in\left[T,\infty\right)$, 
\begin{equation}\label{Sec:ProofMass0:E2}
\int_{\tau}^{2\tau}\int_{[r,\infty]}g(s,p)p^2d \mu(p)d s\geq \tau{\gamma}.
\end{equation}
In addition, by  Lemma \ref{Lemma:UpperBound}, we obtain
\begin{equation}\label{Sec:ProofMass0:E3}
\int_{\tau}^{2\tau}\int_{[r,\infty)}g(s,p)p^2d \mu(p)d s\leq C\sqrt{\tau\|g(0,p)p^2\|_{L^1}}\sqrt{\frac{1}{r}}\text{ for all }r\in[0,\infty),
\end{equation}
for some universal constant $C>0$.
\\
Subtracting \eqref{Sec:ProofMass0:E3} and \eqref{Sec:ProofMass0:E2}, we get
\begin{equation}\label{Sec:ProofMass0:E3a}
\int_{\tau}^{2\tau}\int_{\{\infty\}}g(s,p)p^2d \mu(p)d s\geq \tau{\gamma} \ - \ C\sqrt{\tau\|g(0,p)p^2\|_{L^1}}\sqrt{\frac{1}{r}}.
\end{equation}
Using Proposition \ref{Propo:Monoton}, we find
\begin{equation}\label{Sec:ProofMass0:E4}
\int_{\tau}^{2\tau}\int_{\{\infty\}}g(s,p)p^2d \mu(p)d s \ \leq \ \tau \int_{\{\infty\}}g(2\tau,p)p^2d \mu(p).
\end{equation}
Putting together \eqref{Sec:ProofMass0:E3a} and \eqref{Sec:ProofMass0:E4} yields
\begin{equation}\label{Sec:ProofMass0:E5}
 \int_{\{\infty\}}g(2\tau,p)p^2d \mu(p) \ \geq \ {\gamma} \ - \ \frac{C}{\sqrt\tau}\sqrt{\|g(0,p)p^2\|_{L^1}}\sqrt{\frac{1}{r}}
\end{equation}
which  leads to 
\begin{equation}\label{Sec:ProofMass0:E5}
 \int_{\{\infty\}}g(2\tau,p)p^2d \mu(p) \ \geq \ \tfrac{\gamma}{2},
\end{equation}
when $\tau$ is large enough and the result follows.

\subsection{Full cascade of the energy: the convergence to the Dirac function $\delta_{\{p=\infty\}}$}\label{Sec:Longtime}
\begin{proposition}[Long time behavior]\label{Propo:Longtime}
Suppose that $gp^2\in C([0,\infty):\mathfrak{D}([0,\infty]))$,  $g\ge0$  and $g$ is a non-trivial weak solution  in the sense of Definition \ref{Def:WeakSolution}  and
$$\int_{\{0\}}p^2g_0(p)d \mu(p)=\int_{\{\infty\}}p^2g_0(p)d \mu(p)=0.$$  Then $$g(t,p)p^2\stackrel{\ast}{\rightharpoonup} \|g(0,p)p^2\|_{L^1}\delta_{\{p=\infty\}}$$ as $t\rightarrow\infty$.
\end{proposition}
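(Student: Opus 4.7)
The plan is to combine weak-$^*$ compactness in $\mathfrak{D}([0,\infty])$ with the monotonicity of Proposition~\ref{Propo:Monoton}, the accumulation bound of Lemma~\ref{lemma:KeyLemma}, and the upper bound of Lemma~\ref{Lemma:UpperBound}. Write $E:=\|g_0p^2\|_{L^1}$ and $m(t):=\int_{\{\infty\}}g(t,p)p^2d\mu(p)$. Given any sequence $t_n\to\infty$, the Banach--Alaoglu theorem (Definition~\ref{Def:WeakStar}) produces a subsequence, still denoted $\{t_n\}$, along which $g(t_n,\cdot)p^2\stackrel{\ast}{\rightharpoonup}\nu$ in $\mathfrak{D}([0,\infty])$; testing this convergence against $\phi\equiv 1$ and invoking Proposition~\ref{Propo:Conservation} yields $\nu([0,\infty])=E$. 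It therefore suffices to show $\nu(\{\infty\})=E$, for then $\nu=E\delta_{\{p=\infty\}}$ and uniqueness of the weak-$^*$ limit will yield convergence of the whole family.

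By Proposition~\ref{Propo:Monoton}, $m$ is nondecreasing on $[0,\infty)$ and bounded by $E$, so the limit $m^*:=\lim_{t\to\infty}m(t)\in[0,E]$ exists. The heart of the argument is the claim $m^*=E$, which I would prove by contradiction. Suppose $\delta:=E-m^*>0$; since $m(t)\leq m^*$, conservation of energy forces
\begin{equation*}
\int_{[0,\infty)}g(t,p)p^2d\mu(p)\,\geq\,\delta\qquad\text{for all }t\geq 0.
\end{equation*}
The transfer-away-from-the-origin proposition stated immediately after Corollary~\ref{cor:nomassatinftynew} supplies $R_\epsilon>0$ with $\nu_s([0,R_\epsilon))\leq\delta/4$ uniformly in $s$, while Lemma~\ref{Lemma:UpperBound} yields $R>R_\epsilon$ with $\int_t^{t+1}\nu_s([R,\infty))\,ds<\delta/4$ uniformly in $t$. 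Combining the two estimates isolates a fixed bounded annulus $[R_\epsilon,R]$ on which the time-averaged energy is at least $\delta/2$ for every sufficiently large $t$.

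This persistent bulk mass is then funnelled into the cascade machinery of Lemmas~\ref{Lemma:LowerBound}--\ref{Lemma:LowerBound2} and the contradiction scheme of Proposition~\ref{Propo:Mass0}, restarted at an arbitrary late time $t_0$. Crucially, the point mass at $\{\infty\}$ does not feed back into the dynamics: the factor $p$ in the collision measure $g(s,p)p\,d\mu(p)$ annihilates any atom at infinity, so on $[0,\infty)$ the equation retains the same form, with effective finite-part energy $\geq\delta$ concentrated in $[R_\epsilon,R]$. The restarted arguments therefore produce constants $\tau>0$ and $c>0$ depending only on $\delta,R_\epsilon,R$ and the kernel such that
\begin{equation*}
m(t_0+\tau)\,\geq\,m(t_0)+c
\end{equation*}
for every large enough $t_0$. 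Iterating this increment on disjoint time windows $[t_0+j\tau,t_0+(j+1)\tau]$, $j=0,1,2,\dots$, eventually forces $m(t)>E$, contradicting $m(t)\leq E$ and establishing $m^*=E$. The main technical obstacle lies precisely in this restart step: one must verify that the constants appearing in the restarted Lemmas~\ref{Lemma:LowerBound}--\ref{Lemma:LowerBound2}, which in their original statements depend on the full initial profile $g_0$, can be chosen uniformly in $t_0$, with all $t_0$-dependence absorbed into the uniform concentration on $[R_\epsilon,R]$ obtained in the previous paragraph.

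Once $m^*=E$ is known, the identification of $\nu$ is short. For each $r>0$, Proposition~\ref{Propo:L1L2Positivity} makes $t\mapsto\int\phi_rg(t)p^2d\mu$ nondecreasing and bounded above by $E$, and since $\phi_r(\infty)=1$ one has $\int\phi_rg(t)p^2d\mu\geq m(t)$; passing to $t\to\infty$ gives $E=m^*\leq L(r)\leq E$ with $L(r):=\lim_t\int\phi_rg(t)p^2d\mu$, so $L(r)=E$ for every $r>0$. Weak-$^*$ convergence against $\phi_r\in C([0,\infty])$ then yields $\int\phi_r\,d\nu=L(r)=E=\nu([0,\infty])$, i.e.\ $\int(1-\phi_r)\,d\nu=0$. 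Since $1-\phi_r\geq 0$ on $[0,\infty]$ and vanishes only at $p=\infty$, we conclude $\nu=E\delta_{\{p=\infty\}}$, completing the proof.
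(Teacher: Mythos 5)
Your route is genuinely different from the paper's, and as written it has a gap exactly at its decisive step. The contradiction argument rests on the claim that $m(t_0+\tau)\ge m(t_0)+c$ with $\tau,c$ uniform in large $t_0$, obtained by ``restarting'' Lemmas \ref{Lemma:LowerBound}--\ref{Lemma:LowerBound2} and the scheme of Proposition \ref{Propo:Mass0}; you yourself flag the uniformity of the restarted constants as an unresolved obstacle, and nothing in the cited lemmas delivers the increment: they bound $\int_{[r,\infty]}g(t,p)p^2d\mu(p)$, which mixes bulk energy at large finite $p$ with the atom at $\{\infty\}$, so isolating a growth of $m$ itself would additionally require the subtraction via Lemma \ref{Lemma:UpperBound} carried out in the proof of Proposition \ref{Propo:RateCascade} --- none of which is done here. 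The supporting remark that the atom ``does not feed back into the dynamics'' because the factor $p$ annihilates it is also incorrect: for the test functions $\phi_r$ one has $\mathfrak{H}^1_{\phi_r}(p_1,p_2)=4rp_1p_2$ once all arguments exceed $r$, so $\mathfrak{H}^1_{\phi_r}(p_1,p_2)/(p_1p_2)\to 4r\neq0$ as $p_1\to\infty$ and an atom of $gp^2$ at infinity does enter the weak formulation; what makes it harmless in the cascade estimates is only that Lemma \ref{lemma:KeyLemma} uses the gain restricted to $[0,r]^2$.

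The detour is moreover unnecessary, which is where the paper's proof differs. The weak$^*$ statement only requires the energy to leave every compact set (test functions in $C([0,\infty])$ have a limit at infinity); it does not require the stronger assertion $m(t)\to E$ you set out to prove, and the paper never tracks the atom here. The paper fixes $\phi_r$, uses that $t\mapsto\int\phi_r\,g(t)p^2d\mu$ is nondecreasing and bounded by $E$, assumes its limit is $E-\varepsilon$, and then derives a \emph{uniform-in-time} lower bound $\int_{[R_1,R_2)}g(t)p^2d\mu\ge\varepsilon/4$ (monotonicity controls the tail beyond $R_2$, Lemma \ref{Propo:Tightness} controls $[0,R_1)$); feeding this into the iterated key-lemma estimate \eqref{Lemma:LowerBound:E5} gives $\int_{[R_0,\infty]}g(t)p^2d\mu\gtrsim\varepsilon^2R_0t$, contradicting conservation of energy --- no restart, no compactness, no Lemma \ref{Lemma:UpperBound}. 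Incidentally, your own time-averaged concentration on $[R_\epsilon,R]$ would already suffice: Cauchy--Schwarz on unit time windows gives $\int_0^t\big[\int_{[R_\epsilon,R]}g(s)p^2d\mu\big]^2ds\gtrsim\delta^2t$, and \eqref{Lemma:LowerBound:E5} then produces the same contradiction directly, so the whole restart-and-increment machinery could be dropped; but as submitted, the central step of your proof is asserted rather than proved.
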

\begin{proof}
Let us consider only nontrivial solutions. Suppose that $\phi$ is a continuous function, belongs to the class $\mathfrak{L} = \{\phi_r\}$, constructed in Proposition \ref{Propo:L1L2Positivity}.  The mapping $t\mapsto \mathfrak{W}_\phi(t):=\int_{[0,\infty]}\phi(p)g(t,p)p^2d \mu(p)$ is then nondecreasing. In addition, 
$\mathfrak{W}_\phi(t) \le \|g_0p^2\|_{L^1}\in(0,\infty).$
Therefore, there exists a limit $\lim_{t\rightarrow\infty}\mathfrak{W}_\phi(t)=:\mathfrak{E}\in[0,E].$
Using the fact that $\phi<1$ on $[0,\infty)$, we observe that if $\mathfrak{E}=E$,
then
$\lim_{t\to\infty}\sup_{\phi \in \mathfrak{L}}\mathfrak{W}_\phi = E,$
which implies
$g(t,p)p^2\stackrel{\ast}{\rightharpoonup} \|g(0,p)p^2\|_{L^1}\delta_{\{p=\infty\}}$ as $t\rightarrow\infty$.
Suppose the contrary that $E-\mathfrak{E}=:\varepsilon>0$. We will prove that there exist $0<R_1<R_2<\infty$ such that
\begin{equation}\label{Propo:Longtime:E1}
\int_{[R_1,R_2)}g(t,p)p^2 d \mu(p)\ge \frac{\varepsilon}{4}\text{ for all }t\in[0,\infty).
\end{equation}
To this end,  let us consider the constants $\delta_1\in\left[\frac34,1\right)$ and $\frac{\mathfrak{E}}{E}(1-\delta_1)<\delta_2<1-\delta_1.$ For any $R_2\in[0,\infty)$, $\phi\in\mathfrak{L}$ it follows
\begin{equation}\label{Propo:Longtime:E2}
\int_{[R_2,\infty]}g(t,p)p^2d \mu(p)\leq\int_{[0,\infty]}\frac{\phi(p)}{\phi(R_2)}g(t,p)p^2d \mu(p)\leq\frac{\mathfrak{E}}{\phi(R_2)}\leq\frac{\delta_1 \mathfrak{E}+\delta_2 E}{\phi(R_2)}\text{ for all }t\in[0,\infty),
\end{equation}
which, by choosing $R_2\in(0,\phi^{-1}(\delta_1+\delta_2)]$, leads to
\begin{equation}\label{Propo:Longtime:E3}
\int_{[0,R_2)}g(t,p)p^2d \mu(p)\geq E-\frac{\delta_1 \mathfrak{E}+\delta_2 E}{\delta_1+\delta_2}>\frac{3\varepsilon}{4}\text{ for all }t\in[0,\infty).
\end{equation}

Let $R'\in(0,\infty)$ such that
\begin{equation}\label{Propo:Longtime:E4}
\int_{[R',\infty]}g(0,p)p^2 d \mu(p)\geq E-\frac{\varepsilon}{8},
\end{equation}
which, by applying Proposition \ref{Propo:Tightness} with $\rho=\frac{\varepsilon}{8E-\varepsilon}$, yields
\begin{equation}\label{Propo:Longtime:E5a}
\int_{[{R'}{\rho}, \infty]}g(t,p)p^2d \mu(p)\ge E-\frac{\varepsilon}{4}\text{ for all }t\in[0,\infty).
\end{equation}
That  means
\begin{equation}\label{Propo:Longtime:E5}
\int_{[0,{R'}{\rho})}g(t,p)p^2d\mu( p) <\frac{\varepsilon}{4}\text{ for all }t\in[0,\infty),
\end{equation}

Combining \eqref{Propo:Longtime:E3} and \eqref{Propo:Longtime:E5}, we obtain \eqref{Propo:Longtime:E1} for $R_1:={R'}{\rho}.$
Let $N$ be an integer and $R_0<R_1$ such that $[R_1,R_2]\subset \left[R_0,\frac{3^{N}}{2^{N}}R_0\right]$, we recall from \eqref{Lemma:LowerBound:E5} that
\begin{equation*}
\begin{aligned}
\int_{[R_0,\infty]}g(t,p)p^2d \mu(p) \ \gtrsim & \ R_0\int_0^t\left[\iint_{\left[R_0,\frac{3^{N}}{2^{N}}R_0\right]}g(s,p)p^2d \mu(p)\right]^2d s\\
 \ \gtrsim & \ R_0\int_0^t\left[\iint_{\left[R_1,R_2\right]}g(s,p)p^2d \mu(p)\right]^2d s\
  \gtrsim \ \frac{\varepsilon^2 R_0t}{16}.
\end{aligned}\end{equation*}
The above inequality leads to
\begin{equation*}
\begin{aligned}
\int_{[0,\infty]}g(0,p)p^2d \mu(p) \ = & \int_{[0,\infty]}g(t,p)p^2d \mu(p)
 \ \gtrsim  \int_{[R_0,\infty]}g(t,p)p^2d \mu(p)
 \ \gtrsim  \ \frac{\varepsilon^2 R_0t}{16} \ \to \infty 
\end{aligned}\end{equation*}
as $t$ to $\infty$. 
This is a contradiction.  Therefore $\mathfrak{E}=E$ and that implies the conclusion of the lemma. 

\end{proof}

\subsection{Proof of Theorem \ref{Theorem:Existence}}\label{Sec:Theorem1}
The proof follows from the previous propositions.

{\bf Acknowledgements.} 
A. Soffer is partially supported by
NSF grant DMS 1600749  and NSFC 11671163.
M.-B. Tran is partially supported by NSF Grant DMS-1814149 and NSF Grant DMS-1854453. The authors would like to thank Prof. A. Newell, Prof. A. Aceves,  Prof R. Pego, Prof. B. Rumpf, Prof H. Spohn for fruitful discussions on the topic. 

\def\cprime{$'$}


\begin{thebibliography}{10}

\bibitem{AlonsoGambaBinh}
R.~Alonso, I.~M. Gamba, and M.-B. Tran.
\newblock The {C}auchy problem and {BEC} stability for the quantum
  {B}oltzmann-{G}ross-{P}itaevskii system for bosons at very low temperature.
\newblock {\em arXiv preprint arXiv:1609.07467}, 2016.

\bibitem{anglin2002bose}
J.~R. Anglin and W.~Ketterle.
\newblock Bose--Einstein condensation of atomic gases.
\newblock {\em Nature}, 416(6877):211--218, 2002.

\bibitem{Arkeryd:1972:OBE}
Leif Arkeryd.
\newblock On the {B}oltzmann equation. {I}. {E}xistence.
\newblock {\em Arch. Rational Mech. Anal.}, 45:1--16, 1972.

\bibitem{arlotti2004strictly}
L.~Arlotti and J.~Banasiak.
\newblock Strictly substochastic semigroups with application to conservative
  and shattering solutions to fragmentation equations with mass loss.
\newblock {\em Journal of mathematical analysis and applications},
  293(2):693--720, 2004.

\bibitem{balk1990physical}
A.~M. Balk and S.~V. Nazarenko.
\newblock Physical realizability of anisotropic weak-turbulence Kolmogorov
  spectra.
\newblock {\em Sov. Phys. JETP}, 70:1031--1041, 1990.

\bibitem{ball1990discrete}
J.~M. Ball and J.~Carr.
\newblock The discrete coagulation-fragmentation equations: existence,
  uniqueness, and density conservation.
\newblock {\em Journal of Statistical Physics}, 61(1-2):203--234, 1990.

\bibitem{banasiak2011global}
J.~Banasiak and W.~Lamb.
\newblock Global strict solutions to continuous coagulation--fragmentation
  equations with strong fragmentation.
\newblock {\em Proceedings of the Royal Society of Edinburgh Section A:
  Mathematics}, 141(3):465--480, 2011.

\bibitem{banasiak2012analytic}
J.~Banasiak and W.~Lamb.
\newblock Analytic fragmentation semigroups and continuous
  coagulation--fragmentation equations with unbounded rates.
\newblock {\em Journal of Mathematical Analysis and Applications},
  391(1):312--322, 2012.

\bibitem{benney1969random}
D.~J. Benney and A.~C. Newell.
\newblock Random wave closures.
\newblock {\em Studies in Applied Mathematics}, 48(1):29--53, 1969.

\bibitem{benney1966nonlinear}
D.~J. Benney and P.~G. Saffman.
\newblock Nonlinear interactions of random waves in a dispersive medium.
\newblock {\em Proc. R. Soc. Lond. A}, 289(1418):301--320, 1966.

\bibitem{bertoin2006random}
J.~Bertoin.
\newblock {\em Random fragmentation and coagulation processes}, volume 102.
\newblock Cambridge University Press, 2006.

\bibitem{bonacini2019self}
M.~Bonacini, B.~Niethammer, and J.~J.~L. Vel{\'a}zquez.
\newblock Self-similar solutions to coagulation equations with time-dependent
  tails: the case of homogeneity one.
\newblock {\em Archive for Rational Mechanics and Analysis}, pages 1--43.

\bibitem{canizo2010rate}
J.~A. Canizo, S.~Mischler, and C.~Mouhot.
\newblock Rate of convergence to self-similarity for Smoluchowski's coagulation
  equation with constant coefficients.
\newblock {\em SIAM Journal on Mathematical Analysis}, 41(6):2283--2314, 2010.

\bibitem{connaughton2009numerical}
C.~Connaughton.
\newblock Numerical solutions of the isotropic 3-wave kinetic equation.
\newblock {\em Physica D: Nonlinear Phenomena}, 238(23-24):2282--2297, 2009.

\bibitem{connaughton2010aggregation}
C.~Connaughton and P.~L. Krapivsky.
\newblock Aggregation--fragmentation processes and decaying three-wave
  turbulence.
\newblock {\em Physical Review E}, 81(3):035303, 2010.

\bibitem{connaughton2010dynamical}
C.~Connaughton and A.~C. Newell.
\newblock Dynamical scaling and the finite-capacity anomaly in three-wave
  turbulence.
\newblock {\em Physical Review E}, 81(3):036303, 2010.

\bibitem{costa1995existence}
F.~P. Costa.
\newblock Existence and uniqueness of density conserving solutions to the
  coagulation-fragmentation equations with strong fragmentation.
\newblock {\em Journal of mathematical analysis and applications}, pages
  892--914, 1995.

\bibitem{CraciunBinh}
G.~Craciun and M.-B. Tran.
\newblock A reaction network approach to the convergence to equilibrium of
  quantum Boltzmann equations for bose gases.
\newblock {\em arXiv preprint arXiv:1608.05438}, 2016.

\bibitem{degond2017coagulation}
P.~Degond, J.-G. Liu, and R.~L. Pego.
\newblock Coagulation--fragmentation model for animal group-size statistics.
\newblock {\em Journal of Nonlinear Science}, 27(2):379--424, 2017.

\bibitem{drake1972general}
R.~L. Drake.
\newblock A general mathematical survey of the coagulation equation. topics in
  current aerosol research (part 2). international reviews in aerosol physics
  and chemistry (vol 3). ed. gm hidy and jr brock, 1972.

\bibitem{dubovski1996existence}
P.~B. Dubovskii and I.~W. Stewart.
\newblock Existence, uniqueness and mass conservation for the
  coagulation--fragmentation equation.
\newblock {\em Mathematical methods in the applied sciences}, 19(7):571--591,
  1996.

\bibitem{eibeck2001stochastic}
A.~Eibeck and W.~Wagner.
\newblock Stochastic particle approximations for Smoluchoski's coagualtion
  equation.
\newblock {\em The Annals of Applied Probability}, 11(4):1137--1165, 2001.

\bibitem{escobedo2003gelation}
M.~Escobedo, P.~Lauren{\c{c}}ot, S.~Mischler, and B.~Perthame.
\newblock Gelation and mass conservation in coagulation-fragmentation models.
\newblock {\em Journal of Differential Equations}, 195(1):143--174, 2003.

\bibitem{escobedo2006dust}
M.~Escobedo and S.~Mischler.
\newblock Dust and self-similarity for the Smoluchowski coagulation equation.
\newblock In {\em Annales de l'Institut Henri Poincare (C) Non Linear
  Analysis}, volume~23, pages 331--362. Elsevier, 2006.

\bibitem{escobedo2002gelation}
M.~Escobedo, S.~Mischler, and B.~Perthame.
\newblock Gelation in coagulation and fragmentation models.
\newblock {\em Communications in Mathematical Physics}, 231(1):157--188, 2002.

\bibitem{escobedo2005self}
M.~Escobedo, S.~Mischler, and M.~R. Ricard.
\newblock On self-similarity and stationary problem for fragmentation and
  coagulation models.
\newblock In {\em Annales de l'Institut Henri Poincare (C) Non Linear
  Analysis}, volume~22, pages 99--125. Elsevier, 2005.

\bibitem{EscobedoBinh}
M.~Escobedo and M.-B. Tran.
\newblock Convergence to equilibrium of a linearized quantum {B}oltzmann
  equation for bosons at very low temperature.
\newblock {\em Kinetic and Related Models}, 8(3):493--531, 2015.

\bibitem{EscobedoVelazquez:2015:FTB}
M.~Escobedo and J.~J.~L. Vel{\'a}zquez.
\newblock Finite time blow-up and condensation for the bosonic {N}ordheim
  equation.
\newblock {\em Invent. Math.}, 200(3):761--847, 2015.

\bibitem{EscobedoVelazquez:2015:OTT}
M.~Escobedo and J.~J.~L. Vel{\'a}zquez.
\newblock On the theory of weak turbulence for the nonlinear {S}chr\"odinger
  equation.
\newblock {\em Mem. Amer. Math. Soc.}, 238(1124):v+107, 2015.

\bibitem{filbet2004numerical}
F.~Filbet and P.~Lauren{\c{c}}ot.
\newblock Numerical simulation of the Smoluchowski coagulation equation.
\newblock {\em SIAM Journal on Scientific Computing}, 25(6):2004--2028, 2004.

\bibitem{folland1995introduction}
G.~B. Folland.
\newblock {\em Real analysis: modern techniques and their applications}.
\newblock John Wiley \& Sons, 2013.

\bibitem{galtier2000weak}
S.~Galtier, S.~V. Nazarenko, A.~C. Newell, and A.~Pouquet.
\newblock A weak turbulence theory for incompressible magnetohydrodynamics.
\newblock {\em Journal of Plasma Physics}, 63(5):447--488, 2000.

\bibitem{GambaSmithBinh}
I.~M. Gamba, L.~M. Smith, and M.-B. Tran.
\newblock On the wave turbulence theory for stratified flows in the ocean.
\newblock {\em  M3AS: Mathematical Models and Methods in Applied Sciences}, 01(30)2020.

\bibitem{germain2017optimal}
P.~Germain, A.~D. Ionescu, and M.-B. Tran.
\newblock Optimal local well-posedness theory for the kinetic wave equation.
\newblock {\em arXiv preprint arXiv:1711.05587}, 2017.

\bibitem{giri2012weak}
A.~K. Giri, P.~Lauren{\c{c}}ot, and G.~Warnecke.
\newblock Weak solutions to the continuous coagulation equation with multiple
  fragmentation.
\newblock {\em Nonlinear Analysis: Theory, Methods \& Applications},
  75(4):2199--2208, 2012.

\bibitem{hasselmann1962non}
K.~Hasselmann.
\newblock On the non-linear energy transfer in a gravity-wave spectrum part 1.
  general theory.
\newblock {\em Journal of Fluid Mechanics}, 12(04):481--500, 1962.

\bibitem{hasselmann1974spectral}
K.~Hasselmann.
\newblock On the spectral dissipation of ocean waves due to white capping.
\newblock {\em Boundary-Layer Meteorology}, 6(1-2):107--127, 1974.

\bibitem{JinBinh}
S.~Jin and M.-B. Tran.
\newblock Quantum hydrodynamic approximations to the finite temperature trapped
  bose gases.
\newblock {\em Physica D: Nonlinear Phenomena}, 380-381:45--57, 1 October 2018.

\bibitem{josserand2001nonlinear}
C.~Josserand and Y.~Pomeau.
\newblock Nonlinear aspects of the theory of Bose-Einstein condensates.
\newblock {\em Nonlinearity}, 14(5):R25, 2001.

\bibitem{kadomtsev1965plasma}
B.~B. Kadomtsev.
\newblock Plasma turbulence.
\newblock {\em New York: Academic Press, 1965}, 1965.

\bibitem{kierkels2015transfer}
A.~H.~M. Kierkels and J.~J.~L. Vel{\'a}zquez.
\newblock On the transfer of energy towards infinity in the theory of weak
  turbulence for the nonlinear Schr{\"o}dinger equation.
\newblock {\em Journal of Statistical Physics}, 159(3):668--712, 2015.

\bibitem{kierkels2016self}
A.~H.~M. Kierkels and J.~J.~L. Vel{\'a}zquez.
\newblock On self-similar solutions to a kinetic equation arising in weak
  turbulence theory for the nonlinear Schr{\"o}dinger equation.
\newblock {\em Journal of Statistical Physics}, 163(6):1350--1393, 2016.

\bibitem{kocharovsky2015microscopic}
V.~V. Kocharovsky and V.~V. Kocharovsky.
\newblock Microscopic theory of phase transitions in a critical region.
\newblock {\em Physica Scripta}, 90(10):108002, 2015.

\bibitem{KorotkevichDyachenkoZakharov:2016:NSO}
A.~O. Korotkevich, A.~I. Dyachenko, and V.~E. Zakharov.
\newblock Numerical simulation of surface waves instability on a homogeneous
  grid.
\newblock {\em Phys. D}, 321/322:51--66, 2016.

\bibitem{lamb2004existence}
W.~Lamb.
\newblock Existence and uniqueness results for the continuous coagulation and
  fragmentation equation.
\newblock {\em Mathematical methods in the applied sciences}, 27(6):703--721,
  2004.

\bibitem{laurenccot2000class}
P.~Lauren{\c{c}}ot.
\newblock On a class of continuous coagulation-fragmentation equations.
\newblock {\em Journal of differential equations}, 167(2):245--274, 2000.

\bibitem{laurenccot2004coalescence}
P.~Lauren{\c{c}}ot and S.~Mischler.
\newblock On coalescence equations and related models.
\newblock In {\em Modeling and computational methods for kinetic equations},
  pages 321--356. Springer, 2004.

\bibitem{laurencot2015absence}
P.~Laurencot and H.~Van~Roessel.
\newblock Absence of gelation and self-similar behavior for a
  coagulation-fragmentation equation.
\newblock {\em SIAM Journal on Mathematical Analysis}, 47(3):2355--2374, 2015.

\bibitem{leyvraz1983existence}
F.~Leyvraz.
\newblock Existence and properties of post-gel solutions for the kinetic
  equations of coagulation.
\newblock {\em Journal of Physics A: Mathematical and General}, 16(12):2861,
  1983.

\bibitem{leyvraz1981singularities}
F.~Leyvraz and H.~R. Tschudi.
\newblock Singularities in the kinetics of coagulation processes.
\newblock {\em Journal of Physics A: Mathematical and General}, 14(12):3389,
  1981.

\bibitem{Lu:2005:TBE}
X.~Lu.
\newblock The {B}oltzmann equation for {B}ose-{E}instein particles: velocity
  concentration and convergence to equilibrium.
\newblock {\em J. Stat. Phys.}, 119(5-6):1027--1067, 2005.

\bibitem{Lu:2013:TBE}
X.~Lu.
\newblock The {B}oltzmann equation for {B}ose-{E}instein particles:
  condensation in finite time.
\newblock {\em J. Stat. Phys.}, 150(6):1138--1176, 2013.

\bibitem{Lu:2014:TBE}
X.~Lu.
\newblock The {B}oltzmann equation for {B}ose-{E}instein particles: regularity
  and condensation.
\newblock {\em J. Stat. Phys.}, 156(3):493--545, 2014.

\bibitem{Lu:2018:LTS}
X.~Lu.
\newblock Long time strong convergence to {B}ose-{E}instein distribution for
  low temperature.
\newblock {\em Kinet. Relat. Models}, 11(4):715--734, 2018.

\bibitem{Lu:2016:LTC}
Xuguang Lu.
\newblock Long time convergence of the {B}ose-{E}instein condensation.
\newblock {\em J. Stat. Phys.}, 162(3):652--670, 2016.

\bibitem{lukkarinen2009not}
J.~Lukkarinen and H.~Spohn.
\newblock Not to normal order?notes on the kinetic limit for weakly interacting
  quantum fluids.
\newblock {\em Journal of Statistical Physics}, 134(5-6):1133--1172, 2009.

\bibitem{LukkarinenSpohn:WNS:2011}
J.~Lukkarinen and H.~Spohn.
\newblock Weakly nonlinear {S}chr\"odinger equation with random initial data.
\newblock {\em Invent. Math.}, 183(1):79--188, 2011.

\bibitem{l1997statistical}
V.~S. Lvov, Y.~Lvov, A.~C. Newell, and V.~Zakharov.
\newblock Statistical description of acoustic turbulence.
\newblock {\em Physical Review E}, 56(1):390, 1997.

\bibitem{lvov2010oceanic}
Y.~V. Lvov, K.~L. Polzin, E.~G. Tabak, and N.~Yokoyama.
\newblock Oceanic internal-wave field: theory of scale-invariant spectra.
\newblock {\em Journal of Physical Oceanography}, 40(12):2605--2623, 2010.

\bibitem{McGrady}
E.~D. McGrady and Robert~M. Ziff.
\newblock ``{S}hattering'' transition in fragmentation.
\newblock {\em Phys. Rev. Lett.}, 58(9):892--895, 1987.

\bibitem{melzak1957scalar}
Z.~A. Melzak.
\newblock A scalar transport equation.
\newblock {\em Transactions of the American Mathematical Society},
  85(2):547--560, 1957.

\bibitem{menon2006dynamical}
G.~Menon and R.~L. Pego.
\newblock Dynamical scaling in Smoluchowski's coagulation equations: uniform
  convergence.
\newblock {\em SIAM review}, 48(4):745--768, 2006.

\bibitem{micha2004turbulent}
R.~Micha and I.~I. Tkachev.
\newblock Turbulent thermalization.
\newblock {\em Physical Review D}, 70(4):043538, 2004.



\bibitem{Nazarenko:2011:WT}
S.~Nazarenko.
\newblock {\em Wave turbulence}, volume 825 of {\em Lecture Notes in Physics}.
\newblock Springer, Heidelberg, 2011.

\bibitem{newell2011wave}
A.~C. Newell and B.~Rumpf.
\newblock Wave turbulence.
\newblock {\em Annual review of fluid mechanics}, 43:59--78, 2011.

\bibitem{nguyen2017quantum}
T.~T. Nguyen and M.-B. Tran.
\newblock On the {K}inetic {E}quation in {Z}akharov's {W}ave {T}urbulence
  {T}heory for {C}apillary {W}aves.
\newblock {\em SIAM J. Math. Anal.}, 50(2):2020--2047, 2018.

\bibitem{ToanBinh}
T.~T. Nguyen and M.-B. Tran.
\newblock Uniform in time lower bound for solutions to a quantum Boltzmann
  equation of bosons.
\newblock {\em Archive for Rational Mechanics and Analysis}, 231(1):63--89,
  2019.

\bibitem{Nordheim:OTK:1928}
L.W. Nordheim.
\newblock On the kinetic methods in the new statistics and its applications in
  the electron theory of conductivity.
\newblock {\em Proc. Roy. Soc. London Ser. A}, 119:689--698, 1928.

\bibitem{norris1999smoluchowski}
J.~R. Norris.
\newblock Smoluchowski's coagulation equation: Uniqueness, nonuniqueness and a
  hydrodynamic limit for the stochastic coalescent.
\newblock {\em Annals of Applied Probability}, pages 78--109, 1999.

\bibitem{Peierls:1993:BRK}
R.~Peierls.
\newblock Zur kinetischen theorie der warmeleitung in kristallen.
\newblock {\em Annalen der Physik}, 395(8):1055--1101, 1929.

\bibitem{Peierls:1960:QTS}
R.~E. Peierls.
\newblock Quantum theory of solids.
\newblock In {\em Theoretical physics in the twentieth century ({P}auli
  memorial volume)}, pages 140--160. Interscience, New York, 1960.

\bibitem{pomeau2000thermodynamics}
Y.~Pomeau and S.~Rica.
\newblock Thermodynamics of a dilute bose-einstein gas with repulsive
  interactions.
\newblock {\em Journal of Physics A: Mathematical and General}, 33(4):691,
  2000.

\bibitem{PomeauBinh}
Y.~Pomeau and M.-B. Tran.
\newblock Statistical physics of non equilibrium quantum phenomena.
\newblock {\em Lecture Notes in Physics, Volume 967, Springer}, 2019.

\bibitem{pushkarev1996turbulence}
A.~N. Pushkarev and V.~E. Zakharov.
\newblock Turbulence of capillary waves.
\newblock {\em Physical review letters}, 76(18):3320, 1996.

\bibitem{pushkarev2000turbulence}
A.~N. Pushkarev and V.~E. Zakharov.
\newblock Turbulence of capillary waves: theory and numerical simulation.
\newblock {\em Physica D: Nonlinear Phenomena}, 135(1):98--116, 2000.

\bibitem{reichl2013transport}
L.~E. Reichl and E.~D. Gust.
\newblock Transport theory for a dilute bose-einstein condensate.
\newblock {\em Physical Review A}, 88(5):053603, 2013.

\bibitem{ReichlTran}
L.~E. Reichl and M.-B. Tran.
\newblock A kinetic equation for ultra-low temperature Bose--Einstein
  condensates.
\newblock {\em Journal of Physics A: Mathematical and Theoretical},
  52(6):063001, 2019.
  
  \bibitem{saha2015singular}
J. Saha, and J. Kumar.
\newblock The singular coagulation equation with multiple fragmentation.
\newblock {\em Zeitschrift f{\"u}r angewandte Mathematik und Physik.}, 66.3 (2015): 919-941.

\bibitem{Spohn:TPB:2006}
H.~Spohn.
\newblock The phonon {B}oltzmann equation, properties and link to weakly
  anharmonic lattice dynamics.
\newblock {\em J. Stat. Phys.}, 124(2-4):1041--1104, 2006.

\bibitem{stewart1989global}
I.~W. Stewart and E.~Meister.
\newblock A global existence theorem for the general coagulation--fragmentation
  equation with unbounded kernels.
\newblock {\em Mathematical Methods in the Applied Sciences}, 11(5):627--648,
  1989.

\bibitem{CraciunSmithBoldyrevBinh}
M.-B. Tran, G.~Craciun, L.~M. Smith, and S.~Boldyrev.
\newblock A reaction network approach to the theory of acoustic wave
  turbulence.
\newblock {\em Submitted}, 2018.




\bibitem{vigil1989stability}
R.~D. Vigil and R.~M. Ziff.
\newblock On the stability of coagulation-fragmentation population balances.
\newblock {\em Journal of Colloid and Interface Science}, 133(1):257--264,
  1989.

\bibitem{von1916drei}
M.~Von~Smoluchowski.
\newblock Drei vortrage uber diffusion. brownsche bewegung und koagulation von
  kolloidteilchen.
\newblock {\em Z. Phys.}, 17:557--585, 1916.

\bibitem{white1980global}
W.~H. White.
\newblock A global existence theorem for Smoluchowski's coagulation equations.
\newblock {\em Proceedings of the American Mathematical Society}, pages
  273--276, 1980.

\bibitem{PomeauBrachetMetensRica}
S.~M'etens Y.~Pomeau, M.A.~Brachet and S.~Rica.
\newblock Th\'eorie cin\'etique d'un gaz de bose dilu\'e avec condensat.
\newblock {\em C. R. Acad. Sci. Paris S'er. IIb M'ec. Phys. Astr.},
  327:791--798, 1999.

\bibitem{zakharov1965weak}
V.~E. Zakharov.
\newblock Weak turbulence in media with a decay spectrum.
\newblock {\em Journal of Applied Mechanics and Technical Physics},
  6(4):22--24, 1965.

\bibitem{zakharov1968stability}
V.~E. Zakharov.
\newblock Stability of periodic waves of finite amplitude on the surface of a
  deep fluid.
\newblock {\em Journal of Applied Mechanics and Technical Physics},
  9(2):190--194, 1968.

\bibitem{zakharov1999statistical}
V.~E. Zakharov.
\newblock Statistical theory of gravity and capillary waves on the surface of a
  finite-depth fluid.
\newblock {\em European Journal of Mechanics-B/Fluids}, 18(3):327--344, 1999.

\bibitem{zakharov1967weak}
V.~E. Zakharov and N.~N. Filonenko.
\newblock Weak turbulence of capillary waves.
\newblock {\em Journal of applied mechanics and technical physics},
  8(5):37--40, 1967.

\bibitem{zakharov2012kolmogorov}
V.~E. Zakharov, V.~S. L'vov, and G.~Falkovich.
\newblock {\em Kolmogorov spectra of turbulence I: Wave turbulence}.
\newblock Springer Science \& Business Media, 2012.

\bibitem{ZakharovNazarenko:DOT:2005}
V.~E. Zakharov and S.~V. Nazarenko.
\newblock Dynamics of the {B}ose-{E}instein condensation.
\newblock {\em Phys. D}, 201(3-4):203--211, 2005.

\end{thebibliography}
\end{document}